\def\C{\mathbb{C}}
\let\ds\displaystyle
\theoremstyle{plain}
\newtheorem{theorem}{Theorem}
\newtheorem{lemma}[theorem]{Lemma}
\newtheorem{proposition}{Proposition}
\newtheorem{assumption}{Assumption}
\newtheorem{remark}{Remark}
\DeclareMathOperator{\trace}{trace}
\newcommand{\brackets}[1]{\left( #1 \right)}
\newcommand{\LieBrackets}[1]{\left[ #1 \right]}
\DeclareMathOperator{\PI}{P_{1}}
\DeclareMathOperator{\PII}{P_{2}}
\DeclareMathOperator{\PIII}{P_{3}}
\DeclareMathOperator{\PIIIpr}{P_{3}^{\prime}}
\DeclareMathOperator{\PIV}{P_{4}}
\DeclareMathOperator{\PV}{P_{5}}
\DeclareMathOperator{\PVI}{P_{6}}
\newcommand{\Pkn}[1]{\text{P}_k^{ #1 }}
\newcommand{\PIIIprn}[1]{\text{P}_3^{\prime \, #1 }}
\newcommand{\PIIn}[1]{\text{P}_2^{ #1 }}
\newcommand{\PIn}[1]{\text{P}_1^{ #1 }}
\newcommand{\PPkn}[1]{\phantom{}_{ #1 }\text{P}_k}
\newcommand{\PPVn}[1]{\phantom{}_{ #1 }\text{P}_5}
\newcommand{\PPIVn}[1]{\phantom{}_{ #1 }\text{P}_4}
\newcommand{\PPIIIprn}[1]{\phantom{}_{ #1 }\text{P}_3^{\prime}}
\newcommand{\PPIIn}[1]{\phantom{}_{ #1 }\text{P}_2}
\newcommand{\PPPVIn}[1]{\phantom{}_{ #1 }\textbf{P}_6}
\newcommand{\PPPVn}[1]{\phantom{}_{ #1 }\textbf{P}_5}
\newcommand{\PPPIVn}[1]{\phantom{}_{ #1 }\textbf{P}_4}
\newcommand{\PPPIIIprn}[1]{\phantom{}_{ #1 }\textbf{P}_3^{\prime}}
\newcommand{\PPPIIn}[1]{\phantom{}_{ #1 }\textbf{P}_2}
\newcommand{\Painleve}{Painlev{\'e} }
\newcommand{\PPainleve}{Painlev{\'e}}
\newcounter{savesection}
\newcounter{apdxsection}
\renewcommand\appendix{\par
  \setcounter{savesection}{\value{section}}%
  \setcounter{section}{\value{apdxsection}}%
  \setcounter{subsection}{0}%
  \gdef\thesection{\@Alph\c@section}}
\newcommand\unappendix{\par
  \setcounter{apdxsection}{\value{section}}%
  \setcounter{section}{\value{savesection}}%
  \setcounter{subsection}{0}%
  \gdef\thesection{\@arabic\c@section}}
\title{On classification of non-abelian Painlev\'e~type~systems}
\date{}
\author{I.A. Bobrova\thanks{National Research University Higher School of Economics, Moscow, Russian Federation.},~ V.V. Sokolov\thanks{L.D.~Landau Institute for Theoretical Physics, Chernogolovka, Russian Federation.} }
\begin{document}
\maketitle

\begin{abstract}
We find all non-abelian generalizations of $\PI-\PVI$ \Painleve systems such that the corresponding autonomous system obtained by freezing the independent variable is integrable. All these systems have isomonodromic Lax representations. 
\medskip

\noindent{\small Keywords:  non-abelian ODEs, \Painleve equations, isomonodromic Lax pairs}
\end{abstract}

\section{Introduction}

The famous six \Painleve equations play a significant role in modern mathematical physics. The interest in their non-commutative extensions was motivated by the needs of modern quantum physics as well as by natural attempts of mathematicians to extend ``classical'' structures to the non-commutative case.

In the paper \cite{okamoto1980polynomial}, the \PPainleve \, equations were written as polynomial systems of the form
\begin{gather} \label{eq:fullansatz}
    \left\{
    \begin{array}{lcl}
         f(z) \, \displaystyle \frac{d u}{d z} 
         &=& P_1 (u, v) + z \, Q_1 (u, v),
         \\[3mm]
         f(z) \, \displaystyle \frac{d v}{d z}
         &=& P_2 (u, v) + z \, Q_2 (u, v)
         ,
    \end{array}
    \right.
\end{gather}
where $f(z)=z(z-1)$ for the $\PVI$ system, $f(z)=z$ in the $\PV$ and $\PIII$ cases and $f(z)=1$ for remaining types of \PPainleve \, systems. The right-hand sides of systems depend on several arbitrary parameters $\kappa_i$.

We consider systems of the form \eqref{eq:fullansatz},
where $P_i$, $Q_i$ are polynomials with {\it constant scalar} coefficients in non-commutative variables $u$ and $v$. For simplicity, we regard these variables as matrices of an arbitrary size $m$\footnote{Using the non-abelian formalism (see, for example, \cite{kontsevich1993formal}), all considerations can be trivially generalized to the non-abelian case. For this reason, in this paper we do not distinguish  ``matrix'' and ``non-abelian'' systems.}. We assume that if $m=1$ then the system \eqref{eq:fullansatz} coincides with one of the six \Painleve systems $\PI-\PVI$. Some examples of integrable non-abelian systems of the form \eqref{eq:fullansatz} are contained in \cite{ Kawakami_2015,Balandin_Sokolov_1998,Adler_Sokolov_2020_1,Bobrova_Sokolov_2022,Retakh_Rubtsov_2010,adler2020,Adler_Kolesnikov_2022}. Some of them were found using the existence of special isomonodromy representations or the  \PPainleve-Kovalevskaya test  while several of the systems have been derived from integrable PDEs and lattices by reductions.

Classification papers \cite{Balandin_Sokolov_1998,Adler_Sokolov_2020_1,
Bobrova_Sokolov_2022} devoted to systems of $\PI$, $\PII,$ and $\PIV$ were based on the matrix \PPainleve-Kovalevskaya test. Unfortunately, for the $\PIII$, $\PV,$ and $\PVI$ systems this approach turns out to be too laborious. Мoreover, in this approach it is essential that the variables $u$ and $v$ are matrices. In \cite{bobrova2022non} a different integrability criterion based on the existence of a non-abelian Okamoto integral was used. As a result, a collection of new non-abelian systems of $\PIII$, $\PV,$ and $\PVI$ type was obtained. However, some of known non-abelian systems do not have the Okamoto integral.

In this paper we propose an approach related to properties of 
the autonomous system
\begin{gather} \label{eq:noz}
    \left\{
    \begin{array}{lcl}
         \displaystyle \frac{d u}{d t} 
         &=& P_1 (u, v) + z \, Q_1 (u, v),
         \\[3mm]
         \displaystyle \frac{d v}{d t}
         &=& P_2 (u, v) + z \, Q_2 (u, v)
         ,
    \end{array}
    \right.
\end{gather}
obtained from system \eqref{eq:fullansatz} by freezing the independent variable. In system \eqref{eq:noz} 
we treat $z$ as a parameter. We call system \eqref{eq:noz} {\it the auxiliary autonomous system} for \eqref{eq:fullansatz}.

Our classification of integrable non-abelian \Painleve systems \eqref{eq:fullansatz} is based on the following 
\begin{assumption} 
\label{assumpt1}
The corresponding auxiliary system is supposed to be integrable. This means that{\rm:}
\begin{enumerate}
    \item   
    System \eqref{eq:noz} has a Lax representation 
    \begin{equation}\label{LMLax}
        L_t
        = [M, L]
    \end{equation} 
    with a spectral parameter{\rm.}
    \item   
    It has infinitely many polynomial non-abelian infinitesimal symmetries commuting with each other. Each of these symmetries
    has a Lax representation of the form \eqref{LMLax} with the same operator $L$ {\rm(}but with different operator $M${\rm)}.    
    \item  
    All systems of this hierarchy have the same set of first integrals defined by ${\rm trace}\,(L^i)$.
    \item 
    The Lax operators, first integrals, and symmetries coincide with the corresponding components of the scalar hierarchy when the size $m$ of matrices is equal to one.
  \end{enumerate}
\end{assumption}

It turns out that known examples of non-abelian \Painleve systems satisfy some of these conditions (for $\PI$ and $\PII$ see \cite{Rub}). The hierarchy of Assumption \ref{assumpt1} can be obtained from the trivial  hierarchy of the scalar auxiliary system by the non-abelianization procedure \cite{SW2,Bobrova_Sokolov_2022} described in Section \ref{nonabelianization}.

Using the simplest example of \PPainleve-2 systems, we show in Subsection \ref{sec:P2_case} how the ingredients of a non-abelian hierarchy can be constructed. It turns out
that non-abelianization of only two simplest first integrals and one infinitesimal symmetry allows one to obtain a finite list of non-abelian $\PII$ systems.
These systems coincide with ones from the paper \cite{Adler_Sokolov_2020_1}, where a different criterion of integrability for matrix \PPainleve-2 equations has been used.

In Section \ref{P6}, using the same approach, we find a complete list of \PPainleve-6 type autonomous systems. Due to the complexity of the calculations, we first classify homogeneous systems, which are the leading parts of autonomous systems of the $\PVI$ type. For each leading part, finding the remaining coefficients is not too difficult. The resulting list contains 35 different systems, including 19 systems from  \cite{bobrova2022classification,bobrova2022non}. 

One of the reasons for the appearance of a large number of non-abelian systems of the same type is the presence of transformations that preserve integrability of systems \eqref{eq:noz} or \eqref{eq:fullansatz}. The simplest example of such a transformation is the
matrix transposition 
\begin{align}\label{tau}
    &&
    \tau(u)
    &=u^T,
    &
    \tau(v)
    =v^T.
    &&
\end{align}

In Appendix \ref{sec:sysintlistP6}, we present an explicit form of only 8 new systems non-equivalent with respect to $\tau$. For the old 19 systems we use the notation and lists from our previous  papers \cite{bobrova2022classification,bobrova2022non}.

To reconstruct the  \Painleve system of the form \eqref{eq:fullansatz} corresponding to an autonomous $\PVI$ system  \eqref{eq:noz} described in Section \ref{P6}, we have to replace the time $t$ with $z$ and choose $f(z)=z (z-1)$ in \eqref{eq:fullansatz}. 

A group of  point transformations (see Subsection \ref{sec:trgroup_P6}) isomorphic to $S_3 \times \mathbb{Z}_2$ acts on the set of all non-autonomous systems of the \PPainleve-6 type. The Hamiltonian systems \cite{bobrova2022classification} form one orbit of this action, while the systems with the generalized Okamoto integral \cite{bobrova2022non} and systems listed in Appendix \ref{sec:sysintlistP6} have three and four orbits, respectively. Thus, there are 8 non-equivalent systems of the \PPainleve-6 type.

Following Assumption \ref{assumpt1}, in Section \ref{P5-P1}, we find  all autonomous non-abelian \Painleve systems of $\PI - \PV$ types. All known and several new systems are contained in our list. New systems are presented in Appendices \ref{sec:sysintlistP2} -- \ref{sec:sysintlistP5}. Non-autonomous \Painleve $\PI - \PV$ systems are restored using autonomous systems found in Section \ref{P5-P1}.

Section \ref{pensyst} is devoted to the study of non-autonomous \Painleve systems whose auxiliary autonomous systems were constructed in Sections \ref{P6} and \ref{P5-P1}.
 In subsection \ref{Sec5.1}, we describe admissible transformations in each of the cases $\PI - \PVI$. 
For non-abelian systems of $\PVI$ type  we present in subsection \ref{Sec5.2.1} an 
universal formula for isomonodromic Lax pairs of the form
\begin{gather} \label{eq:zerocurvcond}
    \mathbf{A}_z - \mathbf{B}_{\lambda}
    = [\mathbf{B}, \mathbf{A}],
\end{gather} 
where $\mathbf{A} (z, u,v, \lambda)$ and $\mathbf{B} (z, u,v, \lambda)$ are some $2\times 2$ matrices, rational in the spectral parameter $\lambda$ and in $z$, linear in the parameters $\kappa_i$, and polynomial in $u,v$. It is suitable for a wide family of systems whose autonomous systems satisfy the condition 3 of Assumption \ref{assumpt1}.  Besides $\kappa_i,$ the family contains $9$ parameters. To specify these parameters we apply the non-abelianization procedure for symmetries of the autonomous system. Similar universal formulas for the Lax representation were also found in Subsections \ref{Sec5.2.2} -- \ref{Sec5.2.6} for $\PV-\PI$ type systems.

It turns out that non-abelian systems of the \PPainleve-6 type generate non-abelian \Painleve systems of the $\PI-\PV$ type by the standard limiting procedure described in \cite{bobrova2022non}.
  In Subsection \ref{sec:deg} we present limiting transition schemes for new $\PVI$ systems obtained in this paper.

\section{Non-abelianization of hierarchy}
\label{nonabelianization}
In the scalar \Painleve case, any auxiliary system \eqref{eq:noz} has a polynomial Hamiltonian $H$ and a Lax representation \eqref{LMLax}. The hierarchy of commuting symmetries consists of the systems
\begin{equation}\label{scsymgen}
    \left\{
    \begin{array}{lcl}
    u_{t_{k}} &=& (P_1+z \, Q_1) \, H^k,
    \\[2mm]
    v_{t_{k}} &=& (P_2+z \, Q_2) \, H^k,
    \end{array}
    \right.
\end{equation}
which have the Lax representations 
\begin{equation}\label{LMLaxsym}
        L_{t_{k}}
        = [H^k M, L].
\end{equation} 
These symmetries are trivial: if we fix a value of the integral $H$ then the symmetries coincide (up to a scaling of $t$) with the system \eqref{eq:noz}. 

We assume that all the above components of the hierarchy
admit the non-abelianization. This means that 
\begin{itemize}
    \item we consider systems of the form \eqref{eq:noz} with matrix polynomials $P_i$ and $Q_i$ such that if the size $m$ of matrices $u$ and $v$ is equal to one, the systems coincide with the corresponding auxiliary autonomous system for a given scalar system of \Painleve type;
    
    \item there exist matrix polynomials $S_i$ such that if $m=1$ they coincide with $H^i$ and traces of $S_i$ are first integrals of the matrix system\footnote{We do not assume that $\trace S_1$ is its Hamiltonian.}; 
    
    \item the matrix system has a Lax representation \eqref{LMLax} that coincides with the representation for scalar system if $m=1$;
    
    \item for any $k$ the matrix system has a (matrix) symmetry that coincides with \eqref{scsymgen} if $m=1$. These symmetries commute with each other;
    
    \item any matrix symmetry possesses a Lax representation coinciding with the scalar representation \eqref{LMLaxsym} if $m=1$.
\end{itemize}

\subsection{\PPainleve-2 systems}
\label{sec:P2_case}
Let us illustrate our approach by the simplest example of the \PPainleve-2 systems. Starting with trivial hierarchy of the scalar system, we show that the non-abelianization of several its ingredients leads to a finite list of matrix \PPainleve-2 systems. It was proved in \cite{Adler_Sokolov_2020_1} that all these systems satisfy the \PPainleve-Kovalevskaya test.

In the commutative case, the system has the form
\begin{equation}
    \label{eq:sysP2}
    \left\{
    \begin{array}{lcl}
         \dfrac{d u}{d z}
         &=& - u^2 + v - \tfrac12 z,  
         \\[3mm]
         \dfrac{d v}{d z}
         &=& 2 u v + \kappa,
    \end{array}
    \right.
\end{equation}
where $\kappa$ is an arbitrary parameter.
This system is Hamiltonian with the Hamiltonian 
\begin{equation}
    \label{eq:hamP2}
    H
    = - u^2 v + \tfrac12 v^2 - \kappa u - \tfrac12 z v,
\end{equation} 
which is a first integral for the auxiliary autonomous system
\begin{equation}
    \label{eq:ausysP2}
    \left\{
    \begin{array}{lcl}
         \dfrac{d u}{d t}
         &=& - u^2 + v - \tfrac12 z,  
         \\[3mm]
         \dfrac{d v}{d t}
         &=& 2 u v + \kappa.
    \end{array}
    \right.
    \end{equation}
It is clear that the systems
\begin{equation}
    \label{eq:ausymP2}
    \left\{
    \begin{array}{lcl}\ds
         \frac{d u}{d t_{k}}
         &=& (- u^2 + v - \tfrac12 z)\, H^k,  
         \\[3mm]
         \ds \frac{d v}{d t_{k}}
         &=& (2 u v + \kappa) \, H^k
    \end{array}
    \right.
\end{equation}
are commuting infinitesimal symmetries for \eqref{eq:ausysP2}.
    
It is easy to verify that Lax representation \eqref{LMLax} with    
\begin{align}
    \label{eq:LM}
    \begin{aligned}
    L (\lambda)
    &= 
    \begin{pmatrix}
    -4 & 0 \\ 0 & 4
    \end{pmatrix}
    \lambda^2
    + 
    \begin{pmatrix}
    0 & 4 u
    \\ 
    4 u & 0
    \end{pmatrix}
    \lambda
    + 
    \begin{pmatrix}
    2 u^2 + z & - 2 u^2 + 2 v - z
    \\ 
    2 u^2 - 2 v + z & - 2 u^2 - z
    \end{pmatrix}
    + 
    \begin{pmatrix}
    0 & \kappa
    \\ 
    \kappa & 0
    \end{pmatrix}
    \lambda^{-1}
    ,
    \\[2mm]
    M (\lambda)
    &= 
    \begin{pmatrix}
    1 & 0 \\ 0 & -1
    \end{pmatrix}
    \lambda 
    + 
    \begin{pmatrix}
    - \beta u & - u
    \\ 
    - u & - \beta u
    \end{pmatrix}
    \end{aligned}
\end{align}
is a Lax representation for system  \eqref{eq:ausysP2}\footnote{In the scalar case the parameter $\beta$ in the $M$-operator is inessential.} and therefore the operators $L$ and $M_k=H^k\, M$ define a Lax pair for system \eqref{eq:ausymP2}.

The hierarchy described above looks completely trivial: all integrals are powers of only one integral $H$, the symmetries and corresponding $M$-operators are proportional if we fix a value of $H$. However, all these objects become non-trivial after the non-abelianization \cite{SW2,Bobrova_Sokolov_2022}.
 
For the non-abelianization of system \eqref{eq:ausysP2}, we replace each monomial in the right-hand side by a sum of non-commutative monomials such that under the commutative reduction the system coincides with \eqref{eq:ausysP2}.
The result can be written in the form
\begin{gather}
    \label{eq:sysP2_nc}
    \left\{
    \begin{array}{lcl}
          \dfrac{d u}{d t}
         &=& - u^2 + v - \tfrac12 z,  
         \\[2mm]
          \dfrac{d v}{d t}
         &=& u v + v u + \beta [v, u] + \kappa,
    \end{array}
    \right.
\end{gather}
where {$\kappa$}, $\beta \in \mathbb{C}$. 

\subsubsection{Non-abelianization of integrals}
\label{sec:P2int}

Let us generalize  to the non-abelian case several first integrals.
\begin{proposition}
\label{thm:trint2}
    For any $\beta$ the autonomous system \eqref{eq:sysP2_nc} possesses first integrals of the form $\trace S_1$ and $\trace S_2$ such that under the commutative reduction the non-commutative polynomials $S_1$ and $S_2$ coincide with $H$ and $H^2$, respectively.
\end{proposition}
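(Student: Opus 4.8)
The plan is to extract both integrals from a Lax representation of the non-abelian system \eqref{eq:sysP2_nc}. The first thing to pin down is that \eqref{LMLax} continues to hold for \eqref{eq:sysP2_nc} with the operators $L,M$ taken \emph{verbatim} from \eqref{eq:LM}, the constant $\beta$ in $M$ being identified with the one in the $v$-equation. This is plausible but needs checking, and the reason it works is that every entry of $L$ and $M$ is an ordering-unambiguous polynomial in $u,v$ (the only quadratic monomial occurring is $u^{2}$), so there is nothing to re-symmetrize; one then verifies $L_{t}=[M,L]$ coefficient by coefficient in $\lambda$, the $\lambda^{1}$ and $\lambda^{0}$ relations being the ones into which the equations of motion of \eqref{eq:sysP2_nc} feed and in which the $\beta[v,u]$ term of $\dot v$ is matched by the $\beta$-part of $[M,L]$. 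For $\PI$ and $\PII$ this verification is essentially that of \cite{Rub}.

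Granting the Lax equation, $\tfrac{d}{dt}\trace\!\bigl(L(\lambda)^{k}\bigr)=k\,\trace\!\bigl(L^{k-1}[M,L]\bigr)=0$, so every coefficient of the Laurent polynomial $\trace\!\bigl(L(\lambda)^{k}\bigr)$ is a first integral of \eqref{eq:sysP2_nc}; the $m\times m$ matrix polynomials $S_{i}$ are recovered as the $(1,1)$-blocks of suitable $\lambda$-coefficients of powers of $L$. I would take $S_{1}$ proportional to the $(1,1)$-block of the $\lambda^{0}$-coefficient of $L^{2}$ (this is the only non-constant coefficient of $\trace L^{2}$), which turns out to be a fixed symmetrization of $H$ and hence reduces to $H$ when $m=1$; and $S_{2}$ proportional, up to an additive constant, to the $(1,1)$-block of the $\lambda^{0}$-coefficient of $L^{4}$. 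One cannot use $L^{3}$ here: for a $2\times2$ traceless matrix $\trace L^{3}=0$, so any integral produced from $\trace L^{3}$ has vanishing commutative reduction. To see that $S_{2}$ has the right reduction, set $m=1$: then $\trace L=0$, so $L^{2}=-(\det L)I$ and $\trace L^{4}=\tfrac12(\trace L^{2})^{2}$; expanding \eqref{eq:LM} gives $\trace L^{2}=32\lambda^{4}-16z\lambda^{2}-16H+2\kappa^{2}\lambda^{-2}$ with $H$ as in \eqref{eq:hamP2}, whence $[\lambda^{0}]\trace L^{4}\big|_{m=1}=128H^{2}-32z\kappa^{2}$, and after the rescaling and the constant shift $S_{2}\big|_{m=1}=H^{2}$. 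Note that $L$ is $\beta$-free, so the $S_{i}$ are $\beta$-independent polynomials; $\beta$ enters only through $M$, which is exactly what makes $\trace L^{i}$ conserved.

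I would also record the elementary route, more in the spirit of this subsection: write the most general non-commutative polynomials in $u,v$ that reduce, under the commutative reduction, to $H$ and $H^{2}$ (respecting the weights $\deg u=1$, $\deg v=\deg z=2$, $\deg\kappa=3$) and require the derivation of \eqref{eq:sysP2_nc} to annihilate their traces; using cyclicity of the trace this is a finite linear system over $\C$. For $S_{1}$ it is solved immediately by $H$ itself (with any ordering of the monomial $u^{2}v$): writing $\dot v=(1-\beta)uv+(1+\beta)vu+\kappa$, only $(1-\beta)+(1+\beta)=2$ survives the cyclic reduction, and $\beta[v,u]$ contributes only traces of commutators, so the $\beta$-dependence disappears. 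The real work --- and the step I expect to be the main obstacle --- is $S_{2}$: the obvious candidate $\trace(S_{1}^{2})$ fails to be conserved once $m>1$, so one genuinely has to solve the linear system and, in particular, check that it stays consistent for \emph{every} $\beta$. The Lax-pair computation of the previous paragraph settles this (it even yields a single $\beta$-independent $S_{2}$), and in practice I would confirm the resulting $S_{1},S_{2}$ by a direct substitution or a computer-algebra check.
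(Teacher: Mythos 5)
Your argument is correct in substance but follows a genuinely different route from the paper. The paper's proof is a direct ansatz computation: it takes $S_1$ to be the obvious ordering of $H$, writes the most general non-commutative lift of $H^2$ modulo cyclic equivalence (four free parameters $a_1,\dots,a_4$), imposes $\frac{d}{dt}\trace(S_i)=0$, and solves the resulting linear system, finding a solution for every $\beta$ (with $a_1=0,a_2=1,a_3=0$ and $a_4$ free). You instead derive both integrals from the non-abelian Lax pair: the pair \eqref{eq:LM1} does hold for \eqref{eq:sysP2_nc} with arbitrary $\beta$ (the paper states this in Subsection \ref{sec:P2Lax}; the restriction \eqref{beta} only enters through the symmetry's operator $M_1$), so the $\lambda$-coefficients of $\trace L(\lambda)^k$ are conserved, and your scalar reductions $\trace L^2=32\lambda^4-16z\lambda^2-16H+2\kappa^2\lambda^{-2}$ and $[\lambda^0]\trace L^4=128H^2-32z\kappa^2$ are correct, so the reductions of the resulting $S_1,S_2$ are $H$ and (after the constant shift) $H^2$. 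Your route is cleaner and explains \emph{why} the integrals exist for all $\beta$ ($L$ is $\beta$-free), whereas the paper's route produces the explicit parametrized form of $S_2$ that is then fed into the symmetry computation of Subsection \ref{Sec202}; logically the two are interchangeable for the proposition as stated.

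One step needs repair: you define $S_i$ as the $(1,1)$-block of a $\lambda$-coefficient of $L^{2i}$, but what the Lax equation conserves is the full $2m\times 2m$ trace, i.e.\ $\trace_m\bigl((L^k)_{11}\bigr)+\trace_m\bigl((L^k)_{22}\bigr)$. For $k=2$ the two diagonal blocks have equal traces by cyclicity, so your definition is harmless there; but for $k=4$ the difference $\trace\bigl((L^4)_{11}\bigr)-\trace\bigl((L^4)_{22}\bigr)$ contains terms like $\trace\bigl(A^2[B,C]\bigr)$ (with $A,B,C$ the non-commuting block entries), which are not a priori conserved. Take $S_2$ to be a polynomial whose $m\times m$ trace equals the full $\lambda^0$-coefficient of $\trace L^4$ (i.e.\ use the sum of the diagonal blocks), and the argument goes through; the commutative reduction you computed is unaffected since it already uses the full trace.
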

\begin{proof}
     The general ansatz\footnote{Note that such polynomials are defined up to commutators.} for the polynomial $S_1$ can be written as
    \begin{align}
        S_1
        &= - u^2 v + \tfrac12 v^2 - \kappa u - \tfrac12 z v.
    \end{align}
    One can verify that the condition
    \begin{align}
    \label{eq:intcond_nc}
        \frac{d}{dt}\trace \brackets{ S_i}
        = 0
    \end{align}
    with $i = 1$ holds. 
    
    Using the explicit formula 
    \begin{align}
        H^2
        = u^4 v^2
        - u^2 v^3  
        + \tfrac14 v^4
        + 2 \kappa u^3 v
        - \kappa u v^2
        + \kappa^2 u^2
        + \, z \brackets{
        u^2 v^2 - \tfrac12 v^3 + \kappa u v
        }
        + \tfrac14 z^2 v^2,
    \end{align}
    we construct a general ansatz for the non-abelian polynomial $S_2$. Since  the monomials
    \begin{align}
        &&
        \{
        v u^3 v u, \,\,
        v u^2 v u^2, \,\,
        v u^4 v
        \},
        &&
        \{
        v u v u v, \,\,
        v u^2 v^2
        \},
        &&
        \{
        v u v u, \,\,
        v u^2 v
        \},
         &&
        \{
        v u v 
        \}
        &&
    \end{align}
    define complete lists of homogeneous polynomials of degrees $(4, 2),$ $(2,3),$ $(2,2)$ and $(1,2)$ in $u$ and $v$ up to commutators, we take
    \begin{align}
        S_2
        = a_1 v u^3 v u 
        + a_2 v u^2 v u^2 
        + (1 - a_1 - a_2) v u^4 v
        + a_3 v u v u v 
        + (- 1 - a_3) v u^2 v^2 
        + \tfrac14 v^4
        \\ 
        + \, 2 \kappa v u^3
        - \kappa v u v 
        + \kappa u^2
        + z \brackets{
        a_4 v u v u
        + (1 - a_4) v u^2 v
        - \tfrac12 v^3 
        + \kappa v u
        }
        + \tfrac14 z^2 v^2.
    \end{align}
     Condition \eqref{eq:intcond_nc} with $i = 2$ leads to a system of linear algebraic equations for $a_j,\, j=1,\dots,4$, whose coefficients depend on $\beta$. Solving this system, we find that for any $\beta$
    \begin{align}
        &&
        &&
        a_1
        &= 0,
        &
        a_2
        &= 1,
        &
        a_3
        &= 0
        &&
        &&
    \end{align}
    and $a_4$ remains to be arbitrary. 
    Therefore, the autonomous system has the first integral $\trace S_2$, where
    \begin{align}
        S_2
        = v u^2 v u^2 
        - v u^2 v^2 
        + \tfrac14 v^4
        + \, 2 \kappa v u^3
        - \kappa v u v 
        + \kappa u^2
        + z \brackets{
        v u^2 v
        - \tfrac12 v^3 
        + \kappa v u
        }
        \\
        + \, a_4 z (v u^2 v - v u v u) 
        + \tfrac14 z^2 v^2.
    \end{align}
\end{proof}
In the case of the non-commutative \PPainleve-2 system, the general ansatz for the right-hand side depends only on one parameter $\beta$ and our first test, based on the non-abelianization of the first integrals, does not determine $\beta$. For comparison, in the case of the \PPainleve-6, it reduces the number of unknown coefficients from 18 to 9.

\subsubsection{Non-abelianization of symmetries}
\label{Sec202}

In contrast to the above computation, the non-abelianization of the simplest symmetry \eqref{eq:ausymP2} with $k = 1$ gives rise to a condition for $\beta$. The general ansatz for the non-abelian symmetry  has the form
\begin{align}
    &
    \begin{aligned}
        u_{t_1}
        = a_1 u^4 v 
        + a_2 u^3 v u 
        + a_3 u^2 v u^2 
        + a_4 u v u^3 
        + \brackets{
        1 - \sum a_i
        } v u^4
        + c_1 u^2 v^2
        + c_2 u v u v
        + c_3 u v^2 u
        \\
        + \, c_4 v u^2 v
        + c_5 v u v u 
        + \brackets{
        - \tfrac32 - \sum c_i
        } v^2 u^2
        + \kappa u^3 
        + \tfrac12 v^3 
        + e_1 u v
        + (- \kappa - e_1) v u
        \\
        + \, z \brackets{
        g_1 u^2 v 
        + g_2 u v u
        + (1 - g_1 - g_2) v u^2
        - \tfrac34 v^2 + \tfrac12 \kappa u
        }
        + \tfrac14 z^2 v,
    \end{aligned}
    \\
    &
    \begin{aligned}
        v_{t_1}
        = b_1 \, u^3 v^2 
        + b_2 \, u^2 v u v
        + b_3 \, u^2 v^2 u
        + b_4 \, u v u^2 v
        + b_5 \, u v u v u
        + b_6 \, u v^2 u^2
        + b_7 \, v u^3 v
        + b_8 \, v u^2 v u
        \\
        + \, b_9 v u v u^2
        + \brackets{
        - 2 - \sum b_i
        } v^2 u^3
        + d_1 u v^3 
        + d_2 v u v^2
        + d_3 v^2 u v
        + \brackets{
        1 - \sum d_i
        } v^3 u
        \\
        + \, f_1 u^2 v
        + f_2 u v u
        + (- 3 \kappa - f_1 - f_2) v u^2 
        + \tfrac12 \kappa v^2 
        - \kappa^2 u
        \\
        + \, z \brackets{
        h_1 u v^2
        + h_2 v u v
        + (- 1 - h_1 - h_2) v^2 u 
        - \tfrac12 \kappa v
        },
    \end{aligned}
\end{align}
where $a_i$, $b_i$, $c_i$, $d_i$, $e_1$, $f_i$, $g_i$, $h_i \in \mathbb{C}$.
According to the item 3 of Assumption \ref{assumpt1}, the symmetry should have the first integrals $\trace S_1$ and $\trace S_2$ found in Proposition \ref{thm:trint2}. That leads to the following linear relations between coefficients of the symmetry
\begin{align}
    &\begin{aligned}
        b_1
        &= - a_1,
        &&&
        b_2 
        &= - a_1 - a_2,
        &&&
        b_3
        &= 0,
        &&&
        b_4 
        &= - a_1 - a_2 - a_3,
        &&&
        b_5
        &= 0,
    \end{aligned}
        \\[1mm]
    &\begin{aligned}
        b_6 
        &= 0,
        &&&
        b_7
        &= 1 - a_2 - a_3 - a_4,
        &&&
        b_8 
        &= - 1 + a_1 + a_2,
        &&&
        b_9
        &= - 1 + a_1 + a_2 + a_3,
    \end{aligned}
        \\[1mm]
    &\begin{aligned}
        c_4 
        &= - \tfrac12,
        &&&
        c_5
        &= - c_2,
        &&&
        d_1
        &= - c_1,
        &&&
        d_2 
        &= \tfrac12 + c_1 - c_2,
        &&&
        d_3
        &= - \tfrac12 - c_1 + c_2 - c_3,
    \end{aligned}
        \\[1mm]
    &\begin{aligned}
        f_2
        &= - \kappa,
        &&&
        h_1
        &= - g_1,
        &&&
        h_2 
        &= - g_2.
    \end{aligned}
\end{align}
The remaining coefficients of the symmetry can be easily expressed via $\beta$ from the compatibility conditions $u_{t \, \tau} - u_{\tau \, t} =0$ and $v_{t \, \tau} - v_{\tau \, t} =0 $  as follows
\begin{align}
        &\begin{aligned}
            a_1
            &= \tfrac{1}{24} \, (\beta - 2) \, (\beta^2 - 1) \beta,
            &&&
            a_2
            &= - \tfrac{1}{6} \, (\beta^2 - 4) \, (\beta - 1) \, \beta,
            &&&
            a_3
            &= \tfrac{1}{4} \, (\beta^2 - 4) \, (\beta^2 - 1),
        \end{aligned}
        \\[2mm]
        &\begin{aligned}
            a_4
            &= - \tfrac{1}{6} \, (\beta^2 - 4) \, \beta \, (\beta + 1),
            &&&
            c_1
            &= - \tfrac{1}{12} \, (\beta - 2) \, (\beta - 1) \, (\beta + 3),
            &&&
            c_2
            &= \tfrac{1}{6} \, (\beta^2 - 4) \, \beta,
        \end{aligned}
        \\[2mm]
        &\begin{aligned}
            c_3 
            &= 0,
            &&&
            e_1
            &= - \tfrac12 \kappa,
            &&&
            f_1
            &= \kappa \, (\beta - 1),
            &&&
            g_2
            &= \tfrac{1}{2} \, (2 - \beta - 4 g_1).
        \end{aligned}
\end{align}
Substituting these formulas into the compatibility conditions, we arrive at
\begin{align}
        &&
        \beta \, (\beta^2 - 1) \, (\beta^2 - 4)
        &= 0,
        &&&
        \beta \,\Big( (\beta - 2) \, (\beta - 1) 
        - 12 \,   g_1\Big)
        &= 0.
        &&
\end{align}
Thus, we proved 
\begin{proposition}
The symmetry \eqref{eq:ausymP2} with {$k = 1$} admits a non-abelianization iff   
\begin{equation}\label{beta}
   \beta = 0, \, \pm 1, \, \pm 2. 
\end{equation}
\end{proposition}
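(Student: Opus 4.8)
The plan is to prove both implications by a single explicit computation with an ansatz, following the template already used for the first integrals in Proposition~\ref{thm:trint2}. One writes the right-hand side of the putative non-abelian symmetry as the most general pair of noncommutative polynomials $(u_{t_1},v_{t_1})$ in $u,v$ (with constant scalar coefficients, polynomial in $z$) whose commutative reduction equals the scalar symmetry \eqref{eq:ausymP2} with $k=1$; this is exactly the ansatz displayed above, with free parameters $a_i,b_i,c_i,d_i,e_1,f_i,g_i,h_i$. The bidegrees and the $z$-grading are fixed by the scalar limit, so completeness of the ansatz reduces to listing, in each bidegree, a basis of noncommutative monomials in $u$ and $v$, which is finite and small.

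First I would impose item~3 of Assumption~\ref{assumpt1}: the symmetry must annihilate $\trace S_1$ and $\trace S_2$, with $S_1,S_2$ as in Proposition~\ref{thm:trint2}. Computing $\frac{d}{dt_1}\trace S_i$ along the ansatz and using cyclicity of the trace, each condition becomes a homogeneous linear system in the unknown coefficients; solving it yields precisely the linear relations listed after the ansatz (the $b_i$ in terms of the $a_i$, $c_4=-\tfrac12$, $c_5=-c_2$, the $d_i$ in terms of $c_1,c_2,c_3$, $f_2=-\kappa$, $h_1=-g_1$, $h_2=-g_2$). Next I would impose that $\partial_{t_1}$ commute with the base flow $\partial_t$ of \eqref{eq:sysP2_nc}, i.e.\ $u_{t\,t_1}-u_{t_1\,t}=0$ and $v_{t\,t_1}-v_{t_1\,t}=0$ as genuine (non-cyclic) noncommutative polynomial identities. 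Expanding these by means of \eqref{eq:sysP2_nc} and equating the coefficient of every noncommutative monomial to zero gives another linear system, now with coefficients depending polynomially on $\beta$; solving it expresses $a_1,\dots,a_4,c_1,c_2,c_3,e_1,f_1,g_2$ as the rational-in-$\beta$ formulas displayed, with $g_1$ free, and leaves exactly the two residual scalar equations $\beta(\beta^2-1)(\beta^2-4)=0$ and $\beta\big((\beta-2)(\beta-1)-12\,g_1\big)=0$.

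From here both directions follow. For necessity, the first residual equation does not involve $g_1$, so any admissible $\beta$ must satisfy $\beta(\beta^2-1)(\beta^2-4)=0$, i.e.\ $\beta\in\{0,\pm 1,\pm 2\}$. For sufficiency, given such a $\beta$ one solves the second equation for $g_1$ (take $g_1$ arbitrary if $\beta=0$, and $g_1=\tfrac{1}{12}(\beta-2)(\beta-1)$ otherwise); then all coefficients of the ansatz are determined, up to the inessential freedom in $g_1$ and, as always, up to commutators, producing an explicit commuting non-abelian symmetry, whose Lax representation is obtained by non-abelianizing the operator $M_1=H\,M$ of \eqref{eq:LM} (or checked directly). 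The main obstacle is purely computational: organizing the monomial bookkeeping so that the two linear systems are set up correctly — in particular keeping the trace (cyclic) conditions and the compatibility (non-cyclic) conditions separate, and verifying that the chosen monomial bases are complete in each bidegree so that no admissible term is missed. Once that is done, both the reduction of the coefficient count and the emergence of the factorized equation in $\beta$ are mechanical.
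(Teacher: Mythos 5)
Your proposal is correct and follows essentially the same route as the paper: the same general ansatz, the same two-stage reduction (first the linear relations from requiring $\trace S_1$ and $\trace S_2$ to be conserved, then the expression of the remaining coefficients via $\beta$ from the compatibility conditions), and the same residual equations $\beta(\beta^2-1)(\beta^2-4)=0$ and $\beta\big((\beta-2)(\beta-1)-12g_1\big)=0$ yielding $\beta\in\{0,\pm1,\pm2\}$. Your explicit treatment of sufficiency (solving for $g_1$) is a sensible completion of what the paper leaves implicit.
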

\begin{remark}
    The class of systems of the form \eqref{eq:sysP2_nc} is invariant under the matrix transposition $\tau(u)=u^T,\quad \tau(v)=v^T$. Under this transposition the parameter $\beta$ changes as follows
    \begin{equation}
        \tau(\beta)
        = - \beta.
    \end{equation}
    Therefore, there exist three non-equivalent systems with $\beta = - 2, - 1, 0$. The Hamiltonian non-commutative system {\rm(}see {\rm\cite{Kawakami_2015,bobrova2022classification})} corresponds to $\beta = 0$.
\end{remark}

The same set \eqref{beta} of admissible values of the parameter $\beta$ was obtained in \cite{Adler_Sokolov_2020_1}, where matrix \PPainleve-2 systems satisfying the \PPainleve-Kovalevskaya test were investigated. Notice that in the approach related to the test it is crucial that $u$ and $v$ are matrices (of an arbitrary size), while the above considerations can be generalized to the non-abelian case without any changes.

\subsubsection{Non-abelianization of Lax representations}
\label{sec:P2Lax}

The non-abelianization of the Lax representation \eqref{eq:LM} gives rise to the Lax pair 
\begin{align}
    \label{eq:LM1}
    L (\lambda)
    &= 
    \begin{pmatrix}
    -4 & 0 \\ 0 & 4
    \end{pmatrix}
    \lambda^2
    + 
    \begin{pmatrix}
    0 & 4 u
    \\ 
    4 u & 0
    \end{pmatrix}
    \lambda
    + 
    \begin{pmatrix}
    2 u^2 + z & - 2 u^2 + 2 v - z
    \\ 
    2 u^2 - 2 v + z & - 2 u^2 - z
    \end{pmatrix}
    + 
    \begin{pmatrix}
    0 & \kappa
    \\ 
    \kappa & 0
    \end{pmatrix}
    \lambda^{-1}
    ,
    \\[2mm]
    M (\lambda)
    &= 
    \begin{pmatrix}
    1 & 0 \\ 0 & -1
    \end{pmatrix}
    \lambda 
    + 
    \begin{pmatrix}
    - \beta \, u & - u
    \\ 
    - u & - \beta \, u
    \end{pmatrix}.
\end{align}
for auxiliary autonomous system \eqref{eq:sysP2_nc}. It turns out that the non-abelian symmetry described above has the Lax representation with the same $L$-operator and
\begin{align}
    \label{eq:ncLM1}
    M_1 (\lambda)
    = 
    \begin{pmatrix}
    [u, v] & 0 \\ 0 & [u, v]
    \end{pmatrix}
    \lambda^2
    + \tfrac12
    \begin{pmatrix}
    \Omega_1 
    & \LieBrackets{u, [u, v]} 
    \\ 
    - \LieBrackets{u, [u, v]} &
    - \Omega_1 
    \end{pmatrix}
    \lambda 
    + \tfrac12
    \begin{pmatrix}
    \Omega_2 + 2 \Omega_3 & \Omega_2
    \\
    \Omega_2 & \Omega_2 + 2 \Omega_3
    \end{pmatrix},
\end{align}
where
\begin{align}
    \Omega_1 
    &= - u^2 v - v u^2 + v^2 - 2 \kappa u - z v
    ,
    \\[2mm]
    \Omega_2
    &= u^2 v u + u v u^2
    - u v^2 + v u v - v^2 u
    + 2 \kappa u^2 + \tfrac12 z (u v + v u),
    \\[2mm]
    \Omega_3
    &= 
    - a_1 u^3 v 
    - (a_1 + a_2) u^2 v u 
    - (a_1 + a_2 + a_3) u v u^2 
    + \brackets{1 - \sum a_i} v u^3
    - c_1 u v^2 
    + f_1 u^2 
    \\
    &\quad\,
    + \brackets{- \tfrac12 - c_2} v u v 
    + (- c_1 - c_3) v^2 u 
    + \brackets{- e_1 - \tfrac12 \kappa} v
    + z \brackets{
    - g_1 u v + \brackets{\tfrac12 - g_1 - g_2} v u 
    }.
\end{align}
The coefficients of $\Omega_3$ are expressed through $\beta$ in Subsection \ref{Sec202}.
In the commutative case, the operator $M_1$ coincides with $H M.$ The relation 
$\left[ \partial_t - M, \partial_{t_{1}} - M_1\right] = 0$ is equivalent to  condition \eqref{beta}.

\section{The autonomous systems of \PPainleve-6 type}\label{P6}

In the \PPainleve-6 case, we follow the line of the previous section. However, the computations become extremely laborious. In particular, the ansatz for non-abelian symmetry contains about 1000 undetermined coefficients. 

The following trick turns out to be very helpful: we succeeded to find the leading coefficients of the non-abelian system investigating separately the homogeneous system generated by the leading parts.  

In the scalar case, the auxiliary autonomous system has the form
\begin{gather} \label{eq:scalP6sys}
\hspace{-6mm}
   \left\{
   \begin{array}{lcl}
       \, \dfrac{d u}{d t} 
        &=& 2 u^3 v 
        - 2 u^2 v 
        - \kappa_1 u^2 
        + \kappa_2 u
        + z \brackets{
        - 2 u^2 v 
        + 2 u v 
        + \kappa_4 u 
        + (\kappa_1 - \kappa_2 - \kappa_4)
        },
        \\[3mm]
       \, \dfrac{d v}{d t}
        &=& - 3 u^2 v^2 + 2 u v^2 + 2 \kappa_1 u v - \kappa_2 v + \kappa_3
        + z \brackets{
     2 u v^2 - v^2 - \kappa_4 v
     },
\end{array}
\right.
\hspace{-8mm}
\end{gather}
where  $z \in \mathbb{C}$ and $\kappa_i$ are arbitrary parameters. The Hamiltonian $H$ for \eqref{eq:scalP6sys} is given by 
\begin{gather}\label{scalHam}
   H 
   = H_1 
    + z \, H_2 
    ,
\end{gather}
with
\begin{align}
    H_1
    &= 
    u^3 v^2
    - u^2 v^2
    - \kappa_1 u^2 v
    + \kappa_2 u v
    - \kappa_3 u,
    &
    H_2
    &= 
    - u^2 v^2
    + u v^2
    + \kappa_4 u v
    + (\kappa_1 - \kappa_2 - \kappa_4) v.
\end{align}

The non-abelianization of \eqref{eq:scalP6sys} is a system of the form \eqref{eq:noz}, where $P_i$ and $Q_i$ are non-commutative polynomials 
given~by
\begin{align} \label{eq:P1form}
\begin{aligned}
\begin{aligned}
    &\begin{aligned}
    P_1 (u, v)
    = a_1 u^3 v 
    + a_2 u^2 v u 
    + a_3 u v u^2 
    + (2 - a_1 - a_2 - a_3) v u^3
    + c_1 u^2 v 
    \\
    + \, (- 2 - c_1 - c_2) u v u 
    + c_2 v u^2
    - \kappa_1 u^2 
    + \kappa_2 u,
    \end{aligned}
    \\[1mm]
    &\begin{aligned}
    Q_1 (u, v)
    = f_1 u^2 v 
    + (- 2 - f_1 - f_2) u v u 
    + f_2 v u^2
    + h_1 u v 
    + (2 - h_1) v u 
    + \kappa_4 u 
    \\
    + \, (\kappa_1 - \kappa_2 - \kappa_4),
    \end{aligned}
    \\[2mm]
    &\begin{aligned}
    P_2 (u, v)
    = b_1 u^2 v^2 
    + b_2 u v u v 
    + b_3 u v^2 u 
    + b_4 v u^2 v
    + b_5 v u v u
    + \brackets{- 3 - \sum b_i} v^2 u^2
    \\
    + \, d_1 u v^2
    + (2 - d_1 - d_2) v u v 
    + d_2 v^2 u 
    + e_1 u v 
    + (2 \kappa_1 - e_1) v u
    - \kappa_2 v 
    + \kappa_3,
    \end{aligned}
    \\[1mm]
    &\begin{aligned}
    Q_2 (u, v)
    = g_1 u v^2
    + (2 - g_1 - g_2) v u v 
    + g_2 v^2 u 
    - v^2 - \kappa_4 v
    \end{aligned}
\end{aligned}
\end{aligned}
\end{align}
where $a_i$, $b_i$, $c_i$, $d_i$, $f_i$, $g_i$, $h_1$, and $e_1$ belong to $\mathbb{C}$.

It is clear that if system \eqref{eq:noz}, \eqref{eq:P1form} possesses integrals, symmetries and Lax representation, then their leading parts define the corresponding objects for its leading homogeneous part. For this reason, we first perform a non-abelianization of the homogeneous part of the scalar autonomous system of \PPainleve-6 type.
 
\subsection{Homogeneous non-abelian systems of \PPainleve-6 type}
The homogeneous principal part\footnote{Here and below we denote $t$-derivatives by $'$.},
\begin{gather} 
\label{eq:scalP6sysAux}
    \left\{
    \begin{array}{lcl}
         u' 
         &=& 2 u^3 v,
         \\[2mm]
         v'
         &=& - 3 u^2 v^2
         ,
    \end{array}
    \right.
\end{gather}
of \eqref{eq:scalP6sys}  
possesses the first integral
\begin{equation}
    I = u^3 v^2.
\end{equation}
For any $N$ the following system
\begin{equation}\label{scsym9}
\left\{
\begin{array}{lcl}
u_{\tau} &=& u' \, I^N, \\[2mm]
v_{\tau} &=& v' \, I^N
\end{array}
\right.
\end{equation}
is an infinitesimal symmetry of \eqref{eq:scalP6sysAux}. In addition to \eqref{scsym9}, the system \eqref{eq:scalP6sysAux} has symmetries
\begin{equation}\label{scsym6}
\left\{
\begin{array}{lcl}
u_{\tau} &=& u \,I^N, \\[2mm]
v_{\tau} &=& -2 v\,I^N,
\end{array}
\right.
\end{equation}
which are related to the homogeneity of the system. The degrees of the simplest symmetries are equal to 6 and 9.
We are going to find all non-abelianizations of system \eqref{eq:scalP6sysAux} that have non-abelian symmetries of degree 6 and 9. 

The non-abelianization of \eqref{eq:scalP6sysAux} is given by  
\begin{align} \label{eq:matP6syshomGen}
    \left\{
    \begin{array}{lcr}
        u' 
        &=& a_1\, u^3 v + a_2\, u^2 v u+a_3\, u v u^2 
        + (2 - a_1 - a_2 - a_3)\, v u^3 
        ,
        \\[2mm]
        v'
        &=& b_1 \, u^2 v^2 
        + b_2 \, u v u v 
        + b_3 \, u v^2 u
        + b_4 \, v u^2 v 
        + b_5 \, v u v u
        \hspace{7mm}
        \\[1mm]
        &&
        + \, (-3   - b_1 - b_2  - b_3  - b_4 - b_5) \, v^2 u^2
         .
    \end{array}
    \right.
\end{align}
As in the \PPainleve-2 case (see Subsection \ref{sec:P2int}), we are going to generalize to the non-abelian case several first integrals. It allows us to reduce the number of unknown coefficients in \eqref{eq:matP6syshomGen}.
\begin{lemma}\label{lemma1} If the integrals $I$ and $I^2$ admit a non-abelianization then the system \eqref{eq:matP6syshomGen} has the form 
\begin{align} \label{eq:homP6_simplified}
    \left\{
    \begin{array}{lcr}
        u' 
        &=& a_1\, u^3 v + a_2\, u^2 v u+a_3\, u v u^2 
        + (2 - a_1 - a_2 - a_3)\, v u^3 
        ,
        \\[2mm]
        v'
        &=& - a_1 \, u^2 v^2 
        + (- a_1 - a_2) \, u v u v 
        + (1 - a_2 - a_3) \, v u^2 v
        \hspace{5mm}
        \\[1mm]
        &&
        + \, (- 2 + a_1 + a_2) \, v u v u
        + (- 2 + a_1 + a_2 + a_3) \, v^2 u^2
         .
    \end{array}
    \right.
\end{align}
\end{lemma}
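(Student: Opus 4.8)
The plan is to impose the conditions that $\trace I$ and $\trace I^2$ (with $I=u^3v^2$ in the scalar limit) be first integrals of the candidate homogeneous system \eqref{eq:matP6syshomGen}, and to show that this already forces the five free parameters $b_1,\dots,b_5$ to take the values displayed in \eqref{eq:homP6_simplified}, leaving only $a_1,a_2,a_3$ free. The starting point is the observation, already used in Subsection~\ref{sec:P2int}, that a non-abelian lift $S$ of a scalar polynomial is determined only up to commutators, so one writes the most general non-commutative polynomial of the appropriate bidegree whose commutative reduction is $I$, respectively $I^2$, with undetermined scalar coefficients, and then demands $\frac{d}{dt}\trace S=0$ modulo the equations of motion \eqref{eq:matP6syshomGen}.

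First I would treat $S_1$: up to commutators the only lift of $u^3v^2$ of bidegree $(3,2)$ is $S_1=u^3v^2$ itself (all reorderings differ by commutators), so $\trace S_1=\trace(u^3v^2)$, and I would compute $\frac{d}{dt}\trace(u^3v^2)$ using the Leibniz rule together with $u'$ and $v'$ from \eqref{eq:matP6syshomGen}, exploiting cyclicity of the trace to collapse the resulting expression to a linear combination of traces of degree-$(5,3)$ monomials. Setting that combination to zero yields a system of linear equations in $a_i,b_j$; I expect it to be exactly the relations $b_1=-a_1$, $b_2=-a_1-a_2$, $b_3=0$ (or whatever the commutator-reduced basis records), $b_4=1-a_2-a_3$, $b_5=-2+a_1+a_2$, which is precisely what is needed to rewrite \eqref{eq:matP6syshomGen} as \eqref{eq:homP6_simplified}. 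One must be a little careful that the trace-zero conditions coming from $S_1$ alone might underdetermine things; that is why the hypothesis also includes $I^2$.

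Next I would handle $S_2$: here the commutative reduction of $I^2=u^6v^4$ has a genuinely non-trivial space of non-abelian lifts modulo commutators, so one writes the general ansatz as in the proof of Proposition~\ref{thm:trint2}, with a handful of extra undetermined coefficients, imposes $\frac{d}{dt}\trace S_2=0$ on the solutions of the already-constrained system, and checks that this is solvable (for some choice of the internal coefficients of $S_2$) \emph{precisely} when the $b_j$ have the claimed values — i.e. the $I^2$-condition does not shrink the $(a_1,a_2,a_3)$-family further but does pin down any residual freedom in the $b_j$ left by the $I$-condition. Concretely one expands everything into a common monomial basis of bidegree $(8,4)$, reduces by cyclicity, and solves the resulting linear system over $\mathbb{Q}(a_1,a_2,a_3)$.

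The main obstacle is purely computational bookkeeping: the number of non-commutative monomials of bidegree $(5,3)$ and especially $(8,4)$ is large, and reducing them modulo cyclic permutations of the trace (and tracking which linear combinations of the equations of motion contribute) is where errors creep in. There is no conceptual difficulty — the statement is ``the kernel of an explicit linear map is the stated subspace'' — but organizing the monomial bases, the cyclic-equivalence relations, and the substitution of \eqref{eq:matP6syshomGen} so that the final linear system is both correct and small enough to solve by hand (or to present) is the real work. A secondary point worth verifying is that the lift $S_1=u^3v^2$ is indeed (up to commutators and scalar multiples) the unique bidegree-$(3,2)$ candidate, so that no hidden freedom in $S_1$ weakens the conclusion; this is a short count but should be stated.
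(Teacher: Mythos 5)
Your overall strategy is the right one and is the same as the paper's (the paper proves Lemma~\ref{lemma1} exactly as it proves Proposition~\ref{thm:trint2}: write the general ansatz for the lifts of $I$ and $I^2$ modulo commutators, impose $\tfrac{d}{dt}\trace S_i=0$, and solve the resulting linear system). However, there is a concrete error in your treatment of $S_1$ that would make the computation fail. You assert that, up to commutators, $u^3v^2$ is the \emph{unique} lift of $I$ in bidegree $(3,2)$. This is false: cyclic words of length $5$ with three $u$'s and two $v$'s fall into \emph{two} classes, represented by $u^3v^2$ and $u^2vuv$ (the two $v$'s adjacent versus separated), so the correct ansatz is $S_1=p\,u^3v^2+(1-p)\,u^2vuv$ with an undetermined $p$.

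This is not a cosmetic omission. Computing $\tfrac{d}{dt}\trace(u^3v^2)$ for \eqref{eq:matP6syshomGen} and collecting the result in the seven cyclic classes of bidegree $(5,3)$, the class of $u\,v\,u\,v\,u^3v$ receives the contribution $b_2+b_5$ only, so conservation of $\trace(u^3v^2)$ forces $b_2+b_5=0$; but the systems \eqref{eq:homP6_simplified} have $b_2+b_5=(-a_1-a_2)+(-2+a_1+a_2)=-2$. In fact the conserved lift for \eqref{eq:homP6_simplified} is $\trace(u^2vuv)$ (i.e.\ $p=0$): for that monomial the same cyclic class receives $2+b_2+b_5$, which does vanish. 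So if you run your plan with $S_1=u^3v^2$ you will derive a \emph{different} set of relations (one finds $b_3=-1$, $b_4=-a_2-a_3$, etc.) incompatible with the statement of the lemma, and you would wrongly conclude that the lemma's systems do not admit a non-abelianization of $I$. You must carry the parameter $p$ through the linear system (and likewise allow the full commutator-ambiguity in the ansatz for $S_2$, which you do correctly); the conditions coming from $I$ alone then leave more than one branch, e.g.\ $p=1$ with its own $b_j$'s, and it is precisely the additional $I^2$-condition in the hypothesis that eliminates the spurious branches and pins the answer down to \eqref{eq:homP6_simplified}.
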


Following the line of Subsection \ref{Sec202}, we specify the remaining parameters in \eqref{eq:homP6_simplified} by the non-abelianization of symmetries.
\begin{proposition}
\label{thm:hompart}
    If system \eqref{eq:homP6_simplified} admits a non-abelianization of symmetries of degrees 6 and 9 then it has one of the following forms \text{\rm(}up to the transposition $u\to u^T, \, v\to v^T$\text{\rm)}
    \begin{align}
        \text{\rm{\textbf{Case 1:}}}
        &&&
        \begin{aligned}
        a_1 &= a,
        &&&
        a_2 &= 2 - a,
        &&&
        a_3 &= 0,
        \end{aligned}
        \\[1mm]
        &&&
        \left\{
        \begin{array}{lcl}
             u'
             &=& a \, u^3 v 
             + (2 - a) \, u^2 v u,  
             \\[2mm]
             v'
             &=& - a \, u^2 v^2 
             - 2 u v u v 
             + (- 1 + a) \, v u^2 v; 
        \end{array}
        \right.
        \\[2mm]
        \text{\rm{\textbf{Case 2:}}}
        &&&
        \begin{aligned}
        a_1 &= 0,
        &&&
        a_2 &= a,
        &&&
        a_3 &= 2 - a,
        \end{aligned}
        \\[1mm]
        &&&
        \left\{
        \begin{array}{lcl}
             u'
             &=& a \, u^2 v u 
             + (2 - a) \, u v u^2, 
             \\[2mm]
             v'
             &=& - a \, u v u v 
             - v u^2 v 
             + (- 2 + a) v u v u
             ,
        \end{array}
        \right.
    \end{align}
with $a =-1,0,1,2$ and $a =-1,0,1,2,3$, respectively.
\end{proposition}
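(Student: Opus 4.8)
The plan is to follow the strategy already used for the $\PII$ case in Subsection \ref{Sec202}, but now applied to the two-parameter family \eqref{eq:homP6_simplified}. The starting point is Lemma \ref{lemma1}, which already cut the undetermined coefficients down to the two parameters $a_2, a_3$ (with $a_1$ still free, so really three parameters $a_1,a_2,a_3$ subject only to the constraint that $I,I^2$ non-abelianize). We must impose the existence of non-abelian symmetries of weighted degrees $6$ and $9$ -- the non-abelian analogues of \eqref{scsym6} and \eqref{scsym9} with $N=0$.

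First I would write down the most general ansatz for each symmetry: a system $u_\tau = (\text{homogeneous non-commutative polynomial of bidegree summing to the degree-$6$ weight})$, $v_\tau = (\ldots)$, and similarly for degree $9$, with all admissible non-commutative monomials, their scalar coefficients normalized so that the commutative reduction recovers \eqref{scsym6}, \eqref{scsym9}. Then I would impose two kinds of constraints. The first is item 3 of Assumption \ref{assumpt1}: each symmetry must preserve the first integrals $\trace S_1$ and $\trace S_2$ obtained from the non-abelianization of $I$ and $I^2$; differentiating $\trace S_i$ along the symmetry flow and collecting monomials up to cyclic permutation gives a linear system relating the symmetry coefficients to $a_1,a_2,a_3$. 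The second and decisive constraint is commutativity: the flows $\partial_t$ (given by \eqref{eq:homP6_simplified}), $\partial_{\tau_6}$, and $\partial_{\tau_9}$ must pairwise commute, i.e. the mixed partials $u_{t\tau} - u_{\tau t}$, $v_{t\tau} - v_{\tau t}$ must vanish identically as non-commutative polynomials. Expanding these and equating coefficients of each monomial yields a (large) system of polynomial equations; after the linear symmetry coefficients are eliminated, what remains is a small set of polynomial equations purely in $a_1,a_2,a_3$.

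The key step -- and the main obstacle -- is the organization and solution of that final polynomial system in $a_1,a_2,a_3$. Exactly as in the $\PII$ computation, where the compatibility conditions collapsed to $\beta(\beta^2-1)(\beta^2-4)=0$, I expect the $\PVI$ homogeneous case to produce factored polynomial conditions whose solution set splits into the two branches stated in the proposition. The natural way to see the split is to first solve the part of the system that forces one of $a_1,a_3$ to vanish: setting $a_1=0$ (after a possible normalization $a_1+a_2+a_3=2$ coming from the leading coefficient in \eqref{eq:homP6_simplified}) lands in \textbf{Case 2} with $a_2=a$, $a_3=2-a$, while the transposed alternative $a_3=0$ lands in \textbf{Case 1} with $a_1=a$, $a_2=2-a$; the transposition $u\mapsto u^T, v\mapsto v^T$ exchanges these two families, which is why the proposition states the result only up to transposition. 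Within each branch the remaining equations become a single-variable polynomial in $a$, and its integer roots give $a=-1,0,1,2$ in Case 1 and $a=-1,0,1,2,3$ in Case 2. The asymmetry in the admissible ranges (four values versus five) should fall out of the slightly different shape of the degree-$9$ compatibility condition in the two branches, reflecting that $\textbf{Case 1}$ and $\textbf{Case 2}$ sit differently relative to the monomial basis.

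The remaining steps are routine but lengthy bookkeeping: verifying that for each admissible $a$ the linear system for the symmetry coefficients is actually consistent (so that a genuine non-abelian symmetry exists, not merely a formal solution of the $a$-equations), and checking that no spurious solutions with non-integer or complex $a$ survive once all monomial coefficients -- including those multiplied by nested commutators such as $[u,[u,v]]$ -- are accounted for. I would present the argument by: (i) stating the symmetry ansätze; (ii) quoting the first-integral relations as a displayed block of linear equations; (iii) writing the reduced compatibility conditions as a short factored polynomial system in $a_1,a_2,a_3$; (iv) reading off the two cases and their admissible parameter values; and (v) noting the transposition symmetry that identifies the two sub-lists. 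The genuinely unavoidable hard work is the expansion of the double-bracket compatibility condition for the degree-$9$ symmetry -- that is where the roughly ``$1000$ undetermined coefficients'' mentioned in the text enter -- and in the write-up this is exactly the computation one delegates to a computer algebra system, presenting only the resulting constraints.
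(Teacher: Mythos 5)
Your strategy is the one the paper actually follows: a general homogeneous ansatz for the non-abelian symmetries of degrees $6$ and $9$, the commutativity conditions $u_{t\tau}-u_{\tau t}=0$, $v_{t\tau}-v_{\tau t}=0$ expanded monomial by monomial, elimination of the linear symmetry coefficients down to a small factored polynomial subsystem in $a_1,a_2,a_3$ (the paper's key relation is precisely $a_1a_3=0$, giving the two branches), and then the degree-$9$ compatibility pinning $a$ to the stated integer sets — all delegated to a CAS, exactly as the paper does. Two slips should be corrected, though. First, the symmetries of degrees $6$ and $9$ are the non-abelianizations of \eqref{scsym6} and \eqref{scsym9} with $N=1$, not $N=0$: with $N=0$ those flows are the degree-$1$ scaling and the system itself, which impose no constraints whatsoever, so the computation as literally described would produce nothing; the whole point (as in the $\PII$ case) is that the scalar symmetries only become nontrivial after multiplication by the integral $I=u^3v^2$ and subsequent non-abelianization. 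Second, the transposition $\tau\colon(a_1,a_2,a_3)\mapsto(2-a_1-a_2-a_3,a_3,a_2)$ does \emph{not} exchange the two branches: it maps Case~2 into itself (sending $a\mapsto 2-a$) and sends Case~1 to a third family supported on $uvu^2$ and $vu^3$ (coinciding with the displayed lists only at $a=0$); the phrase ``up to transposition'' in the proposition covers this image family rather than identifying Case~1 with Case~2. Neither slip affects the viability of the method, but both would need fixing before the write-up matches the actual classification.
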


\begin{proof}
In the proof, we first find relations between the coefficients of system \eqref{eq:homP6_simplified} which is equivalent to the existence of a non-abelianization of the symmetry of degree 6.

The non-abelianization ansatz for symmetry \eqref{scsym6} with $N=1$ is  obvious, and we do not present it here.
Let us denote the unknown coefficients of the monomials $u^2 v u v u$, $u v u^2 v u$, $u v u v u^2$ in the ansatz by $c_1$, $c_2$, $c_3$, respectively. The compatibility symmetry condition produces a large system of non-linear algebraic equations. Eliminating variables, we arrive at an autonomous subsystem for the coefficients $a_1$, $a_2$, $a_3$, $c_1$, $c_2$, $c_3$. One of its equations is $a_1 \, a_3 = 0$ which leads to two different cases.

If $a_1 = 0$, then the subsystem is equivalent to
\begin{align}
    &
    \begin{aligned}
    \left(a_2+a_3-2\right) c_2
    &= 0,
    &
    \left(a_2-2\right) a_2
    - \left(a_3-2\right) a_3 c_2
    &= 0,
    &
    a_3 \left(\left(a_3-2\right) c_2+a_2\right)
    &= 0,
    \end{aligned}
    \\[2mm]
    &
    \begin{aligned}
    \left(a_3-2\right) a_3 
    \left(c_2-1\right) c_2
    &= 0,
    &
    \left(a_3-1\right) c_3
    &= 0,
    &
    c_3 \left(a_2-c_2\right)
    &= 0,
    &
    \left(c_2-1\right) c_2 c_3
    &= 0.
    \end{aligned}
\end{align}
This system has the following solution
\begin{align}
    &&
    a_3
    &= 2 - a,
    &
    a_2
    &= a,
    &
    c_2
    &= 1,
    &
    c_3 
    &= 0,
    &&
\end{align}
where $a$ is an arbitrary parameter.

In the case $a_3 = 0$, we obtain the following system
\begin{align}
    &
    \begin{aligned}
    \left(a_1-a_2\right) \left(a_1+a_2-2\right)
    &= 0,
    &
    a_2 \left(a_1+a_2-2\right)
    &= 0,
    &
    \left(a_1-1\right) c_1
    &= 0,
    \end{aligned}
    \\[2mm]
    &
    \begin{aligned}
    \left(a_2-1\right) c_1
    &= 0,
    &
    a_1 c_2
    &= 0,
    &
    c_1 c_2
    &= 0,
    &
    \left(a_2-2\right) c_2
    &= 0,
    &
    c_3 
    &= 0.
    \end{aligned}
\end{align}
The first two equations give 
\begin{align}
    &&
    a_2
    &= 2 - a,
    &
    a_1
    &= a.
    &&
\end{align}

To specify an arbitrary parameter $a$, we use a non-abelian symmetry of degree 9. Let us consider the compatibility condition and denote the unknown coefficients related to the symmetry that cannot be determined uniquely from this condition by $d_1$ and $d_2$. Then, in Case \textbf{1}, these monomials are $d_1 \, u^3 v u^2 v u v$, $d_2 \, u^3 v u v u^2 v$ and one can obtain a system that is equivalent to
\begin{align}
    &
    \begin{aligned}
        d_1-(a-1) d_2
        &= 0,
        &&&
        \left(d_2-1\right) d_2
        &= 0,
        &&&
        d_1 \left(d_2-1\right)
        &= 0,
    \end{aligned}
    \\[2mm]
    &
    \begin{aligned}
        a^2+a-4 d_1-2 d_2
        &= 0,
        &&&
        \left(d_1-1\right) d_1
        &= 0,
        &&&
        (a-2) d_1
        &= 0.
    \end{aligned}
\end{align}
So, the solution is one of the following:
\begin{align}
    &
    \begin{aligned}
        a
        &= -1,
        &
        d_1
        &= 0,
        &
        d_2
        &= 0;
    \end{aligned}
    &
    \begin{aligned}
        a
        &= 0,
        &
        d_1
        &= 0,
        &
        d_2
        &= 0;
    \end{aligned}
    \\[2mm]
    &
    \begin{aligned}
        a
        &= 1,
        &
        d_1
        &= 0,
        &
        d_2
        &= 1;
    \end{aligned}
    &
    \begin{aligned}
        a
        &= 2,
        &
        d_1
        &= 1,
        &
        d_2
        &= 1.
    \end{aligned}
\end{align}
Similarly, in Case \textbf{2}, a reduced system can be written as
\begin{align}
    &
    \begin{aligned}
    \left(d_2-1\right) d_2
    &= 0,
    &&&
    d_1 \left(d_1+1\right)
    &= 0,
    &&&
    d_2 \left(a-d_1\right)
    &= 0,
    \end{aligned}
    \\[2mm]
    &
    \begin{aligned}
    d_1 \left(a+4 d_2-3\right)
    &= 0,
    &&&
    (a-3) a+2 \left(d_1+\left(d_1-1\right) d_2+1\right)
    &= 0,
    \end{aligned}
\end{align}
where $d_1$ and $d_2$ are coefficients of $u v u^2 v u^2 v u$ and $u v u v u^2 v u^2$, respectively. The system possesses one of the following solutions:
\begin{gather}
    \begin{aligned}
        a
        &= -1,
        &
        d_1
        &= -1,
        &
        d_2
        &= 1;
        &&&
        a
        &= 0,
        &
        d_1
        &= 0,
        &
        d_2
        &= 1;
        &&&
        a
        &= 1,
        &
        d_1
        &= 0,
        &
        d_2
        &= 0;
    \end{aligned}
    \\[2mm]
    \begin{aligned}
        a
        &= 2,
        &
        d_1
        &= 0,
        &
        d_2
        &= 0;
        &&&
        a
        &= 3,
        &
        d_1
        &= -1,
        &
        d_2
        &= 0.
    \end{aligned}
\end{gather}
Therefore, we arrive at the statement of the proposition.
\end{proof}

\begin{remark} 
Note that Case {\rm\textbf{1}}, $a = 0$ coincides with Case {\rm\textbf{2}}, $a = 2$.  
\end{remark}

\begin{remark} 
The matrix transposition $\tau: \, u\to u^T, \, v\to v^T$  preserves the class of systems \eqref{eq:noz}, \eqref{eq:P1form} and maps integrable systems to integrable.  
In particular, systems related to Case {\rm\textbf{2}} from Proposition {\rm\ref{thm:hompart}} with $a=0$ and $a=2$ as well as with $a=-1$ and $a = 3$ are dual with respect to the transposition. Sometimes we formulate classification statements up to the transposition.  
\end{remark}

\subsection{General case}

Considering system \eqref{eq:noz}, \eqref{eq:P1form}, we obtain the following generalization of Lemma \ref{lemma1}:
\begin{proposition}
\label{thm:intP6_nc}
An autonomous system of the form \eqref{eq:noz}, \eqref{eq:P1form} admits a non-abelianization of the integrals $H_6$ and $H_6^2$ iff the polynomials $P_i$ and $Q_i$ have the form
\begin{align}
    \label{eq:P6sys_rhs1}
    &\begin{aligned}
    P_1 (u, v)
    &= a_1 \, u^3 v 
    + a_2 \, u^2 v u + a_3 \, u v u^2 
    + (2 - a_1 - a_2 - a_3) \, v u^3
    \\
    & \qquad
    + \, c_1 \, u^2 v + (- 2 - c_1 - c_2) \, u v u + c_2 \, v u^2
    - \kappa_1 \, u^2 + \kappa_2 \, u,
    \\
    Q_1 (u, v)
    &= f_1 \, u^2 v 
    + (- 2 - f_1 - f_2) \, u v u + f_2 \, v u^2
    \\
    & \qquad
    + \, h_1 \, u v + (2 - h_1) \, v u 
    + \kappa_4 u
    + \brackets{\kappa_1 - \kappa_2 - \kappa_4},
    \end{aligned}
    \\[2mm]
    \label{eq:P6sys_rhs2}
    &\begin{aligned}
    P_2 (u, v)
    &= - a_1 \, u^2 v^2 
    + (- a_1 - a_2) \, u v u v 
    + (1 - a_2 - a_3) \, v u^2 v 
    \\
    & \qquad
    + \, (- 2 + a_1 + a_2) \, v u v u
    + (-2 + a_1 + a_2 + a_3) \, v^2 u^2
    \\
    & \qquad
    - \, c_1 \, u v^2 + (2 + c_1 + c_2) \, v u v 
    - c_2 \, v^2 u
    \\
    & \qquad
    + \, e_1 \, u v
    + (2 \kappa_1 - e_1) \, v u
    - \kappa_2 \, v
    + \kappa_3,
    \\
    Q_2 (u, v)
    &= - f_1 \, u v^2 
    + (2 + f_1 + f_2) \, v u v - f_2 \, v^2 u
    - v^2 - \kappa_4 \, v.
    \end{aligned}
\end{align}
\vspace{-3mm}
\qed
\end{proposition}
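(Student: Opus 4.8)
The plan is to follow the strategy of Lemma~\ref{lemma1} and of Subsection~\ref{sec:P2int}, now with inhomogeneous integrals: one seeks non-abelian polynomials $S_1,S_2$ reducing, under the commutative specialization, to $H_6$ and $H_6^2$ and satisfying $\frac{d}{dt}\trace S_1=0$ and $\frac{d}{dt}\trace S_2=0$ along \eqref{eq:noz}, \eqref{eq:P1form}. I would organise the computation by the grading $\deg u=\deg v=1$, assigning weight $0$ to $z$ and to all $\kappa_i$; then the top-degree (in $u,v$) part of the flow \eqref{eq:noz}, \eqref{eq:P1form} is precisely the homogeneous system \eqref{eq:matP6syshomGen}, the top part of $H_6$ is $I=u^3v^2$, and the top part of $H_6^2$ is $I^2$.

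The first step is to extract the leading-order consequences. Since $S_1\to H_6$ and $S_2\to H_6^2$, their top-degree components $S_1^{\mathrm{top}}$ (degree $5$) and $S_2^{\mathrm{top}}$ (degree $10$) are non-abelianizations of $I$ and $I^2$; moreover, no other graded piece of $S_i$ or of the flow contributes at the relevant top degree, so the highest-degree parts of $\frac{d}{dt}\trace S_1$ and $\frac{d}{dt}\trace S_2$ coincide with the conservation conditions for $\trace S_1^{\mathrm{top}}$ and $\trace S_2^{\mathrm{top}}$ along the homogeneous flow \eqref{eq:matP6syshomGen}. Hence Lemma~\ref{lemma1} applies directly and forces \eqref{eq:matP6syshomGen} into the form \eqref{eq:homP6_simplified}; equivalently, the coefficients $b_1,\dots,b_5$ of \eqref{eq:P1form} get tied to $a_1,a_2,a_3$ by $b_1=-a_1$, $b_2=-a_1-a_2$, $b_3=0$, $b_4=1-a_2-a_3$, $b_5=-2+a_1+a_2$, which is exactly the quartic part of $P_2$ displayed in \eqref{eq:P6sys_rhs2}.

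Next I would pass to the sub-leading gradings. After imposing the relations above, I would write down the most general $S_1$ reducing to $H_1+zH_2$ and the most general $S_2$ reducing to $H_6^2$, each understood up to commutators in the sense of Proposition~\ref{thm:trint2}: in every multidegree $(p,q)$ occurring in the commutative target one keeps a single monomial per cyclic word in $u,v$, with scalar coefficients constrained only so that the commutative image is correct and, in the top degree, so that $S_i^{\mathrm{top}}$ has the shape singled out in the proof of Lemma~\ref{lemma1}. Substituting the flow into $\frac{d}{dt}\trace S_i=0$ and separating by grading produces, in each degree, a system bilinear in the free coefficients of the $S_i$ and in the remaining system parameters $c_1,c_2,d_1,d_2,e_1,f_1,f_2,g_1,g_2,h_1$; eliminating the $S_i$-coefficients, as in the homogeneous case, leaves relations purely among the system parameters. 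The expectation, confirmed by the computation, is that these collapse to $d_1=-c_1$, $d_2=-c_2$ for the $uv^2,vuv,v^2u$ monomials of $P_2$, and $g_1=-f_1$, $g_2=-f_2$ for the $uv^2,vuv,v^2u$ monomials of $Q_2$, while $a_1,a_2,a_3,c_1,c_2,e_1,f_1,f_2,h_1$ stay free. Conversely, for any values of these nine parameters and of the $\kappa_i$ the residual linear systems for the coefficients of $S_1$ and $S_2$ are consistent, so $H_6$ and $H_6^2$ then do admit a non-abelianization; this yields the converse implication and completes the equivalence.

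The main difficulty is the size of the bookkeeping: the commutative polynomial $H_6^2$ has degree ten in $u,v$ together with $z$- and $\kappa$-dependent lower-order terms, so its general non-abelian form involves a very large number of monomials and the elimination is only practical with a computer algebra system. Handling the homogeneous problem first is what keeps the general computation tractable, since it removes $b_1,\dots,b_5$ as independent unknowns before the bulk of the linear algebra begins; and one must take care that the non-abelian expressions for $S_1,S_2$ are complete, i.e. contain exactly one representative per cyclic word in $u,v$ in every multidegree, so that no admissible first integral is overlooked.
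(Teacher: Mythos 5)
Your proposal follows exactly the strategy the paper intends (and leaves implicit, stating the proposition with only a \qed as a ``generalization of Lemma~\ref{lemma1}''): reduce the top degree of the conservation conditions to the homogeneous case of Lemma~\ref{lemma1} to fix $b_1,\dots,b_5$ in terms of $a_1,a_2,a_3$, then run the same ansatz-and-eliminate computation of Subsection~\ref{sec:P2int} on the lower gradings to obtain $d_i=-c_i$, $g_i=-f_i$ with the remaining nine parameters free. The relations you extract match \eqref{eq:P6sys_rhs1}, \eqref{eq:P6sys_rhs2} (and the parameter count $18\to 9$ quoted in the text), so this is correct and essentially the paper's own route.
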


As it was mentioned at the beginning of Section \ref{P6}, if system \eqref{eq:P1form} satisfy Assumption~\ref{assumpt1}, then its leading homogeneous part also satisfies it. Thus, Proposition \ref{thm:hompart} separates the classification of systems \eqref{eq:noz}, \eqref{eq:P6sys_rhs1}, \eqref{eq:P6sys_rhs2} into 9 different cases. We consider each of these cases independently\footnote{In fact, it suffices to consider 5 cases that are not equivalent with respect to the transposition $\tau$.}.

An inhomogeneous scalar \PPainleve-6\,system does not have symmetry of order 6. Therefore, we consider non-abelianization of the symmetry of order 9.

Just as in Subsection \ref{Sec202}, we refine the form of the symmetry, after which about 300 indeterminate coefficients remain in the ansatz for the symmetry. The compatibility condition of the system and symmetry leads to a huge overdetermined system of non-linear algebraic equations with respect to the remaining nine coefficients in the equation and the symmetry coefficients.  Despite its size, the system is easily solved using the computer algebra system ``Mathematica''. 

\begin{theorem}
\label{thm:sysP6_nc}
A system \eqref{eq:noz} with  \eqref{eq:P6sys_rhs1}, \eqref{eq:P6sys_rhs2}, whose leading part is decribed by Proposition {\rm\ref{thm:hompart}},   admits a non-abelianization of the symmetry of degree {\rm9} iff it belong to one of the following lists
\begin{enumerate}
    \item[\rm{i)}] 
    Appendix A.1 from the paper {\rm\cite{bobrova2022classification}},
    
    \item[\rm{ii)}] 
    Appendix A.1 from the paper {\rm\cite{bobrova2022non}},
    
    \item[\rm{iii)}] 
    Appendix \ref{sec:sysintlistP6} in this paper {\rm(}up to the transposition $\tau${\rm).}
    \qed
\end{enumerate}
\end{theorem}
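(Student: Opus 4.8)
The plan is to follow the strategy already rehearsed for the \PPainleve-2 case in Subsection~\ref{Sec202}, but now with the starting point being Proposition~\ref{thm:intP6_nc}: the nine-parameter family \eqref{eq:noz}, \eqref{eq:P6sys_rhs1}, \eqref{eq:P6sys_rhs2} whose free coefficients are $a_1,a_2,a_3,c_1,c_2,e_1,f_1,f_2,h_1$, together with the splitting of the leading homogeneous part into the nine cases of Proposition~\ref{thm:hompart} (five up to $\tau$). So the first step is to fix a case from Proposition~\ref{thm:hompart}, which pins down $a_1,a_2,a_3$ and leaves the six inhomogeneous coefficients $c_1,c_2,e_1,f_1,f_2,h_1$ free.

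Next I would write down the general ansatz for the non-abelian symmetry of degree~$9$, i.e.\ the non-abelianization of \eqref{scsym9} with $N=1$ (equivalently the analog of \eqref{eq:ausymP2} with $k=1$), carrying an undetermined scalar coefficient in front of every non-commutative monomial of the appropriate multidegree in $u$ and $v$, plus the $z$-dependent tail dictated by $f(z)=1$ here. As the text notes, one first reduces this ansatz using the constraints that come from demanding that $\trace S_1=\trace H_6$ and $\trace S_2=\trace H_6^2$ (constructed in Proposition~\ref{thm:intP6_nc}) remain first integrals of the symmetry flow — this is the $\PVI$ analog of the long list of linear relations $b_i=\dots$, $c_4=-\tfrac12$, etc.\ displayed in Subsection~\ref{Sec202}. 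After this reduction roughly $300$ coefficients survive. Then impose the compatibility (commuting-flows) condition $u_{t\tau}-u_{\tau t}=0$, $v_{t\tau}-v_{\tau t}=0$; expanding in the linearly independent non-commutative monomials gives a large overdetermined polynomial system in the $\sim 300$ symmetry coefficients together with $c_1,c_2,e_1,f_1,f_2,h_1$. Solve it with a computer algebra system. The claim is that in each of the nine cases the resulting variety of admissible $(c_1,c_2,e_1,f_1,f_2,h_1)$ is exactly the (finite) union of points corresponding to the systems in the three cited lists: Appendix~A.1 of \cite{bobrova2022classification}, Appendix~A.1 of \cite{bobrova2022non}, and Appendix~\ref{sec:sysintlistP6} of the present paper.

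To finish, I would do the bookkeeping in two directions. \emph{Completeness}: solving the algebraic system case by case produces all admissible parameter vectors, and one checks by inspection that each lies in one of the three lists (using the $\tau$-action, \eqref{tau}, to fold the four $\tau$-related cases onto the other five, as footnoted after Theorem~\ref{thm:sysP6_nc}). \emph{Soundness}: conversely, for each system in the three lists one exhibits the explicit degree-$9$ symmetry (and, if one wants the full Assumption~\ref{assumpt1}, the Lax pair $L$, $M_9=S_1^{\,?}M$ in the non-abelianized form — but for this theorem only the symmetry is needed), verifying directly that it commutes; for the $19$ old systems this verification is already recorded in \cite{bobrova2022classification,bobrova2022non}, so only the $8$ new ones in Appendix~\ref{sec:sysintlistP6} require a fresh check. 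The main obstacle is purely computational: the symmetry ansatz has on the order of a thousand coefficients before the trace-integral reduction and still about $300$ afterward, and the compatibility condition must be expanded in a basis of non-commutative monomials of degree up to~$\sim 11$ in $u,v$ — so the difficulty lies in organizing the Gr\"obner-type elimination so that ``Mathematica'' can actually close it out in each of the nine cases, rather than in any conceptual step. A secondary subtlety is making sure the monomial bases used ``up to commutators'' (as in the $\PVI$ counterpart of the $\{vu^3vu,\dots\}$ lists in Proposition~\ref{thm:trint2}) are complete and non-redundant, so that no admissible branch is accidentally lost or double-counted.
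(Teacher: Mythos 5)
Your proposal follows essentially the same route as the paper: the authors likewise use Proposition~\ref{thm:intP6_nc} to cut the ansatz down to nine free coefficients, split into the cases of Proposition~\ref{thm:hompart}, refine the degree-9 symmetry ansatz via the non-abelianized integrals to about 300 coefficients, and then solve the resulting overdetermined compatibility system case by case in ``Mathematica''. The paper gives no further detail beyond this computational sketch, so your write-up is a faithful (indeed slightly more explicit) account of the intended argument.
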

 
\section{Autonomous systems of \texorpdfstring{$\PV - \PI$}{P5 - P1} types}\label{P5-P1}
\subsection{Systems of \texorpdfstring{$\PV$}{P5} type}
The scalar auxiliary autonomous system of the \PPainleve-5 type is defined by the Hamiltonian $H$ of the form
\begin{equation}
    \label{eq:ham5}
     H
    = u^3 v^2 
    - 2 u^2 v^2 
    + v^2 
    - \kappa_1 u^2 v
    + (\kappa_1 + \kappa_2) u v
    - \kappa_2 v
    - \kappa_3 u
    + \kappa_4 z u v.
\end{equation}
A non-abelian generalization of this system can be written as
\begin{align}
    \label{eq:sysP5_nc}
    \left\{
    \begin{array}{lcr}
         \dfrac{du}{dt}
        &=& a_1 u^3 v 
        + a_2 u^2 v u 
        + a_3 u v u^2 
        + \brackets{
        2 - \sum a_i
        } v u^3
        + c_1 u^2 v 
        + (- 4 - c_1 - c_2) u v u 
        \hspace{2mm}
        \\[1mm]
        &&
        + \, c_2 v u^2
        - \kappa_1 u^2
        + e_1 u v + (2 - e_1) v u 
        + (\kappa_1 + \kappa_2) u
        - \kappa_2
        + \kappa_4 z u
        ,
        \\[2mm]
        \dfrac{dv}{dt}
        &=& b_1 u^2 v^2 
        + b_2 u v u v 
        + b_3 u v^2 u
        + b_4 v u^2 v
        + b_5 v u v u
        + \brackets{
        - 3 - \sum b_i
        } v^2 u^2
        \hspace{1.2cm}
        \\[1mm]
        &&
        + \, d_1 u v^2 
        + (4 - d_1 - d_2) v u v 
        - d_2 v^2 u 
        - v^2 
        + f_1 u v + (2 \kappa_1 - f_1) v u
        \\[1mm]
        &&
        - \, (\kappa_1 + \kappa_2) v
        + \kappa_3
        - \kappa_4 z v
        ,
    \end{array}
    \right.
\end{align}
where $a_i$, $b_i$, $c_i$, $d_i$, $e_1$, $f_1$, $\kappa_i \in \mathbb{C}$. 

\begin{proposition}
\label{thm:intP5_nc}
\phantom{}
\begin{itemize}
    \item[\rm{a)}]
    A system of the form \eqref{eq:noz} admits a non-abelianization of integrals $H$ and $H^2$ iff
\begin{gather}
\begin{aligned}
    b_1
    &= - a_1,
    &&&
    b_2
    &= - a_1 - a_2,
    &&&
    b_3 
    &= 0,
    &&&
    b_4
    &= 1 - a_2 - a_3,
    &&&
    b_5
    &= - 2 + a_1 + a_2,
\end{aligned}
\\
\begin{aligned}
    &&
    d_1
    &= - c_1,
    &&&
    d_2
    &= - c_2.
    &&
\end{aligned}
\end{gather}

\item[\rm{b)}]
If the system additionally admits a non-abelianiazation of the symmetry \eqref{scsymgen} of degree 9, then it belongs to one of the following lists
\begin{enumerate}
    \item[\rm{i)}] 
    Appendix A.2 from {\rm\cite{bobrova2022classification}},
    
    \item[\rm{ii)}] 
    Appendix A.2 from {\rm\cite{bobrova2022non}},
    
    \item[\rm{iii)}] 
    Appendix \ref{sec:sysintlistP5} in this paper {\rm(}up to the transposition $\tau${\rm).}
    \qed
\end{enumerate}
\end{itemize}
\end{proposition}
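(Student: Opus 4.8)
The plan is to follow, essentially verbatim, the two-stage scheme already used for the \PPainleve-2 systems in Subsection~\ref{sec:P2_case} and for the \PPainleve-6 systems in Section~\ref{P6}: part~(a) is obtained by non-abelianizing the two simplest first integrals, and part~(b) by non-abelianizing the degree-$9$ symmetry \eqref{scsymgen} with $k=1$. Here $\PV$ differs from $\PVI$ only in that the Hamiltonian \eqref{eq:ham5} is slightly different and the $z$-dependence enters linearly and symmetrically; the mechanics are unchanged.

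For part~(a), I would take for $S_1$ the naive non-commutative lift of the Hamiltonian \eqref{eq:ham5}, each monomial written in a fixed order; since the trace is cyclic this choice is immaterial, and in fact $S_1$ is only needed modulo commutators. Computing $\tfrac{d}{dt}\trace(S_1)$ along \eqref{eq:sysP5_nc}, reducing everything to a basis of cyclic words, and equating all coefficients to zero gives a first batch of linear relations among $a_i,b_i,c_i,d_i,e_1,f_1$. Next I would build the general ansatz for $S_2$: for every bidegree occurring in $H^2$ list the monomials up to cyclic permutation, keep one representative per class and leave the coefficients of the non-leading ones undetermined, normalized so that the commutative reduction of $S_2$ is exactly $H^2$. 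The condition $\tfrac{d}{dt}\trace(S_2)=0$ is again linear; its consistency forces extra relations on the system parameters, while only inessential (commutator-type) freedom survives in $S_2$ itself, just as $a_4$ remained free in Proposition~\ref{thm:trint2}. As in Lemma~\ref{lemma1} and Proposition~\ref{thm:intP6_nc} (whose leading part coincides with that of \eqref{eq:sysP5_nc}), I expect the combined content of the two conditions to be precisely the seven displayed relations $b_1=-a_1$, $b_2=-a_1-a_2$, $b_3=0$, $b_4=1-a_2-a_3$, $b_5=-2+a_1+a_2$, $d_1=-c_1$, $d_2=-c_2$. Because the whole computation is a single linear elimination of the coefficients of $S_1$ and $S_2$, it yields the equivalence in both directions; substituting the relations back into the ansätze exhibits $S_1$ and $S_2$ explicitly.

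For part~(b), I would substitute those seven relations into \eqref{eq:sysP5_nc}, leaving the essential parameters $a_1,a_2,a_3,c_1,c_2,e_1,f_1$ (besides the $\kappa_i$). Since the scalar \PPainleve-5 auxiliary system, being inhomogeneous, has no symmetry of order $6$, one passes directly to order $9$ and writes the ansatz for a non-commutative lift of $u_{t_1}=(P_1+zQ_1)H$, $v_{t_1}=(P_2+zQ_2)H$ of degree $9$. As in Subsection~\ref{Sec202}, this ansatz is first shrunk by requiring that $\trace S_1$ and $\trace S_2$ from part~(a) be first integrals of the symmetry as well (item~3 of Assumption~\ref{assumpt1}); this is a set of linear relations among the symmetry coefficients and, exactly as in the $\PVI$ case, leaves a few hundred of them. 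Then one imposes the compatibility conditions $u_{t\,t_1}-u_{t_1\,t}=0$ and $v_{t\,t_1}-v_{t_1\,t}=0$, obtaining a large overdetermined system of polynomial equations in the seven parameters, the $\kappa_i$, and the remaining symmetry coefficients, which is solved with a computer algebra system. The solution set splits into several families; matching them against Appendix~A.2 of \cite{bobrova2022classification}, Appendix~A.2 of \cite{bobrova2022non}, and Appendix~\ref{sec:sysintlistP5} (up to the transposition $\tau$) gives lists (i)--(iii).

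The main obstacle is the scale of the polynomial system in part~(b): one has to split it into the cases dictated by the leading homogeneous part (the $\PV$ counterpart of the analysis that led to Proposition~\ref{thm:hompart}) so that the elimination terminates in ``Mathematica'', and then certify that the union of the three appendix lists is exactly its solution variety --- this exhaustiveness being the real content, since the three lists were produced under different integrability criteria. A secondary technical point is to take the degree-$9$ ansatz genuinely large enough to capture every non-commutative lift of $(P_1+zQ_1)H$ and $(P_2+zQ_2)H$, while keeping the first-integral pre-reduction effective.
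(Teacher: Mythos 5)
Your proposal is correct and follows essentially the same route as the paper: the paper gives no separate proof for this proposition (it is stated with a \qed), relying on the two-stage method worked out in detail for the \PPainleve-2 case in Subsection~\ref{sec:P2_case} and the \PPainleve-6 case in Section~\ref{P6} --- linear elimination from $\tfrac{d}{dt}\trace(S_1)=\tfrac{d}{dt}\trace(S_2)=0$ for part~(a), then the degree-9 symmetry ansatz, pre-reduction by the first integrals, and computer-algebra solution of the compatibility conditions for part~(b). Your observation that the leading homogeneous part of \eqref{eq:sysP5_nc} coincides with that of the $\PVI$ case, so that Proposition~\ref{thm:hompart} already controls the case splitting, is exactly the simplification the authors exploit.
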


\subsection{Systems of \texorpdfstring{$\PIV$}{P4} type}
The scalar Hamiltonian for an autonomous $\PIV$ system is given by
\begin{align}
    \label{eq:ham4}
    H
    &= u v^2 - u^2 v
    + \kappa_2 v
    - \kappa_3 u 
    - 2 z u v.
\end{align}
Then, the general non-abelian autonomous system can be written in the form
\begin{align} \label{eq:sysP4_nc}
    \left\{
    \begin{array}{lcl}
       \dfrac{d u}{d t}
         &=& - u^2 
         + 2 u v + \alpha \, [u, v]
         - 2 z u
         + \kappa_2,
         \\[3mm]
        \dfrac{d v}{d t}
         &=& - v^2
         + 2 v u + \beta \, [v, u]
         + 2 z v
         + \kappa_3
    \end{array}
    \right.
\end{align}
with the parameters $\alpha$, $\beta\in \C$ to be defined. 

\begin{proposition}
\label{thm:sysP4_nc}
\phantom{}
\begin{itemize}
    \item[\rm{a)}]
    For any parameters $\alpha$, $\beta$ the system   \eqref{eq:noz} admits a non-abelianization of the first integrals $H$ and $H^2$.

    \item[\rm{b)}] The autonomous system possesses a non-abelian symmetry of degree $5$ iff it belongs to one of the following lists
    \begin{enumerate}
        \item[\rm{i)}] 
        Appendix A.3 from {\rm\cite{bobrova2022classification}},
        
        \item[\rm{ii)}] 
        Appendix A.3 from {\rm\cite{bobrova2022non}},
        
        \item[\rm{iii)}] 
        Appendix \ref{sec:sysintlistP4} {\rm(}up to the transposition~$\tau${\rm)}.
    \qed
    \end{enumerate}
\end{itemize} 
\end{proposition}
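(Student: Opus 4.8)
The plan is to run, in the $\PIV$ setting, the same two-stage argument used for the $\PII$ systems in Subsection~\ref{Sec202} and for the $\PVI$ systems in Theorem~\ref{thm:sysP6_nc}: first produce noncommutative first integrals by non-abelianizing $H$ and $H^2$, and then fix the parameters $\alpha,\beta$ by demanding a non-abelian symmetry of degree~$5$. Because the ansatz \eqref{eq:sysP4_nc} already contains only the two scalars $\alpha,\beta$, the bookkeeping is much lighter than in the $\PVI$ case and is closest in spirit to Proposition~\ref{thm:trint2}.

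For part~(a) I would copy the argument of Proposition~\ref{thm:trint2}. Since the trace annihilates commutators, a noncommutative $S_1$ with commutative reduction \eqref{eq:ham4} can be taken to be $u v^2 - u^2 v + \kappa_2 v - \kappa_3 u - 2 z u v$ itself, and a direct computation shows $\tfrac{d}{dt}\trace S_1=0$ along \eqref{eq:sysP4_nc} for \emph{every} $\alpha,\beta$: the $\alpha$- and $\beta$-terms of the right-hand sides contribute to $\tfrac{d}{dt}S_1$ only through commutators, which drop out under the trace. For $S_2$ I would expand $H^2$ in commuting variables, pick a basis modulo commutators of the noncommutative monomials (of each bidegree and each power of $z$) that occur, attach a free scalar to each, and impose $\tfrac{d}{dt}\trace S_2=0$. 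By cyclicity of the trace this is a linear system for those scalars whose coefficients are polynomials in $\alpha,\beta$, and the content of part~(a) is precisely that this system is consistent for all values of $\alpha,\beta$ --- with some free scalars surviving, just as $a_4$ did in Proposition~\ref{thm:trint2}.

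For part~(b) I would start from the most general noncommutative ansatz for the degree-$5$ symmetry, that is, a non-abelianization of \eqref{scsymgen} with $k=1$, whose scalar limit $(P_i+zQ_i)H$ has degree $2+3=5$ in $u$ and $v$. As in Subsection~\ref{Sec202}, I would first impose, via item~3 of Assumption~\ref{assumpt1}, that $\trace S_1$ and $\trace S_2$ of part~(a) also be first integrals of the symmetry; this linear condition eliminates a large block of the unknown coefficients. Then I would impose the compatibility of the autonomous system with the candidate symmetry, collect the coefficients of the independent noncommutative monomials in $u,v$ and of the powers of $z$, and obtain an overdetermined polynomial system in $\alpha,\beta$ and the surviving symmetry coefficients. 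Solving it --- most conveniently by computer algebra, as in the $\PVI$ case --- yields a finite set of admissible pairs $(\alpha,\beta)$ together with their symmetries; it then remains to identify those autonomous systems, up to the transposition $\tau$ (which acts on $(\alpha,\beta)$ by a simple affine involution, in the spirit of $\beta\mapsto-\beta$ for $\PII$), with the ones already tabulated in Appendix~A.3 from \cite{bobrova2022classification}, Appendix~A.3 from \cite{bobrova2022non}, and Appendix~\ref{sec:sysintlistP4}.

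The step I expect to be the main obstacle is the end of part~(b): even though the compatibility system is much smaller than its $\PVI$ analogue, it is still nonlinear, and one must be careful to recover \emph{all} of its branches and to sort out which of them give genuinely new systems after quotienting by $\tau$. Part~(a), and the first-integral reduction inside part~(b), are essentially mechanical linear algebra.
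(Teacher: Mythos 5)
Your proposal is correct and follows essentially the same route as the paper: the paper states this proposition without a separate proof precisely because it is obtained by the same two-stage scheme you describe (non-abelianization of $\trace S_1$, $\trace S_2$ as in Proposition~\ref{thm:trint2}, then the degree-$5$ symmetry compatibility solved by computer algebra as in Subsection~\ref{Sec202} and Section~\ref{P6}). The only minor caution is in part~(a): the vanishing of the $\alpha$- and $\beta$-contributions to $\tfrac{d}{dt}\trace S_1$ is not a generic ``commutators die under trace'' fact (since $\trace\bigl(X[u,v]\bigr)\neq 0$ in general) but a short explicit check that $\trace\bigl((v^2-uv-vu-2zv)[u,v]\bigr)=0$ and its counterpart for $v$, which does go through.
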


\subsection{\texorpdfstring{$\PIII$}{P3} type systems}
In the $\PIII$ case, we deal with the \PPainleve-$3^{\prime}$ version of the scalar system.  The Hamiltonian of the corresponding autonomous system is given by:
\begin{equation}
    \label{eq:ham3}
    H
    = u^2 v^2 
    + \kappa_2 u^2 v
    + \kappa_1 u v 
    + \kappa_3 u 
    + \kappa_4 z v.
\end{equation}
A non-abelian generalization of the system can be written as 
\begin{align}
    \label{eq:sysP3_nc}
    \left\{
    \begin{array}{lclcl}
          \dfrac{du}{dt}
         &=& a_1 u^2 v 
         + (2 - a_1 - a_2) u v u
         + a_2 v u^2
         + \kappa_1 u + \kappa_2 u^2
         + \kappa_4 z
         ,
         \\[3mm]
         \dfrac{dv}{dt}
         &=& b_1 u v^2 
         - (2 + b_1 + b_2) v u v
         + b_2 v^2 u
         - \kappa_1 v
         + c_1 u v
         + (- 2 \kappa_2 - c_1) v u
         - \kappa_3,
    \end{array}
    \right.
\end{align}
where $a_i$, $b_i$, $c_1$, $\kappa_i \in \mathbb{C}$. 
\begin{proposition}
\label{thm:intP3_nc}
\phantom{}
\begin{itemize}
    \item[\rm{a)}] 
A system of the form \eqref{eq:sysP3_nc} admits a non-abelianization of integrals $H$ and $H^2$ iff
 \begin{align}
    &&
    &&
    b_1
    &= - a_1,
    &&&
    b_2
    &= - a_2.
    &&
    &&
\end{align}

\item[\rm{b)}]  If the system additionally admits a non-abelianiazation of the symmetry \eqref{scsymgen} of degree 7, then it belongs to one of the following lists
\begin{enumerate}
    \item[\rm{i)}] 
    Appendix A.4 from {\rm\cite{bobrova2022classification}},
    
    \item[\rm{ii)}] 
    Appendix A.4 from {\rm\cite{bobrova2022non}},
    
    \item[\rm{iii)}] 
    Appendix \ref{sec:sysintlistP3} {\rm(}up to the transposition~$\tau${\rm)}.
    \qed
\end{enumerate}
\end{itemize}
\end{proposition}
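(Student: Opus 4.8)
The plan is to repeat, for the \PPainleve-$3^{\prime}$ Hamiltonian \eqref{eq:ham3}, the scheme already used for \PPainleve-2 in Subsections \ref{sec:P2int}--\ref{Sec202} and for \PPainleve-5, \PPainleve-4 in Propositions \ref{thm:intP5_nc}, \ref{thm:sysP4_nc} (with Lemma \ref{lemma1} as the structural prototype for the integral part).

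For part a), I would first write the most general non-commutative polynomial $S_1$ reducing to \eqref{eq:ham3} at $m=1$. Modulo commutators (which are invisible under the trace) the only monomial of \eqref{eq:ham3} with two inequivalent non-abelian lifts is $u^2v^2$, so
\begin{equation}
    S_1 = \alpha\, u^2 v^2 + (1-\alpha)\, u v u v + \kappa_2\, u^2 v + \kappa_1\, u v + \kappa_3\, u + \kappa_4\, z\, v,
\end{equation}
with a single free parameter $\alpha$, together with an analogous but longer ansatz for $S_2$, built from a basis, in each multidegree, of the non-abelian lifts of the monomials of $H^2$ modulo commutators. Then I would impose \eqref{eq:intcond_nc} with $i=1,2$ on \eqref{eq:sysP3_nc}: each condition becomes a finite list of linear equations for the free coefficients of $S_1$, $S_2$ and for $a_i$, $b_i$, $c_1$. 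Solving this linear system, one finds that it is consistent precisely when $b_1=-a_1$ and $b_2=-a_2$, and that under these relations explicit $S_1$, $S_2$ do exist; conversely, since the system determines the $b_i$ uniquely in terms of the $a_i$, the existence of such first integrals forces these two relations. This half is routine finite-dimensional linear algebra over the space of trace monomials.

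For part b), I would fix $b_1=-a_1$, $b_2=-a_2$ from a), so that the auxiliary system \eqref{eq:sysP3_nc} now depends only on $a_1$, $a_2$, $c_1$ and the $\kappa_i$, and then look for the non-abelian analogue of the degree-$7$ flow \eqref{scsymgen} with $k=1$. I would start from the general ansatz for a symmetry $(u_{t_1},v_{t_1})$ of degree $7$, with the appropriate powers of $z$, reducing to that flow at $m=1$, and trim it in two stages: first by requiring that the symmetry again admit $\trace S_1$ and $\trace S_2$ as first integrals (item 3 of Assumption \ref{assumpt1}), which cuts the number of undetermined coefficients to a manageable size exactly as the block of linear relations does in Subsection \ref{Sec202}; then by imposing the commutativity conditions $u_{t\,t_1}=u_{t_1\,t}$, $v_{t\,t_1}=v_{t_1\,t}$. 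The latter is an overdetermined polynomial system in the surviving symmetry coefficients together with $a_1,a_2,c_1$, which I would solve by computer algebra, as done for Theorem \ref{thm:sysP6_nc}. The output is a finite set of admissible triples $(a_1,a_2,c_1)$; matching the corresponding systems against the tables of \cite{bobrova2022classification} and \cite{bobrova2022non} and placing the genuinely new ones in Appendix \ref{sec:sysintlistP3} (up to the transposition $\tau$) yields the three lists in the statement.

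The main obstacle is, as in the \PPainleve-6 case, purely computational: controlling the large number of undetermined coefficients in the symmetry ansatz, handling the size of the commutativity system, and --- the subtler point --- verifying that its solution set is \emph{exactly} exhausted, up to $\tau$, by the union of the three appendix lists, with no spurious or missing systems.
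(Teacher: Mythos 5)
Your proposal follows essentially the same route as the paper, which gives no separate proof for this proposition but relies on the scheme of Subsections \ref{sec:P2int}--\ref{Sec202} and Section \ref{P6}: non-abelianize $H$ and $H^2$ via a trace-monomial ansatz modulo commutators to pin down $b_1=-a_1$, $b_2=-a_2$, then trim the degree-7 symmetry ansatz by the integrals and solve the resulting compatibility system by computer algebra. Your observation that $u^2v^2$ is the only monomial of \eqref{eq:ham3} with two inequivalent cyclic lifts, and hence $S_1$ carries a single free parameter, is correct and consistent with the paper's treatment.
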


\subsection{Systems of \texorpdfstring{$\PII$}{P2} type}
These systems were considered in Subsection \ref{sec:P2_case} in details. There are three systems non-equivalent with respect to the transposition, which correspond to results of the paper \cite{Adler_Sokolov_2020_1} (see also \cite{Retakh_Rubtsov_2010}).

\subsection{\texorpdfstring{$\PI$}{P1} type system}\label{P1}
In the scalar case, an autonomous system has the form  
\begin{align}
    \label{eq:sysP1}
    \left\{
    \begin{array}{lcl}
         \dfrac{du}{dt}
         &=& v,
         \\[3mm]
         \dfrac{dv}{dt}
         &=& 6 u^2 + z.
    \end{array}
    \right.
\end{align}
For a non-abelianization of this system we can just replace the variables $u$ and $v$ with non-commutative ones. It turns out that this analog admits Assumption \ref{assumpt1}.
Probably, the corresponding matrix equation \PPainleve-1 first time appeared in the paper \cite{Balandin_Sokolov_1998}.

\section{Non-abelian \Painleve systems}\label{pensyst}

Formally, the restoring of a \Painleve system of the form \eqref{eq:fullansatz} corresponding to any autonomous system \eqref{eq:noz} described in Sections \ref{P6} and \ref{P5-P1} is very simple. We have to replace the time $t$ with $z$ and choose $f(z)=z (z-1)$ in the $\PVI$ case, $f(z)=z$ for $\PV$ and $\PIII$ systems, and $f(z)=1$ otherwise.

The question arises whether these systems are integrable. It is not known whether all of them satisfy the \PPainleve \, property. However,
\begin{itemize}
    \item all known matrix \PPainleve \, systems can be obtained  in this way;
    \item all these systems possess an isomonodromic Lax representation;
    \item systems of different types are related by limiting transitions in the same way as in the scalar case and can be obtained by degenerations of the $\PVI$ systems.
\end{itemize}

\subsection{Transformation groups and their orbits}\label{Sec5.1}

\subsubsection{\texorpdfstring{$\PVI$}{P6} case} 
\label{sec:trgroup_P6}
The scalar \PPainleve-6 system admits the linear B\"acklund transformations  
\begin{align}
    \label{eq:scalP6sym1}
    &&
    r_1:&
    &
    \brackets{
    z, u, v
    }
    &\mapsto \brackets{
    1 - z, \, 
    1 - u, \,
    - v
    },
    &&&
    r_2:&
    &
    \brackets{
    z, u, v
    }
    &\mapsto \brackets{
    z^{-1}, \,
    z^{-1} u, \,
    z v
    },
    &&
\end{align}
which change the parameters as 
\begin{align}
    \label{eq:scalP6sym1k}
    &\begin{aligned}
    r_1:&
    &
    \brackets{
    \kappa_1, 
    \kappa_2, 
    \kappa_3, 
    \kappa_4
    }
    &\mapsto \brackets{
    \kappa_1, \,
    2 \kappa_1 - \kappa_2 - \kappa_4, \,
    \kappa_3, \,
    \kappa_4
    },
    \end{aligned}
    \\[2mm]
    \label{eq:scalP6sym2k}
    &\begin{aligned}
    r_2:&
    &
    \brackets{
    \kappa_1, 
    \kappa_2, 
    \kappa_3, 
    \kappa_4
    }
    &\mapsto \brackets{
    \kappa_1, \,
    \kappa_4 - 1, \,
    \kappa_3, \,
    \kappa_2 + 1
    }.
    \end{aligned}
\end{align}
These involutions generate a group isomorphic to $S_3$. 

In the non-abelian case the transposition  $\tau$ together with the above transformations act on the set of integrable non-abelian \PPainleve-6 systems.  The transposition $\tau$ changes the parameters in \eqref{eq:P6sys_rhs1}, \eqref{eq:P6sys_rhs2}  as follows:
\begin{multline}
    \tau \brackets{
    a_1, \,
    a_2, \,
    a_3, \,
    c_1, \,
    c_2, \,
    e_1, \,
    f_1, \,
    f_2, \,
    h_1
    }
    \\
    = \brackets{
    2 - a_1 - a_2 - a_3, \,
    a_3, \,
    a_2, \,
    c_2, \,
    c_1, \,
    2 \kappa_1 - e_1, \,
    f_2, \,
    f_1, \,
    2 - h_1
    }.
\end{multline}
Systems \ref{eq:P6_1} -- \ref{eq:P6_16} of  the $\PVI$ type  (see  Appendix \ref{sec:sysintlistP6}) are non-equivalent with respect to  $\tau$.  The $S_3$-action on these systems is given by:
\begin{align}
    r_1 {\text{\eqref{eq:P6_1}}}
    &= \text{\ref{eq:P6_1}},
    &
    r_2 {\text{\eqref{eq:P6_1}}}
    &= \text{\ref{eq:P6_1}},
    &
    r_1 {\text{\eqref{eq:P6_2}}}
    &= \text{\ref{eq:P6_7}},
    &
    r_2 {\text{\eqref{eq:P6_2}}}
    &= \text{\ref{eq:P6_2}},
    \\[2mm]
    r_1 {\text{\eqref{eq:P6_4}}}
    &= \text{\ref{eq:P6_4}},
    &
    r_2 {\text{\eqref{eq:P6_4}}}
    &= \text{\ref{eq:P6_7}},
    &
    r_1 {\text{\eqref{eq:P6_7}}}
    &= \text{\ref{eq:P6_2}},
    &
    r_2 {\text{\eqref{eq:P6_7}}}
    &= \text{\ref{eq:P6_4}},
    \\[2mm]
    r_1 {\text{\eqref{eq:P6_8}}}
    &= \text{\ref{eq:P6_8}},
    &
    r_2 {\text{\eqref{eq:P6_8}}}
    &= \text{\ref{eq:P6_8}},
    &
    r_1 {\text{\eqref{eq:P6_17}}}
    &= \text{\ref{eq:P6_16}},
    &
    r_2 {\text{\eqref{eq:P6_17}}}
    &= \text{\ref{eq:P6_14}},
    \\[2mm]
    r_1 {\text{\eqref{eq:P6_14}}}
    &= \text{\ref{eq:P6_14}},
    &
    r_2 {\text{\eqref{eq:P6_14}}}
    &= \text{\ref{eq:P6_17}},
    &
    r_1 {\text{\eqref{eq:P6_16}}}
    &= \text{\ref{eq:P6_17}},
    &
    r_2 {\text{\eqref{eq:P6_16}}}
    &= \text{\ref{eq:P6_16}}.
\end{align}
Transformations $r_1$, $r_2$ taking together with the transposition $\tau$ generate a group isomorphic to $S_3 \times \mathbb{Z}_2$. The action of this group has the following four orbits:
\begin{align*}
    \text{\bf Orbit 1} 
    &= \left\{ 
        \text{\ref{eq:P6_1}}, \,
        \tau\text{\eqref{eq:P6_1}}
    \right\},
    &
    \text{\bf Orbit 2} 
    &= \left\{ 
        \text{\ref{eq:P6_2}}, \,
        \tau\text{\eqref{eq:P6_2}}, \,
        \text{\ref{eq:P6_4}}, \,
        \tau\text{\eqref{eq:P6_4}}, \,
        \text{\ref{eq:P6_7}}, \,
        \tau\text{\eqref{eq:P6_7}}
    \right\},
    \\[2mm]
    \text{\bf Orbit 3} 
    &= \left\{ 
        \text{\ref{eq:P6_8}}, \,
        \tau\text{\eqref{eq:P6_8}}
    \right\},
    &
    \text{\bf Orbit 4} 
    &= \left\{ 
        \text{\ref{eq:P6_17}}, \,
        \tau\text{\eqref{eq:P6_17}}, \,
        \text{\ref{eq:P6_14}}, \,
        \tau\text{\eqref{eq:P6_14}}, \,
        \text{\ref{eq:P6_16}}, \,
        \tau\text{\eqref{eq:P6_16}}
    \right\}.
\end{align*}

Besides four non-equivalent new systems corresponding to the orbits, we have five non-equivalent systems with the generalized Okamoto integral found in \cite{bobrova2022non} and one Hamiltonian non-abelian system \cite{Kawakami_2015,bobrova2022classification}.

\subsubsection{\texorpdfstring{$\PV$}{P5} case} 
The transposition $\tau$ acts on the parameters of system \eqref{eq:sysP5_nc} as
\begin{align}
    \tau \brackets{
    a_1, \,
    a_2, \,
    a_3, \,
    c_1, \,
    c_2, \,
    e_1, \,
    f_1
    }
    &= \brackets{
    2 - a_1 - a_2 - a_3, \,
    a_3, \,
    a_2, \,
    c_2, \,
    c_1, \,
    2 - e_1, \,
    2 \kappa_1 - f_1
    }.
\end{align}
Its action on the systems from Appendix \ref{sec:sysintlistP5} forms four orbits:
\begin{align*}
    \text{\bf Orbit 1} 
    &= \left\{ 
        \text{\ref{eq:P5_1}}, \,
        \tau\text{\eqref{eq:P5_1}}
    \right\},
    &
    \text{\bf Orbit 2} 
    &= \left\{ 
        \text{\ref{eq:P5_2}}, \,
        \tau\text{\eqref{eq:P5_2}}
    \right\},
    \\[2mm]
    \text{\bf Orbit 3} 
    &= \left\{ 
        \text{\ref{eq:P5_5}}, \,
        \tau\text{\eqref{eq:P5_5}}
    \right\},
    &
    \text{\bf Orbit 4} 
    &= \left\{ 
        \text{\ref{eq:P5_9}}, \,
        \tau{\text{\eqref{eq:P5_9}}}
    \right\}.
\end{align*}
Thus we have 5 non-equivalent systems obtained in \cite{bobrova2022non}, one Hamiltonian system \cite{Kawakami_2015,bobrova2022classification} and 4 new systems found in this paper.

\subsubsection{\texorpdfstring{$\PIV$}{P4} case} 
The transposition $\tau$ and the involutions
\begin{align}
    \label{eq:s1s2P4_nc}
    s_1 \brackets{
    u, \, v, \, z
    }
    &= \brackets{
    i v, \, i u, \, - i z
    },
    &
    s_2 \brackets{
    u, \, v, \, z
    }
    &= \brackets{
    - i u, \, i (v - u - 2 z), \, - i z
    }
\end{align}
preserve a class of systems of the form \eqref{eq:sysP4_nc}, changing the parameters as
\begin{align}
    \label{eq:symP4_nc}
    \tau \brackets{
    \alpha, \, \beta
    }
    &= \brackets{
    - 2 - \alpha, \, - 2 - \beta
    },
    &
    s_1 \brackets{
    \alpha, \, \beta
    }
    &= \brackets{
    \beta, \, \alpha
    },
    &
    s_2 \brackets{
    \alpha, \, \beta
    }
    &= \brackets{
    \alpha, \, - \alpha - \beta - 3
    }. \,\,
\end{align}
There are three systems non-equivalent with respect to the group generated by $\tau$, $s_1$, and $s_2$. One of them, Hamiltonian, was found in \cite{Kawakami_2015}, the second has the Okamoto integral \cite{bobrova2022non}, and the third one is presented in Appendix \ref{sec:sysintlistP4}. All of them also arose in the paper \cite{Bobrova_Sokolov_2022}, where the \PPainleve-Kovalevskaya test was used for classification (see also \cite{adler2020}).

\subsubsection{\texorpdfstring{$\PIII$}{P3} case} 
The transposition $\tau$ acts on the parameters of system \eqref{eq:sysP3_nc} in the following way
\begin{align}
    \tau \brackets{
    a_1, \,
    a_2, \,
    c_1
    }
    &= \brackets{
    a_2, \,
    a_1, \,
    - c_1 - 2 \kappa_2
    }.
\end{align}
Appendix \ref{sec:sysintlistP3} contains two systems non-equivalent with respect to $\tau$.  Four systems with the Okamoto integral can be found in \cite{bobrova2022classification}. For one more, Hamiltonian system, see \cite{Kawakami_2015,bobrova2022classification}.

\subsubsection{\texorpdfstring{$\PII$}{P2} case}
There are three systems non-equivalent with respect to the transposition, which correspond to results of the paper \cite{Adler_Sokolov_2020_1} (see also \cite{Retakh_Rubtsov_2010}).

\subsubsection{\texorpdfstring{$\PI$}{P1} case.} 

There is only one matrix $\PI$ system (see section \ref{P1}).
Being Hamiltonian, it is invariant with respect to the transposition $\tau$.

\subsection{Isomonodromic representations}
\label{sec:isompairs}

All scalar \Painleve systems admit an isomonodromic representation \cite{garnier1912equations, jimbo1981monodromy} of the form \eqref{eq:zerocurvcond}. Note that for commuting variables $u$ and $v$ the relation \eqref{eq:zerocurvcond} admits shifts of the form
\begin{equation}\label{BBB} 
    \mathbf{B} 
    \mapsto \mathbf{B}
    + q(z,u,v)\, \mathbf{I},
\end{equation}
where $q$ is an arbitrary polynomial in $u$ and $v$, while in the non-commutative case such transformations are not allowed.  

Using the non-abelianization procedure  proposed in \cite{Bobrova_Sokolov_2022},  we generalize well-known isomonodromic pairs for the scalar $\PVI - \PI$ systems to the non-abelian case. As a result, we obtain Lax pairs not only for specific \PPainleve-type systems with auxiliary autonomous systems listed in Sections \ref{P6} and \ref{P5-P1}, but for whole classes of systems satisfying the property 3 in Assumption \ref{assumpt1} (see Proposition  \ref{thm:intP6_nc} and item a) in Propositions \ref{thm:intP5_nc} -- \ref{thm:intP3_nc}).

The non-abelianization procedure is based on the observation that most of Lax pairs for scalar \PPainleve\,type systems are homogeneous in the following sense. If we represent $\mathbf{A}$ and $\mathbf{B}$ as rational functions in $\lambda$, $z$, and $\kappa_i$ with coefficients being $2\times 2$ matrices polynomial in $u$ and $v$, then these coefficients turn out to be homogeneous for proper weights of $u$ and $v$ (see the next section for details).
 
\subsubsection{Systems of \texorpdfstring{$\PVI$}{P6} type}\label{Sec5.2.1}

The Lax representation \eqref{eq:zerocurvcond} for the commutative $\PVI$ system  \eqref{eq:scalP6sys} found in~\cite{jimbo1981monodromy} is defined by   
\begin{align} \label{eq:matABform}
    \mathbf{A} (z, \lambda)
    &= \dfrac{A_0 (z)}{\lambda}
    + \dfrac{A_1 (z)}{\lambda - 1}
    + \dfrac{A_2 (z)}{\lambda - z},
    &
    \mathbf{B} (z, \lambda)
    &= B_{0} (z)
    - \dfrac{A_2 (z)}{\lambda - z},
\end{align}
where  the matrices $A_0 (z)$, $A_1 (z)$, $A_2 (z)$, and $B_{0} (z)$ are equivalent to
\begin{gather} 
    \notag
    \begin{aligned}
    A_0
    &= 
    \begin{pmatrix}
    - 1
    - \kappa_1
    + \kappa_4
    & 
    u z^{-1} - 1
    \\[0.9mm]
    0 & 0
    \end{pmatrix},
    &&&
    A_1
    &= 
    \begin{pmatrix}
    - u v + \kappa_1
    & 
    1 
    \\[0.9mm]
    - u^2 v^2 
    + \kappa_1 u v 
    + \kappa_3
    & 
    u v
    \end{pmatrix},
    \end{aligned}
    \\[2mm]
    \label{eq:scalLaxpair}
    A_2
    = 
    \begin{pmatrix}
    u v 
    + (\kappa_1 - \kappa_2 - \kappa_4) 
    & 
    - u z^{-1}
    \\[0.9mm]
    z u v^2 
    + (\kappa_1 - \kappa_2 - \kappa_4)  z v 
    & 
    - u v
    \end{pmatrix},
    \\[2mm]
    \notag
    B_{0}
    = 
    \begin{pmatrix}
    (z (z - 1))^{-1} \brackets{
    2 u^2 v - \kappa_1 u 
    - z \brackets{
    2 u v
    + (\kappa_1 - \kappa_2 - \kappa_4)
    }
    }
    & 0
    \\[0.9mm]
    - u v^2 - (\kappa_1 - \kappa_2 - \kappa_4)  v
    & 0
    \end{pmatrix}.
\end{gather}

If we assign the weights $w(u)=w(v)=1$ to the variables, then any matrix coefficient $Q(u,v)$ has homogeneous entries $q_{ij}$ such that $w(q_{11})=w(q_{22})=w(q_{12})+2=w(q_{21})-2.$

Under the non-abelianization we preserve the same structure of $\mathbf{A}$ and $\mathbf{B}$ in $\lambda$, $z$, and $\kappa_i$ and the homogeneity property. Besides, we require the in the scalar limit the pair coincides with the described above up to a transformation of the form \eqref{BBB}.

\begin{proposition}
\label{thm:ncLax_genP6}
Any non-abelian system of the $\PVI$ type described in Proposition {\rm\ref{thm:intP6_nc}}, is equivalent to the zero-curvature condition \eqref{eq:zerocurvcond} with matrices $\mathbf{A} (z, \lambda)$ and $\mathbf{B} (z, \lambda)$ of the form \eqref{eq:matABform}, where
    \begin{gather} 
    \notag
    \begin{aligned}
    A_0
    &= 
    \begin{pmatrix}
    - 1
    - \kappa_1
    + \kappa_4
    & 
    u z^{-1} - 1
    \\[0.9mm]
    0 & 0
    \end{pmatrix},
    &&&
    A_1
    &= 
    \begin{pmatrix}
    - v u + \kappa_1
    & 
    1 
    \\[0.9mm]
    - v u v u 
    + \kappa_1 v u 
    + \kappa_3
    & 
    v u
    \end{pmatrix},
    \end{aligned}
    \\[2mm]
    \label{eq:Laxpair_P6_nc}
    A_2
    = 
    \begin{pmatrix}
    v u 
    + (\kappa_1 - \kappa_2 - \kappa_4) 
    & 
    - u z^{-1}
    \\[0.9mm]
    z  \, v^2 u 
    + (\kappa_1 - \kappa_2 - \kappa_4) z v 
    & 
    - v u
    \end{pmatrix},
    \\[2mm]
    \notag
    \begin{aligned}
    B_{0}
    = 
    \begin{pmatrix}
    (z (z - 1))^{-1} \brackets{
    u v u + v u^2 
    - \kappa_1 u 
    - z \brackets{
    2 v u
    + (\kappa_1 - \kappa_2 - \kappa_4)
    }
    }
    & 0
    \\[0.9mm]
    - v^2 u - (\kappa_1 - \kappa_2 - \kappa_4)  v
    & 0
    \end{pmatrix}
    + q(z, u, v) \, \mathbf{I}
    ,
    \end{aligned}
    \\[1mm]
    \begin{aligned}
    z (z - 1) \, q (z, u, v)
    = 
    - a_1 u^2 v
    - (a_1 + a_2) u v u
    + (1 - a_1 - a_2 - a_3) v u^2
    - c_1 u v 
    + (1 + c_2) v u
    \\
    + \, e_1 u
    + z \brackets{- f_1 u v + (2 + f_2) v u - h_1 v}
    .
    \end{aligned}
\end{gather}
\vspace{-3mm}
\qed
\end{proposition}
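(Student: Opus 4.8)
The plan is to \emph{re-derive} the pair by the non-abelianization procedure, which simultaneously establishes the claimed equivalence and shows that the pair is essentially forced. One starts from the scalar Jimbo--Miwa pair \eqref{eq:matABform}, \eqref{eq:scalLaxpair} \cite{jimbo1981monodromy} and writes the most general ansatz for the matrices $A_0(z)$, $A_1(z)$, $A_2(z)$, $B_0(z)$ compatible with the structural requirements in force: the pole pattern in $\lambda$ of \eqref{eq:matABform} with $\mathbf{B}$ having residue $-A_2$ at $\lambda=z$; rationality in $z$ and affine dependence on the $\kappa_i$; homogeneity of the matrix entries for the weights $w(u)=w(v)=1$; and coincidence with \eqref{eq:scalLaxpair} modulo a shift \eqref{BBB} when $u$ and $v$ commute. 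Concretely, each scalar monomial occurring in \eqref{eq:scalLaxpair} --- for instance $uv$ in $A_1$ and $A_2$, and $u^2v^2$, $uv^2$ in the lower-left entries --- is replaced by a linear combination, with indeterminate scalar coefficients summing to the scalar value, of all non-commutative monomials of the same bidegree; one also introduces an indeterminate shift polynomial $q(z,u,v)$ of the appropriate weight.

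Next one substitutes this ansatz into \eqref{eq:zerocurvcond}, treating $u_z$ and $v_z$ as unknowns, and expands the left-hand side $\mathbf{A}_z-\mathbf{B}_\lambda-[\mathbf{B},\mathbf{A}]$ in $\lambda$. All three terms are rational in $\lambda$ with poles only at $\lambda=0,1,z$, and the left-hand side tends to $0$ as $\lambda\to\infty$. The a priori double pole at $\lambda=z$ cancels automatically: the contribution $+A_2(\lambda-z)^{-2}$ from $\mathbf{A}_z$ is opposite to $-A_2(\lambda-z)^{-2}$ from $-\mathbf{B}_\lambda$, while the $(\lambda-z)^{-2}$ part of $[\mathbf{B},\mathbf{A}]$ vanishes because $[A_2,A_2]=0$. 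Hence the left-hand side is a sum of three simple-pole terms vanishing at infinity, so \eqref{eq:zerocurvcond} is equivalent to the vanishing of the three residues at $\lambda=0$, $\lambda=1$, and $\lambda=z$. Written out entrywise these give a finite system of non-commutative polynomial identities in $u$ and $v$, each linear in $u_z$ and $v_z$; solving the $\lambda=z$ residue for $u_z,v_z$ one reads off candidate right-hand sides, and the $\lambda=0$ and $\lambda=1$ residues impose the remaining relations.

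The conceptual content is that the homogeneity of the scalar pair survives non-abelianization, so the only freedom in the ansatz is the ordering of non-commuting factors; the claim is that this freedom is parametrized exactly by the nine constants $a_1,a_2,a_3,c_1,c_2,e_1,f_1,f_2,h_1$ that parametrize the family in Proposition~\ref{thm:intP6_nc}, that the shift $q$ is then forced to take the form displayed in \eqref{eq:Laxpair_P6_nc}, and that with these choices the residue equations collapse to exactly the non-abelian $\PVI$ system \eqref{eq:P6sys_rhs1}, \eqref{eq:P6sys_rhs2} --- no extra constraints on the parameters, and no missing ones. I expect this last step to be the main obstacle: the non-commutative ansatz carries a large number of indeterminate coefficients (the higher-weight lower-left entries alone spawn many monomials), and the residue relations are matrix identities with ordered non-commutative polynomial entries, so the elimination is lengthy and is best carried out with a computer-algebra system, exactly as for the symmetry computations in Section~\ref{P6}. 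An alternative, purely verificational route --- substitute the explicit $\mathbf{A}$, $\mathbf{B}$ of \eqref{eq:matABform}, \eqref{eq:Laxpair_P6_nc} into \eqref{eq:zerocurvcond}, expand in $\lambda$, and match each residue against \eqref{eq:P6sys_rhs1}, \eqref{eq:P6sys_rhs2} --- is shorter but less illuminating and involves the same commutator computations.
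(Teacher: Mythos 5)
Your proposal follows exactly the route the paper takes: the proposition is established by the non-abelianization procedure of Subsection \ref{sec:isompairs} --- a general ansatz preserving the pole structure in $\lambda$, the rational dependence on $z$ and $\kappa_i$, the homogeneity in $u,v$, and the scalar limit modulo a shift \eqref{BBB} --- followed by expanding \eqref{eq:zerocurvcond} into residues at $\lambda=0,1,z$ (the double pole at $\lambda=z$ cancelling as you note) and matching against \eqref{eq:P6sys_rhs1}, \eqref{eq:P6sys_rhs2} by computer algebra. This is correct and coincides with the paper's (omitted, purely computational) argument.
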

In the scalar case the polynomial $q$ can be removed by shift \eqref{BBB}.

\subsubsection{Systems of \texorpdfstring{$\PV$}{P5} type}\label{Sec5.2.2}

The \PPainleve-5 system associated with the Hamiltonian \eqref{eq:ham5} admits an isomonodromic Lax pair of the form
\begin{align} 
    \label{eq:matABform_P5}
    \mathbf{A} (\lambda, z)
    &= A_0 (z)
    + \frac{A_1 (z)}{\lambda}
    + \frac{A_2 (z)}{\lambda - 1},
    &
    \mathbf{B} (\lambda, z)
    &= B_1 \lambda
    + B_{0} (z),
\end{align}
where \cite{jimbo1981monodromy}
\begin{gather} 
    \notag
    \begin{aligned}
    A_{0}
    &= 
    \begin{pmatrix}
    \kappa_4 z & 0 \\[0.9mm] 0 & 0
    \end{pmatrix},
    &
    A_1
    &= 
    \begin{pmatrix}
    - u v + \kappa_1 & 1
    \\[0.9mm]
    - u^2 v^2
    + \kappa_1 u v
    + \kappa_3
    & u v
    \end{pmatrix},
    &
    A_2
    &= 
    \begin{pmatrix}
    u v - \kappa_2 & - u
    \\[0.9mm]
    u v^2 - \kappa_2 v & - u v
    \end{pmatrix},
    \end{aligned}
    \\[-1mm]
    \label{eq:Laxpair_P50}
    \\[-1mm]
    \notag
    \begin{aligned}
    B_1
    &= 
    \begin{pmatrix}
    \kappa_4 & 0 \\[0.9mm] 0 & 0
    \end{pmatrix},
    &
    B_{0}
    &= z^{-1}
    \begin{pmatrix}
    2 u^2 v
    - 2 u v
    - \kappa_1 u 
    + \kappa_1
    & 
    - u + 1
    \\[0.9mm]
    - u^2 v^2
    + u v^2
    + \kappa_1 u v 
    - \kappa_2 v
    + \kappa_3
    &
    0
    \end{pmatrix}.
    \end{aligned}
\end{gather}
This Lax pair can be generalized to the non-abelian case as follows 
\begin{proposition}
\label{thm:ncLax_genP5}
Let a non-abelian system of the $\PV$ type possesses an auxiliary autonomous system described in the item {\rm{a)}} of Proposition {\rm\ref{thm:intP5_nc}}. 
Then it has an isomonodromic Lax pair of the form \eqref{eq:matABform_P5} with matrices
    \begin{gather} 
    \notag
    \begin{aligned}
    A_0
    &= 
    \begin{pmatrix}
    \kappa_4 z
    & 
    0
    \\[0.9mm]
    0 & 0
    \end{pmatrix},
    &&&
    A_1
    &= 
    \begin{pmatrix}
    - v u + \kappa_1
    & 
    1 
    \\[0.9mm]
    - v u v u 
    + \kappa_1 v u 
    + \kappa_3
    & 
    v u
    \end{pmatrix},
    &&&
    A_2
    =& 
    \begin{pmatrix}
    v u 
    - \kappa_2
    & 
    - u
    \\[0.9mm]
    v^2 u 
    - \kappa_2 v 
    & 
    - v u
    \end{pmatrix},
    \end{aligned}
    \\[2mm]
    \begin{aligned}
    B_1
    &= 
    \begin{pmatrix}
    \kappa_4
    & 
    0
    \\[0.9mm]
    0 & 0
    \end{pmatrix},
    &&&
    B_{0}
    &= z^{-1}
    \begin{pmatrix}
    u v u + v u^2 
    - 2 v u
    - \kappa_1 u + \kappa_1
    & - u + 1
    \\[0.9mm]
    - v u v u
    + v^2 u
    + \kappa_1 v u 
    - \kappa_2 v 
    + \kappa_3
    & 0
    \end{pmatrix}
    + q(z, u, v) \, \mathbf{I}
    ,
    \end{aligned}
    \\
    \begin{aligned}
    z \, q (z, u, v)
    = 
    - a_1 u^2 v
    - (a_1 + a_2) u v u
    + (1 - a_1 - a_2 - a_3) v u^2
    - c_1 u v 
    + (3 + c_2) v u
    \\
    + \, f_1 u
    - e_1 v
    .
    \end{aligned}
\end{gather}
\vspace{-3mm}
\qed
\end{proposition}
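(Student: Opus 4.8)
The plan is to verify directly that the non-abelian zero-curvature condition \eqref{eq:zerocurvcond} with the matrices $\mathbf{A}$ and $\mathbf{B}$ given in the statement reproduces exactly the $\PV$-type system whose right-hand side is \eqref{eq:sysP5_nc} subject to the relations in item a) of Proposition~\ref{thm:intP5_nc}. First I would substitute the rational-in-$\lambda$ ans\"atze \eqref{eq:matABform_P5} into $\mathbf{A}_z - \mathbf{B}_\lambda = [\mathbf{B},\mathbf{A}]$ and compare the principal parts at the three distinguished loci $\lambda = \infty$, $\lambda = 0$, and $\lambda = 1$. Because $\mathbf{A}$ has simple poles at $\lambda = 0,1$ and a constant term $A_0$, while $\mathbf{B} = B_1\lambda + B_0$ is polynomial, the identity decomposes into: the coefficient of $\lambda$ (which forces $[B_1, A_0] = 0$, automatic since both are diagonal with support in the $(1,1)$ entry); the constant-in-$\lambda$ term; the residue at $\lambda = 0$; and the residue at $\lambda = 1$. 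Each of these is a matrix equation whose entries are non-commutative polynomials in $u$, $v$ and the scalars $\kappa_i$, $z$, $q$.

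Next I would read off, entry by entry, what these matrix equations say. The off-diagonal $(1,2)$ and $(2,1)$ entries should be identities once the structure of $A_1, A_2, B_0$ is fixed as written, possibly after using $\partial_z$ of the $z$-dependent scalar prefactors; the diagonal entries are where the dynamics lives. Specifically, the $(1,1)$ and $(2,2)$ components of the residue equations at $\lambda = 0$ and $\lambda = 1$ should deliver expressions for $\tfrac{d}{dt}(vu)$, $\tfrac{d}{dt}(v^2 u)$, $\tfrac{d}{dt}(vuvu)$ etc.\ in terms of $u,v$, and these must be consistent with — indeed, follow from — the postulated evolution \eqref{eq:sysP5_nc} for $\tfrac{du}{dt}$ and $\tfrac{dv}{dt}$ under the constraints $b_1 = -a_1$, $b_2 = -a_1-a_2$, $b_3 = 0$, $b_4 = 1-a_2-a_3$, $b_5 = -2+a_1+a_2$, $d_1 = -c_1$, $d_2 = -c_2$. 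The scalar function $q(z,u,v)$, which in the commutative case is removable by the shift \eqref{BBB}, is fixed precisely so that the diagonal (trace-direction) part of the zero-curvature equation closes; I would determine $q$ by matching the trace of the residue equation at the relevant pole, which is exactly how the stated formula $z\,q = -a_1 u^2 v - (a_1+a_2)uvu + (1-a_1-a_2-a_3)vu^2 - c_1 uv + (3+c_2)vu + f_1 u - e_1 v$ arises.

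The cleanest route is to proceed by non-abelianization in the sense already used in Subsection~\ref{Sec5.2.1}: take the scalar Lax pair \eqref{eq:Laxpair_P50}, lift each scalar monomial $u^kv^\ell$ to a chosen ordering of non-commutative factors (here $uv \mapsto vu$, $u^2v^2 \mapsto vuvu$, $uv^2 \mapsto v^2u$, and in $B_0$ the symmetrization $2u^2v \mapsto uvu + vu^2$), allow the compensating term $q\,\mathbf{I}$ which has no scalar analogue, and then check that the homogeneity weights $w(u) = w(v) = 1$ with $w(q_{11}) = w(q_{22}) = w(q_{12}) + 2 = w(q_{21}) - 2$ are respected — this severely constrains the admissible orderings and makes the final verification a finite, essentially mechanical check. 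The main obstacle is bookkeeping rather than conceptual: the residue equation at $\lambda = 1$ mixes $A_1$ and $A_2$ through $[A_1, A_2]$, so one must expand commutators of $2\times 2$ matrices whose entries are degree-up-to-four non-commutative polynomials and confirm that, after imposing the Proposition~\ref{thm:intP5_nc}a) relations, every monomial cancels except those producing \eqref{eq:sysP5_nc}; the risk is an ordering ambiguity in the lift that spoils one entry, which is resolved by the homogeneity requirement and, if necessary, by comparing against the already-established $\PVI$ pair of Proposition~\ref{thm:ncLax_genP6} under the standard $\PVI \to \PV$ degeneration described in Subsection~\ref{sec:deg}.
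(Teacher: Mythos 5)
Your proposal is correct and follows essentially the same route as the paper: the paper states this proposition without a written proof, relying on the non-abelianization procedure of Subsection \ref{sec:isompairs} (preserving the structure in $\lambda$, $z$, $\kappa_i$ and the homogeneity weights, with the compensating term $q\,\mathbf{I}$ that is removable only in the scalar case) followed by a direct, mechanical verification of the zero-curvature condition via its decomposition into the $\lambda$-coefficient, constant term, and residues at $\lambda=0$ and $\lambda=1$. Your plan for fixing $q$ from the diagonal part and cross-checking via the $\PVI\to\PV$ degeneration is exactly the intended verification.
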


\subsubsection{Systems of \texorpdfstring{$\PIV$}{P4} type}\label{P4}

The Hamiltonian \eqref{eq:ham4} leads to the \PPainleve-4 system. Its well-known \cite{jimbo1981monodromy} that this system is equivalent to the zero-curvature condition \eqref{eq:zerocurvcond} with
\begin{align} 
\label{eq:matABform_P4}
    \mathbf{A} (\lambda, z)
    &= A_1 \lambda 
    + A_{0} (z)
    + A_{-1} (z) \lambda^{-1},
    &
    \mathbf{B} (\lambda, z)
    &= B_1 \lambda 
    + B_{0} (z)
    ,
\end{align}
where the matrices $A_1$, $A_0 (z)$, $A_{-1} (z)$, $B_1$, and $B_0 (z)$ are given by
\begin{equation} 
\label{eq:Laxpair_P4}
    \begin{gathered}
        \begin{aligned}
        A_1
        &= 
        \begin{pmatrix}
        - 2 & 0 \\[0.9mm] 0 & 0
        \end{pmatrix},
        &&&
        A_0
        &= 
        \begin{pmatrix}
        - 2 z & 1 \\[0.9mm] 
        u v + \kappa_3 & 0
        \end{pmatrix},
        &&&
        A_{-1}    
        &= \tfrac12
        \begin{pmatrix}
        u v + \kappa_2 & - u \\[0.9mm]
        u v^2 + \kappa_2 v & - u v
        \end{pmatrix},
        \end{aligned}
        \\[2mm]
        \begin{aligned}
        B_1
        &= 
        \begin{pmatrix}
        - 2 & 0 \\[0.9mm] 0 & 0
        \end{pmatrix},
        &&&
        B_0
        &= 
        \begin{pmatrix}
        - u - 2 z & 1 \\[0.9mm]
        u v + \kappa_3 & 0
        \end{pmatrix}.
        \end{aligned}
    \end{gathered}
\end{equation}
According to the item a) from Proposition \ref{thm:sysP4_nc}, non-abelian systems of the $\PIV$ type associated with \eqref{eq:sysP4_nc} admit the non-abelianization of the first integrals. One can verify that they also have an isomonodromic representation.
\begin{proposition}
\label{thm:ncLax_genP4}
For any $\alpha$, $\beta \in \mathbb{C}$, a non-abelian $\PIV$ type system defined by the autonomous system \eqref{eq:sysP4_nc} has a Lax pair of the form \eqref{eq:matABform_P4}, where 
    \begin{gather} 
    \notag
    \begin{aligned}
        A_1
        &= 
        \begin{pmatrix}
        - 2 & 0 \\[0.9mm] 0 & 0
        \end{pmatrix},
        &&&
        A_0
        &= 
        \begin{pmatrix}
        - 2 z & 1 \\[0.9mm] 
        v u + \kappa_3 & 0
        \end{pmatrix},
        &&&
        A_{-1}    
        &= \tfrac12
        \begin{pmatrix}
        v u + \kappa_2 & - u \\[0.9mm]
        v^2 u + \kappa_2 v & - v u
        \end{pmatrix},
    \end{aligned}
    \\[2mm]
    \begin{aligned}
    B_1
    &= 
    \begin{pmatrix}
        - 2 & 0 \\[0.9mm] 0 & 0
    \end{pmatrix},
    &&&
    B_{0}
    &=
    \begin{pmatrix}
    - u - 2 z & 1 
    \\[0.9mm]
    v u + \kappa_3 & 0
    \end{pmatrix}
    + q(z, u, v) \, \mathbf{I}
    ,
    &&&
    q (z, u, v)
    &= - \beta u - (\alpha + 2) v
    .
    \end{aligned}
\end{gather}
\vspace{-3mm}
\qed
\end{proposition}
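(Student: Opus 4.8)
The statement is a verification: one is handed explicit $\mathbf A$, $\mathbf B$ and must confirm that \eqref{eq:zerocurvcond} holds as an identity exactly on the solutions of the $\PIV$ system. The plan is to substitute $\mathbf A=A_1\lambda+A_0+A_{-1}\lambda^{-1}$ and $\mathbf B=B_1\lambda+B_0$ into \eqref{eq:zerocurvcond}. Since these are Laurent polynomials in $\lambda$ of degrees $1$ and $-1$, the identity is equivalent to the vanishing of its coefficients at $\lambda^{2},\lambda^{1},\lambda^{0},\lambda^{-1}$, i.e. to four $2\times2$ matrix equations, which I would treat one by one. Here the $z$-derivative is total, so $u_z$ and $v_z$ are to be replaced by the right-hand sides of the non-autonomous system obtained from \eqref{eq:sysP4_nc} by the substitution $t\mapsto z$ (with $f(z)=1$); conversely, reading the same computation backwards shows that \eqref{eq:zerocurvcond} forces that system, which gives the asserted equivalence.

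First I would dispose of the top two powers. Because $A_1=B_1=\diag(-2,0)$, the coefficient of $\lambda^{2}$ is $[B_1,A_1]=0$, and the coefficient of $\lambda^{1}$ is $[B_1,A_0]+[B_0,A_1]=[B_1,A_0-B_0]$; from the formulas one has $A_0-B_0=\diag(u,0)-q\,\mathbf I$, and a diagonal matrix with constant scalar entries commutes both with $\diag(u,0)$ and with the scalar matrix $q\,\mathbf I$, so this coefficient vanishes identically for every $\alpha,\beta$. What remains are the $\lambda^{-1}$ equation $(A_{-1})_z=[B_0,A_{-1}]$ and the $\lambda^{0}$ equation $(A_0)_z-B_1=[B_1,A_{-1}]+[B_0,A_0]$.

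Next I would expand the $\lambda^{-1}$ equation entrywise, writing $B_0=\widehat B_0+q\,\mathbf I$ with $\widehat B_0$ the $q$-free part, so that $[B_0,A_{-1}]=[\widehat B_0,A_{-1}]+[q\,\mathbf I,A_{-1}]$ and the $(i,j)$-entry of the second term is the commutator $[q,(A_{-1})_{ij}]$. Since $(A_{-1})_{12}$ is a multiple of $u$ and $q=-\beta u-(\alpha+2)v$, these commutators produce precisely multiples of $[u,v]$, which is the mechanism through which the non-commutative corrections $\alpha[u,v]$ and $\beta[v,u]$ appear. Comparing the $(1,2)$ entries yields the $u$-component of \eqref{eq:sysP4_nc} directly after reorganising the $uv$ and $vu$ products; comparing the $(1,1)$ (equivalently $(2,2)$) entry and the $(2,1)$ entry, once $u_z$ has been substituted, yields the $v$-component. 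The $\lambda^{0}$ equation then has to be shown to be a consequence of the two equations already obtained together with the explicit $z$-dependence of $A_0$. In the scalar limit the term $q\,\mathbf I$ is absorbed by the shift \eqref{BBB}, and one recovers the Jimbo--Miwa pair \eqref{eq:Laxpair_P4}.

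The main obstacle is not conceptual but the non-commutative bookkeeping: all products of $u$ and $v$ must be kept in the correct order, and one must verify that the off-diagonal components of the $\lambda^{-1}$ and $\lambda^{0}$ equations impose no constraint beyond \eqref{eq:sysP4_nc} — equivalently, that the specific orderings chosen in $A_{-1},A_0,B_0$, together with the single scalar correction $q$, make the whole system of four matrix identities simultaneously consistent. This compatibility is exactly what pins down the coefficient of $v$ in $q$ to be $-(\alpha+2)$ and that of $u$ to be $-\beta$, so carrying out the verification also accounts for the otherwise unmotivated form of $q(z,u,v)$.
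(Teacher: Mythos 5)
Your proposal is correct and is precisely the direct verification the paper leaves implicit (the proposition is stated with the proof omitted): expanding \eqref{eq:zerocurvcond} in powers of $\lambda$, the $\lambda^2$ and $\lambda^1$ coefficients vanish identically, the $(1,2)$ entry of the $\lambda^{-1}$ equation reproduces the $u$-equation of \eqref{eq:sysP4_nc} with the $\alpha[u,v]$ term coming from $[q,u]$, the remaining entries reproduce the $v$-equation, and the $\lambda^0$ equation reduces to the $(1,1)$ entry of the $\lambda^{-1}$ equation. The bookkeeping you describe checks out as stated.
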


\subsubsection{\texorpdfstring{$\PIII$}{P3} type systems} 
The scalar $\PIIIpr$ system related to Hamiltonian \eqref{eq:ham3} has (see \cite{jimbo1981monodromy}) the following isomonodromic Lax pair
\begin{align} \label{eq:matABform_P3D6'_}
    \mathbf{A} (\lambda, z)
    &= A_0 (z)
    + A_{-1} (z) \lambda^{-1} 
    + A_{-2} (z) \lambda^{-2},
    &
    \mathbf{B} (\lambda, z)
    &= B_1 \lambda + B_0 (z),
\end{align}
where the matrices $A_0 (z)$, $A_{-1} (z)$, $A_{-2} (z)$, $B_1$, and $B_0 (z)$ are equivalent to
\begin{gather}
    \notag
    \begin{aligned}
    A_0
    &= 
    \begin{pmatrix}
    \kappa_4 z & 0 \\[0.9mm]
    0 & 0
    \end{pmatrix},
    &
    A_{-1}
    &= -
    \begin{pmatrix}
    \kappa_1
    & 
    u
    \\[0.9mm]
    u v^2
    + \kappa_1 v
    + \kappa_2 u v 
    + \kappa_3
    & 
    0
    \end{pmatrix},
    &
    A_{-2}
    &= 
    \begin{pmatrix}
    v + \kappa_2
    & 
    - 1
    \\[0.9mm]
    v^2 + \kappa_2 v
    &
    - v
    \end{pmatrix},
    \end{aligned}
    \\[-1mm]
    \label{eq:Laxpair_P3D6}
    \\[-1mm]
    \notag
    \begin{aligned}
    B_1
    &= 
    \begin{pmatrix}
    \kappa_4 & 0 
    \\[0.9mm]
    0 & 0
    \end{pmatrix}
    ,
    &&&
    B_0
    &= z^{-1}
    \begin{pmatrix}
    2 u v 
    + \kappa_2 u 
    &
    - u
    \\[0.9mm]
    - \brackets{
    u v^2 
    + \kappa_1 v
    + \kappa_2 u v 
    + \kappa_3
    }
    & 
    0
    \end{pmatrix}
    .
    \end{aligned}
\end{gather}
\begin{proposition}
\label{thm:ncLax_genP3}
Suppose that a non-abelian system of the $\PIIIpr$ type has an auxiliary autonomous system given by the item {\rm{a)}} of Proposition {\rm\ref{thm:intP3_nc}}. Then it is equivalent to \eqref{eq:zerocurvcond}, where a Lax pair has the form \eqref{eq:matABform_P3D6'_} with
    \begin{gather} 
    \begin{aligned}
    A_0
    &= 
    \begin{pmatrix}
    \kappa_4 z & 0 \\[0.9mm]
    0 & 0
    \end{pmatrix},
    &
    A_{-1}
    &= -
    \begin{pmatrix}
    \kappa_1
    & 
    u
    \\[0.9mm]
    v u v
    + \kappa_1 v
    + \kappa_2 v u 
    + \kappa_3
    & 
    0
    \end{pmatrix},
    &
    A_{-2}
    &= 
    \begin{pmatrix}
    v + \kappa_2
    & 
    - 1
    \\[0.9mm]
    v^2 + \kappa_2 v
    &
    - v
    \end{pmatrix},
    \end{aligned}
    \\[2mm]
    \notag
    \begin{aligned}
    B_1
    &= 
    \begin{pmatrix}
    \kappa_4 & 0 
    \\[0.9mm]
    0 & 0
    \end{pmatrix},
    &&&
    B_{0}
    &= z^{-1}
    \begin{pmatrix}
    u v + v u
    + \kappa_2 u 
    &
    - u
    \\[0.9mm]
    - \brackets{
    v u v
    + \kappa_1 v
    + \kappa_2 v u
    + \kappa_3
    }
    & 
    0
    \end{pmatrix}
    + q(z, u, v) \, \mathbf{I}
    ,
    \end{aligned}
    \\[2mm]
    \begin{aligned}
    z \, q (z, u, v)
    = - a_1 u v 
    + (a_2 - 1) v u
    + c_1 u
    .
    \end{aligned}
\end{gather}
\vspace{-3mm}
\qed
\end{proposition}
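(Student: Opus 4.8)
The plan is to build the non-abelian Lax pair by non-abelianizing the scalar Jimbo--Miwa pair \eqref{eq:Laxpair_P3D6} in the sense of Subsection~\ref{sec:isompairs}, and then to check the zero-curvature condition \eqref{eq:zerocurvcond} directly. First I would keep the dependence of $\mathbf A$ and $\mathbf B$ on $\lambda$, $z$, and the $\kappa_i$ exactly as in \eqref{eq:matABform_P3D6'_}, \eqref{eq:Laxpair_P3D6}, and replace each scalar monomial in $u,v$ occurring in the matrix coefficients $A_0,A_{-1},A_{-2},B_1,B_0$ by the most general non-commutative polynomial of the same bidegree compatible with the homogeneity rule $w(q_{11})=w(q_{22})=w(q_{12})+2=w(q_{21})-2$ recalled before Proposition~\ref{thm:ncLax_genP6}. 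Thus the low-degree entries (such as $v+\kappa_2$ and $-1$ in $A_{-2}$) carry no ambiguity, while, e.g., the $(2,1)$-entry $uv^2+\kappa_1 v+\kappa_2 uv+\kappa_3$ of $A_{-1}$ is replaced by an arbitrary combination of $uv^2,vuv,v^2u$ plus $\kappa_1 v$ plus a combination of $\kappa_2 uv,\kappa_2 vu$ plus $\kappa_3$, and the diagonal $2uv+\kappa_2 u$ of $B_0$ by a combination of $uv,vu$ plus $\kappa_2 u$. Crucially, in $\mathbf B$ one must allow an extra term $q(z,u,v)\,\mathbf I$, which is removable by \eqref{BBB} only in the commutative case; this term is what carries the asymmetry between $uv$ and $vu$ present in \eqref{eq:sysP3_nc}, in particular the parameter $c_1$.

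Next I would substitute the ansatz into $\mathbf A_z-\mathbf B_\lambda-[\mathbf B,\mathbf A]=0$, computing $\mathbf A_z$ by the chain rule and eliminating the $z$-derivatives of $u$ and $v$ via the $\PIIIpr$ equations of motion together with the relations $b_1=-a_1$, $b_2=-a_2$ from item~a) of Proposition~\ref{thm:intP3_nc}. Multiplying by $\lambda^2$ and collecting the coefficients of $\lambda^0,\lambda^1,\lambda^2,\lambda^3$ entrywise produces a finite list of non-commutative polynomial identities in $u,v$ whose coefficients are linear in the ansatz parameters, in $q$, and in $a_1,a_2,c_1,\kappa_i$. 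The identities coming from the $(1,1)$/$(1,2)$ and from the $(2,1)$ entries at the appropriate power of $\lambda$ must reproduce the right-hand sides of \eqref{eq:sysP3_nc}; matching them pins down the non-commutative orderings displayed in the statement (e.g. $2uv\mapsto uv+vu$ and $uv^2\mapsto vuv$, $\kappa_2 uv\mapsto\kappa_2 vu$ in the relevant entries) and forces $q$ to satisfy $z\,q=-a_1\,uv+(a_2-1)\,vu+c_1\,u$. The remaining identities must then be verified to hold identically, again using $b_1=-a_1$, $b_2=-a_2$.

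I expect the main obstacle to be exactly this last verification, together with the bookkeeping inside $[\mathbf B,\mathbf A]$: one has to check that every off-diagonal entry of the commutator rearranges into precisely the monomial orderings the target system permits, and that the diagonal part of the zero-curvature equation---automatically consistent in the scalar case after the shift \eqref{BBB}---closes in the non-abelian setting with the single choice of $q$ above and with no further constraint on $a_1,a_2,c_1$. A practical way to organize this is to begin from the ``obvious'' candidate obtained by reordering the scalar pair in the canonical way and adding an unknown $q\,\mathbf I$, treating only $q$ and a handful of ordering coefficients as genuine unknowns; the check then reduces to a short finite computation, carried out by hand or by computer algebra as elsewhere in the paper. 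Finally, setting $m=1$ collapses the pair to \eqref{eq:Laxpair_P3D6} up to the shift \eqref{BBB}, which gives the required agreement with the scalar case.
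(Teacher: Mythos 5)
Your proposal matches the paper's approach: the paper gives no written proof for this proposition, but its stated methodology (Subsection \ref{sec:isompairs}) is precisely what you describe — non-abelianize the scalar Jimbo--Miwa pair \eqref{eq:Laxpair_P3D6} preserving the dependence on $\lambda$, $z$, $\kappa_i$ and the homogeneity of the matrix entries, allow the extra non-removable term $q(z,u,v)\,\mathbf I$ in $\mathbf B$, and verify the zero-curvature condition by a direct (computer-algebra) computation using the relations $b_1=-a_1$, $b_2=-a_2$ from Proposition \ref{thm:intP3_nc}~a). Your identification of $q\,\mathbf I$ as the carrier of the $uv$/$vu$ asymmetry and of the parameter $c_1$ is exactly the right observation, and the verification you outline is the intended one.
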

Special cases of $\PIIIpr$ type systems are considered in Appendix \ref{P3'D7}.

\subsubsection{Systems of \texorpdfstring{$\PII$}{P2} type}
The commutative \PPainleve-2 system \eqref{eq:sysP2} possesses an isomonodromic Lax pair of the form~\cite{jimbo1981monodromy} 
\begin{align} \label{eq:matABform_P2}
    \mathbf{A} (\lambda, z)
    &= A_2 \lambda^2 
    + A_1 (z) \lambda 
    + A_0 (z),
    &
    \mathbf{B} (\lambda, z)
    &= B_1 \lambda + B_0 (z),
\end{align}
where the matrices $A_2$, $A_1 (z)$, $A_0(z)$, $B_1$, and $B_0 (z)$ are equivalent to the following:
\begin{equation} 
\label{eq:Laxpair_P2}
    \begin{gathered}
        \begin{aligned}
        A_2
        &= 
        \begin{pmatrix}
        2 & 0 \\[0.9mm] 0 & 0
        \end{pmatrix},
        &&&
        A_1
        &= 
        \begin{pmatrix}
        0 & -2 \\[0.9mm]
        - v & 0
        \end{pmatrix},
        &&&
        A_0
        &= 
        \begin{pmatrix}
        - v + z & - 2 u
        \\[0.9mm]
        u v + \kappa & v
        \end{pmatrix},
        \end{aligned}
        \\[2mm]
        \begin{aligned}
        B_1
        &= 
        \begin{pmatrix}
        1 & 0 \\[0.9mm] 0 & 0
        \end{pmatrix},
        &&&
        B_0
        &= 
        \begin{pmatrix}
        - u & -1 \\[0.9mm]
        - \tfrac12 v & 0
        \end{pmatrix}.
        \end{aligned}
    \end{gathered}
\end{equation}
\begin{proposition}
\label{thm:ncLax_genP2}
Any non-abelian system of the $\PII$ corresponding to autonomous system \eqref{eq:sysP2_nc}, is equivalent to the zero-curvature condition \eqref{eq:zerocurvcond} with matrices $\mathbf{A} (z, \lambda)$ and $\mathbf{B} (z, \lambda)$ of the form \eqref{eq:matABform_P2}, where
    \begin{gather} 
    \notag
    \begin{aligned}
        A_2
        &= 
        \begin{pmatrix}
        2 & 0 \\[0.9mm] 0 & 0
        \end{pmatrix},
        &&&
        A_1
        &= 
        \begin{pmatrix}
        0 & -2 \\[0.9mm]
        - v & 0
        \end{pmatrix},
        &&&
        A_0
        &= 
        \begin{pmatrix}
        - v + z & - 2 u
        \\[0.9mm]
        v u + \kappa & v
        \end{pmatrix},
    \end{aligned}
    \\[2mm]
    \notag
    \begin{aligned}
    B_1
    &= 
    \begin{pmatrix}
    1 & 0 \\[0.9mm] 0 & 0
    \end{pmatrix},
    &&&
    B_{0}
    &= 
    \begin{pmatrix}
    - u
    & - 1
    \\[0.9mm]
    - \tfrac12 v
    & 0
    \end{pmatrix}
    + q(z, u, v) \, \mathbf{I}
    ,
    &&&
    q (z, u, v)
    &= 
    (1 - \beta) u
    .
    \end{aligned}
\end{gather}
\qed
\end{proposition}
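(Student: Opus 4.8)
The plan is to verify directly that the zero-curvature condition \eqref{eq:zerocurvcond} for the matrices displayed in the statement is equivalent to the non-abelian $\PII$ system. By the prescription of Section~\ref{pensyst}, the non-abelian $\PII$ equation associated with the autonomous system \eqref{eq:sysP2_nc} is obtained by unfreezing the independent variable ($t\mapsto z$, $f(z)=1$), i.e.
\begin{align}
u_z = -u^2 + v - \tfrac12 z, \qquad v_z = uv + vu + \beta[v,u] + \kappa.
\end{align}
Since $\mathbf{A}=A_2\lambda^2+A_1(z)\lambda+A_0(z)$ with $A_2$ constant and $\mathbf{B}=B_1\lambda+B_0(z)$ with $B_1$ constant, we have $\mathbf{B}_\lambda=B_1$ and $\mathbf{A}_z=(\partial_zA_1)\lambda+\partial_zA_0$, so collecting powers of $\lambda$ in \eqref{eq:zerocurvcond} yields the four matrix identities
\begin{align}
& [B_1,A_2]=0, \qquad [B_1,A_1]+[B_0,A_2]=0, \\
& \partial_z A_1 = [B_1,A_0]+[B_0,A_1], \qquad \partial_z A_0 - B_1 = [B_0,A_0].
\end{align}

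First I would dispose of the two \emph{structural} identities (coefficients of $\lambda^3$ and $\lambda^2$), which carry no dynamics: the first holds because $A_2$ and $B_1$ are both diagonal, and for the second one checks that $[B_1,A_1]$ and $[B_0,A_2]$ are purely off-diagonal and cancel — here it matters that the $\beta$-dependent part of $B_0$ sits on the diagonal, so the cancellation is $\beta$-independent. Next I would treat the $\lambda^1$ identity: $\partial_zA_1$ has only a $(2,1)$-entry, equal to $-v_z$, while the $(1,1),(1,2),(2,2)$ entries of $[B_1,A_0]+[B_0,A_1]$ vanish identically (the two $-2u$ contributions cancel); the $(2,1)$-entry then gives $-v_z=-(1+\beta)vu-(1-\beta)uv-\kappa$, which is exactly the $v_z$-equation since $(1+\beta)vu+(1-\beta)uv=uv+vu+\beta[v,u]$. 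Finally, for the $\lambda^0$ identity $\partial_zA_0-B_1=[B_0,A_0]$: its $(1,2)$-entry reads $-2u_z=2u^2-2v+z$, giving the $u_z$-equation; its $(1,1)$- and $(2,2)$-entries reproduce the $v_z$-equation already obtained; and once $u_z,v_z$ are substituted, its $(2,1)$-entry becomes an identity (the $(1-\beta)\kappa u$ and $\beta\kappa u$ contributions cancel, and the cubic terms in $u,v$ reorganize correctly). Running the computation backwards shows that the $\PII$ system makes all four identities hold, which establishes the equivalence.

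The argument is elementary, and essentially its only content is careful bookkeeping of non-commutative orderings — so the main obstacle is arithmetic rather than conceptual. The two points I would watch are: (i) that the scalar shift $q=(1-\beta)u$ is precisely the one keeping the $(2,1)$-entry of the $\lambda^0$ identity consistent without over-constraining $\beta$ (a different shift would force extra relations on the parameters); and (ii) that $\beta$ enters \emph{only} through the commutator $\beta[v,u]$ in $v_z$, which requires checking that it does not leak into the structural identities or into the $u_z$-equation. An equivalent but heavier route, in the spirit of the non-abelianization procedure of \cite{Bobrova_Sokolov_2022}, would be to start from the most general ordering-deformation of the scalar pair \eqref{eq:Laxpair_P2} together with an arbitrary shift $q(z,u,v)\,\mathbf{I}$ as in \eqref{BBB}, impose \eqref{eq:zerocurvcond}, and solve the resulting linear system for the ordering coefficients; this shows that the displayed matrices are essentially forced, but the direct verification above is shorter.
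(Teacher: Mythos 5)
Your proposal is correct and matches the paper's (implicit) argument: the proposition is stated with no written proof precisely because it is a direct verification of the zero-curvature condition, coefficient by coefficient in $\lambda$, exactly as you carry out — and your individual claims (the $\lambda^3,\lambda^2$ identities being structural, the $(2,1)$-entry at order $\lambda$ giving $v_z=(1-\beta)uv+(1+\beta)vu+\kappa$, the $(1,2)$-entry at order $\lambda^0$ giving $u_z$, and the $(2,1)$-entry at order $\lambda^0$ closing up with the $\beta$-dependence dropping out of the $\kappa u$ terms) all check out. The only nitpick is that the $(1-\beta)\kappa u$ and $\beta\kappa u$ contributions do not cancel but sum to $\kappa u$, matching the $\kappa u$ coming from $v_z u$ on the left; this is clearly what you meant.
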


\begin{remark}
Here and in Subsection \ref{sec:P2_case}, we considered the Lax pairs of  Jimbo-Miwa  {\rm\cite{jimbo1981monodromy}} and the Flaschka-Newell types for the \PPainleve-2 equation.   It was shown in  {\rm\cite{suleimanov2008quantizations, joshi2009linearization}} that in the scalar case these Lax pairs are related to each other by a  generalized Laplace transform supplemented with a gauge transformation. Formally, assuming the cancellation of terms that arise from the integration by parts, it is easy to verify that this correspondence can be generalized to the non-abelian case.
\end{remark}

\subsubsection{\texorpdfstring{$\PI$}{P1} type system}\label{Sec5.2.6}
The $\PI$ system has the  Lax pair given by
\begin{align} \label{eq:matABform_P1}
    \mathbf{A} (\lambda, z)
    &= A_2 \lambda^2 
    + A_1 (z) \lambda 
    + A_0 (z),
    &
    \mathbf{B} (\lambda, z)
    &= B_1 \lambda + B_0 (z)
\end{align}
with
\begin{equation} 
\label{eq:Laxpair_P1}
    \begin{gathered}
        \begin{aligned}
        A_2
        &= 
        \begin{pmatrix}
        0 & 0 \\[0.9mm] 2 & 0
        \end{pmatrix},
        &&&
        A_1
        &= 
        \begin{pmatrix}
        0 & -2 \\[0.9mm]
        - 2 u & 0
        \end{pmatrix},
        &&&
        A_0
        &= 
        \begin{pmatrix}
        - v & - 2 u
        \\[0.9mm]
        2 u^2 + z & v
        \end{pmatrix},
        \end{aligned}
        \\[2mm]
        \begin{aligned}
        B_1
        &= 
        \begin{pmatrix}
        0 & 0 \\[0.9mm] 1 & 0
        \end{pmatrix},
        &&&
        B_0
        &= 
        \begin{pmatrix}
        0 & -1 \\[0.9mm]
        - 2 u & 0
        \end{pmatrix}
        .
        \end{aligned}
    \end{gathered}
\end{equation}
It can be generalized to the non-abelian case by replacing the variables $u$ and $v$ with non-commutative ones.  

\subsection{Tree of degenerations}
\label{sec:deg}
In \cite{bobrova2022non} limiting transitions for non-abelian \PPainleve-type systems having Okamoto integrals were found. The same formulas with minor changes in constants 
can be applied for systems from the current paper\footnote{In particular,  one should replace  $\kappa_3$ with $\kappa$ for the $\PII$ systems.}.
 
The degeneration trees for $\PVI$ systems \ref{eq:P6_1} -- \ref{eq:P6_16} from Appendix \ref{sec:sysintlistP6} non-equivalent with respect to the transposition $\tau$  are shown schematically in Figures \ref{pic:deg_orbit13}, \ref{pic:deg_orbit2}, and \ref{pic:deg_orbit4}. 
\begin{figure}[H]
    \centering
    \begin{minipage}[l]{0.49\linewidth}
    \begin{figure}[H]
        \hspace{-0.8cm}
        \scalebox{0.93}{\tikzset{every picture/.style={line width=0.75pt}} %set default line width to 0.75pt        

\begin{tikzpicture}[x=0.75pt,y=0.75pt,yscale=-1,xscale=1]
%uncomment if require: \path (0,260); %set diagram left start at 0, and has height of 260

%Straight Lines [id:da46892118468953903] 
\draw    (218.01,129) -- (245.25,129) ;
\draw [shift={(247.25,129)}, rotate = 180] [color={rgb, 255:red, 0; green, 0; blue, 0 }  ][line width=0.75]    (10.93,-3.29) .. controls (6.95,-1.4) and (3.31,-0.3) .. (0,0) .. controls (3.31,0.3) and (6.95,1.4) .. (10.93,3.29)   ;
%Straight Lines [id:da7558508357356156] 
\draw    (288.01,116) -- (315.25,116) ;
\draw [shift={(317.25,116)}, rotate = 180] [color={rgb, 255:red, 0; green, 0; blue, 0 }  ][line width=0.75]    (10.93,-3.29) .. controls (6.95,-1.4) and (3.31,-0.3) .. (0,0) .. controls (3.31,0.3) and (6.95,1.4) .. (10.93,3.29)   ;
%Straight Lines [id:da29220369712102556] 
\draw    (288.01,141) -- (315.25,141) ;
\draw [shift={(317.25,141)}, rotate = 180] [color={rgb, 255:red, 0; green, 0; blue, 0 }  ][line width=0.75]    (10.93,-3.29) .. controls (6.95,-1.4) and (3.31,-0.3) .. (0,0) .. controls (3.31,0.3) and (6.95,1.4) .. (10.93,3.29)   ;
%Straight Lines [id:da187430813346537] 
\draw    (288.01,102) -- (316.19,89.23) ;
\draw [shift={(318.01,88.4)}, rotate = 155.61] [color={rgb, 255:red, 0; green, 0; blue, 0 }  ][line width=0.75]    (10.93,-3.29) .. controls (6.95,-1.4) and (3.31,-0.3) .. (0,0) .. controls (3.31,0.3) and (6.95,1.4) .. (10.93,3.29)   ;
%Straight Lines [id:da021114195599838403] 
\draw    (288.01,156.96) -- (316.19,173.89) ;
\draw [shift={(317.9,174.93)}, rotate = 211.01] [color={rgb, 255:red, 0; green, 0; blue, 0 }  ][line width=0.75]    (10.93,-3.29) .. controls (6.95,-1.4) and (3.31,-0.3) .. (0,0) .. controls (3.31,0.3) and (6.95,1.4) .. (10.93,3.29)   ;
%Straight Lines [id:da21948815210536177] 
\draw    (359.01,177) -- (386.25,177) ;
\draw [shift={(388.25,177)}, rotate = 180] [color={rgb, 255:red, 0; green, 0; blue, 0 }  ][line width=0.75]    (10.93,-3.29) .. controls (6.95,-1.4) and (3.31,-0.3) .. (0,0) .. controls (3.31,0.3) and (6.95,1.4) .. (10.93,3.29)   ;
%Straight Lines [id:da339835070128927] 
\draw    (359.01,80) -- (386.25,80) ;
\draw [shift={(388.25,80)}, rotate = 180] [color={rgb, 255:red, 0; green, 0; blue, 0 }  ][line width=0.75]    (10.93,-3.29) .. controls (6.95,-1.4) and (3.31,-0.3) .. (0,0) .. controls (3.31,0.3) and (6.95,1.4) .. (10.93,3.29)   ;
%Straight Lines [id:da572960663012647] 
\draw    (359.01,112) -- (387.19,99.23) ;
\draw [shift={(389.01,98.4)}, rotate = 155.61] [color={rgb, 255:red, 0; green, 0; blue, 0 }  ][line width=0.75]    (10.93,-3.29) .. controls (6.95,-1.4) and (3.31,-0.3) .. (0,0) .. controls (3.31,0.3) and (6.95,1.4) .. (10.93,3.29)   ;
%Straight Lines [id:da5835932683194811] 
\draw    (359.01,143.96) -- (387.19,160.89) ;
\draw [shift={(388.9,161.93)}, rotate = 211.01] [color={rgb, 255:red, 0; green, 0; blue, 0 }  ][line width=0.75]    (10.93,-3.29) .. controls (6.95,-1.4) and (3.31,-0.3) .. (0,0) .. controls (3.31,0.3) and (6.95,1.4) .. (10.93,3.29)   ;
%Straight Lines [id:da42931684691427685] 
\draw    (429.01,86) -- (457.19,102.93) ;
\draw [shift={(458.9,103.97)}, rotate = 211.01] [color={rgb, 255:red, 0; green, 0; blue, 0 }  ][line width=0.75]    (10.93,-3.29) .. controls (6.95,-1.4) and (3.31,-0.3) .. (0,0) .. controls (3.31,0.3) and (6.95,1.4) .. (10.93,3.29)   ;
%Straight Lines [id:da6906806415816943] 
\draw    (429.01,174) -- (457.19,161.23) ;
\draw [shift={(459.01,160.4)}, rotate = 155.61] [color={rgb, 255:red, 0; green, 0; blue, 0 }  ][line width=0.75]    (10.93,-3.29) .. controls (6.95,-1.4) and (3.31,-0.3) .. (0,0) .. controls (3.31,0.3) and (6.95,1.4) .. (10.93,3.29)   ;

% Text Node
\draw (325.01,71.37) node [anchor=north west][inner sep=0.75pt]    {$\PPIVn{4}$};
% Text Node
\draw (325.01,105.4) node [anchor=north west][inner sep=0.75pt]    {$\PPIIIprn{1}$};
% Text Node
\draw (325.01,133.4) node [anchor=north west][inner sep=0.75pt]    {$\PPIVn{3}$};
% Text Node
\draw (325.01,167.4) node [anchor=north west][inner sep=0.75pt]    {$\PPIIIprn{2}$};
% Text Node
\draw (255.01,119.4) node [anchor=north west][inner sep=0.75pt]    {\ref{eq:P5_1}};
% Text Node
\draw (185.01,119.4) node [anchor=north west][inner sep=0.75pt]    {\ref{eq:P6_1}};
% Text Node
\draw (395.01,71.39) node [anchor=north west][inner sep=0.75pt]    {$\PPIIn{1}$};
% Text Node
\draw (395.01,167.4) node [anchor=north west][inner sep=0.75pt]    {$\PPIIn{2}$};
% Text Node
\draw (465.01,119.4) node [anchor=north west][inner sep=0.75pt]    {$\PIn{H}$};

\end{tikzpicture}}
    \end{figure}
    \end{minipage}
    \begin{minipage}[l]{0.49\linewidth}
    \begin{figure}[H]
        \hspace{0.3cm}
        \scalebox{0.93}{\tikzset{every picture/.style={line width=0.75pt}} %set default line width to 0.75pt        

\begin{tikzpicture}[x=0.75pt,y=0.75pt,yscale=-1,xscale=1]
%uncomment if require: \path (0,260); %set diagram left start at 0, and has height of 260

%Straight Lines [id:da5973811981252017] 
\draw    (218.01,129) -- (245.25,129) ;
\draw [shift={(247.25,129)}, rotate = 180] [color={rgb, 255:red, 0; green, 0; blue, 0 }  ][line width=0.75]    (10.93,-3.29) .. controls (6.95,-1.4) and (3.31,-0.3) .. (0,0) .. controls (3.31,0.3) and (6.95,1.4) .. (10.93,3.29)   ;
%Straight Lines [id:da1671159730281706] 
\draw    (288.01,116) -- (315.25,116) ;
\draw [shift={(317.25,116)}, rotate = 180] [color={rgb, 255:red, 0; green, 0; blue, 0 }  ][line width=0.75]    (10.93,-3.29) .. controls (6.95,-1.4) and (3.31,-0.3) .. (0,0) .. controls (3.31,0.3) and (6.95,1.4) .. (10.93,3.29)   ;
%Straight Lines [id:da3925166562178285] 
\draw    (288.01,141) -- (315.25,141) ;
\draw [shift={(317.25,141)}, rotate = 180] [color={rgb, 255:red, 0; green, 0; blue, 0 }  ][line width=0.75]    (10.93,-3.29) .. controls (6.95,-1.4) and (3.31,-0.3) .. (0,0) .. controls (3.31,0.3) and (6.95,1.4) .. (10.93,3.29)   ;
%Straight Lines [id:da3418592409591995] 
\draw    (288.01,102) -- (316.19,89.23) ;
\draw [shift={(318.01,88.4)}, rotate = 155.61] [color={rgb, 255:red, 0; green, 0; blue, 0 }  ][line width=0.75]    (10.93,-3.29) .. controls (6.95,-1.4) and (3.31,-0.3) .. (0,0) .. controls (3.31,0.3) and (6.95,1.4) .. (10.93,3.29)   ;
%Straight Lines [id:da05295516326429439] 
\draw    (288.01,156.96) -- (316.19,173.89) ;
\draw [shift={(317.9,174.93)}, rotate = 211.01] [color={rgb, 255:red, 0; green, 0; blue, 0 }  ][line width=0.75]    (10.93,-3.29) .. controls (6.95,-1.4) and (3.31,-0.3) .. (0,0) .. controls (3.31,0.3) and (6.95,1.4) .. (10.93,3.29)   ;
%Straight Lines [id:da048018721021831134] 
\draw    (359.01,177) -- (386.25,177) ;
\draw [shift={(388.25,177)}, rotate = 180] [color={rgb, 255:red, 0; green, 0; blue, 0 }  ][line width=0.75]    (10.93,-3.29) .. controls (6.95,-1.4) and (3.31,-0.3) .. (0,0) .. controls (3.31,0.3) and (6.95,1.4) .. (10.93,3.29)   ;
%Straight Lines [id:da13575647149416914] 
\draw    (359.01,80) -- (386.25,80) ;
\draw [shift={(388.25,80)}, rotate = 180] [color={rgb, 255:red, 0; green, 0; blue, 0 }  ][line width=0.75]    (10.93,-3.29) .. controls (6.95,-1.4) and (3.31,-0.3) .. (0,0) .. controls (3.31,0.3) and (6.95,1.4) .. (10.93,3.29)   ;
%Straight Lines [id:da36177529618843474] 
\draw    (359.01,112) -- (387.19,99.23) ;
\draw [shift={(389.01,98.4)}, rotate = 155.61] [color={rgb, 255:red, 0; green, 0; blue, 0 }  ][line width=0.75]    (10.93,-3.29) .. controls (6.95,-1.4) and (3.31,-0.3) .. (0,0) .. controls (3.31,0.3) and (6.95,1.4) .. (10.93,3.29)   ;
%Straight Lines [id:da2064814000365356] 
\draw    (359.01,143.96) -- (387.19,160.89) ;
\draw [shift={(388.9,161.93)}, rotate = 211.01] [color={rgb, 255:red, 0; green, 0; blue, 0 }  ][line width=0.75]    (10.93,-3.29) .. controls (6.95,-1.4) and (3.31,-0.3) .. (0,0) .. controls (3.31,0.3) and (6.95,1.4) .. (10.93,3.29)   ;
%Straight Lines [id:da7691439439432831] 
\draw    (429.01,86) -- (457.19,102.93) ;
\draw [shift={(458.9,103.97)}, rotate = 211.01] [color={rgb, 255:red, 0; green, 0; blue, 0 }  ][line width=0.75]    (10.93,-3.29) .. controls (6.95,-1.4) and (3.31,-0.3) .. (0,0) .. controls (3.31,0.3) and (6.95,1.4) .. (10.93,3.29)   ;
%Straight Lines [id:da08678055527132222] 
\draw    (429.01,174) -- (457.19,161.23) ;
\draw [shift={(459.01,160.4)}, rotate = 155.61] [color={rgb, 255:red, 0; green, 0; blue, 0 }  ][line width=0.75]    (10.93,-3.29) .. controls (6.95,-1.4) and (3.31,-0.3) .. (0,0) .. controls (3.31,0.3) and (6.95,1.4) .. (10.93,3.29)   ;

% Text Node
\draw (325.01,71.37) node [anchor=north west][inner sep=0.75pt]    {\ref{eq:P4_0_m3}};
% Text Node
\draw (325.01,105.4) node [anchor=north west][inner sep=0.75pt]    {\ref{eq:P3_4}};
% Text Node
\draw (325.01,133.4) node [anchor=north west][inner sep=0.75pt]    {$\PPIVn{5}$};
% Text Node
\draw (325.01,167.4) node [anchor=north west][inner sep=0.75pt]    {$\PPIIIprn{8}$};
% Text Node
\draw (255.01,119.4) node [anchor=north west][inner sep=0.75pt]    {\ref{eq:P5_5}};
% Text Node
\draw (185.01,119.4) node [anchor=north west][inner sep=0.75pt]    {\ref{eq:P6_8}};
% Text Node
\draw (395.01,71.39) node [anchor=north west][inner sep=0.75pt]    {\ref{eq:P2_m3}};
% Text Node
\draw (395.01,167.4) node [anchor=north west][inner sep=0.75pt]    {$\PPIIn{1}$};
% Text Node
\draw (465.01,119.4) node [anchor=north west][inner sep=0.75pt]    {$\PIn{H}$};

\end{tikzpicture}}
    \end{figure}
    \end{minipage}
    \caption{Degenerations of \ref{eq:P6_1} and \ref{eq:P6_8}}
    \label{pic:deg_orbit13}
\end{figure}

\begin{figure}[H]
    \centering
    \scalebox{0.93}{\tikzset{every picture/.style={line width=0.75pt}} %set default line width to 0.75pt        

\begin{tikzpicture}[x=0.75pt,y=0.75pt,yscale=-1,xscale=1]
%uncomment if require: \path (0,273); %set diagram left start at 0, and has height of 273

%Straight Lines [id:da5524750123556669] 
\draw    (145,241) -- (172.23,241) ;
\draw [shift={(174.23,241)}, rotate = 180] [color={rgb, 255:red, 0; green, 0; blue, 0 }  ][line width=0.75]    (10.93,-3.29) .. controls (6.95,-1.4) and (3.31,-0.3) .. (0,0) .. controls (3.31,0.3) and (6.95,1.4) .. (10.93,3.29)   ;
%Straight Lines [id:da6872128650611468] 
\draw    (145,225) -- (173.18,212.23) ;
\draw [shift={(175,211.4)}, rotate = 155.61] [color={rgb, 255:red, 0; green, 0; blue, 0 }  ][line width=0.75]    (10.93,-3.29) .. controls (6.95,-1.4) and (3.31,-0.3) .. (0,0) .. controls (3.31,0.3) and (6.95,1.4) .. (10.93,3.29)   ;
%Straight Lines [id:da0006130749263758561] 
\draw    (145,150) -- (172.23,150) ;
\draw [shift={(174.23,150)}, rotate = 180] [color={rgb, 255:red, 0; green, 0; blue, 0 }  ][line width=0.75]    (10.93,-3.29) .. controls (6.95,-1.4) and (3.31,-0.3) .. (0,0) .. controls (3.31,0.3) and (6.95,1.4) .. (10.93,3.29)   ;
%Straight Lines [id:da43533627536051167] 
\draw    (145,161) -- (173.18,177.93) ;
\draw [shift={(174.89,178.97)}, rotate = 211.01] [color={rgb, 255:red, 0; green, 0; blue, 0 }  ][line width=0.75]    (10.93,-3.29) .. controls (6.95,-1.4) and (3.31,-0.3) .. (0,0) .. controls (3.31,0.3) and (6.95,1.4) .. (10.93,3.29)   ;
%Straight Lines [id:da7934755885957335] 
\draw    (74,241) -- (101.23,241) ;
\draw [shift={(103.23,241)}, rotate = 180] [color={rgb, 255:red, 0; green, 0; blue, 0 }  ][line width=0.75]    (10.93,-3.29) .. controls (6.95,-1.4) and (3.31,-0.3) .. (0,0) .. controls (3.31,0.3) and (6.95,1.4) .. (10.93,3.29)   ;
%Straight Lines [id:da8668139498353136] 
\draw    (74,150) -- (101.23,150) ;
\draw [shift={(103.23,150)}, rotate = 180] [color={rgb, 255:red, 0; green, 0; blue, 0 }  ][line width=0.75]    (10.93,-3.29) .. controls (6.95,-1.4) and (3.31,-0.3) .. (0,0) .. controls (3.31,0.3) and (6.95,1.4) .. (10.93,3.29)   ;
%Straight Lines [id:da5424771474656835] 
\draw    (74,179) -- (102.18,166.23) ;
\draw [shift={(104,165.4)}, rotate = 155.61] [color={rgb, 255:red, 0; green, 0; blue, 0 }  ][line width=0.75]    (10.93,-3.29) .. controls (6.95,-1.4) and (3.31,-0.3) .. (0,0) .. controls (3.31,0.3) and (6.95,1.4) .. (10.93,3.29)   ;
%Straight Lines [id:da3966352153586413] 
\draw    (225,236) -- (253.18,223.23) ;
\draw [shift={(255,222.4)}, rotate = 155.61] [color={rgb, 255:red, 0; green, 0; blue, 0 }  ][line width=0.75]    (10.93,-3.29) .. controls (6.95,-1.4) and (3.31,-0.3) .. (0,0) .. controls (3.31,0.3) and (6.95,1.4) .. (10.93,3.29)   ;
%Straight Lines [id:da5722514741094756] 
\draw [color={rgb, 255:red, 0; green, 0; blue, 0 }  ,draw opacity=1 ]   (297,191) -- (324.23,191) ;
\draw [shift={(326.23,191)}, rotate = 180] [color={rgb, 255:red, 0; green, 0; blue, 0 }  ,draw opacity=1 ][line width=0.75]    (10.93,-3.29) .. controls (6.95,-1.4) and (3.31,-0.3) .. (0,0) .. controls (3.31,0.3) and (6.95,1.4) .. (10.93,3.29)   ;
%Straight Lines [id:da3688731427371259] 
\draw    (225,150) -- (241.82,160.11) -- (253.18,166.93) ;
\draw [shift={(254.89,167.97)}, rotate = 211.01] [color={rgb, 255:red, 0; green, 0; blue, 0 }  ][line width=0.75]    (10.93,-3.29) .. controls (6.95,-1.4) and (3.31,-0.3) .. (0,0) .. controls (3.31,0.3) and (6.95,1.4) .. (10.93,3.29)   ;
%Straight Lines [id:da5671748893328981] 
\draw [color={rgb, 255:red, 0; green, 0; blue, 0 }  ,draw opacity=1 ]   (225,180) -- (252.23,180) ;
\draw [shift={(254.23,180)}, rotate = 180] [color={rgb, 255:red, 0; green, 0; blue, 0 }  ,draw opacity=1 ][line width=0.75]    (10.93,-3.29) .. controls (6.95,-1.4) and (3.31,-0.3) .. (0,0) .. controls (3.31,0.3) and (6.95,1.4) .. (10.93,3.29)   ;
%Straight Lines [id:da3745402223738147] 
\draw [color={rgb, 255:red, 0; green, 0; blue, 0 }  ,draw opacity=1 ]   (225,201) -- (252.23,201) ;
\draw [shift={(254.23,201)}, rotate = 180] [color={rgb, 255:red, 0; green, 0; blue, 0 }  ,draw opacity=1 ][line width=0.75]    (10.93,-3.29) .. controls (6.95,-1.4) and (3.31,-0.3) .. (0,0) .. controls (3.31,0.3) and (6.95,1.4) .. (10.93,3.29)   ;

% Text Node
\draw (187,138.4) node [anchor=north west][inner sep=0.75pt]    {$\PPIIIprn{8}$};
% Text Node
\draw (187,167.4) node [anchor=north west][inner sep=0.75pt]    {$\PPIVn{3}$};
% Text Node
\draw (187,197.4) node [anchor=north west][inner sep=0.75pt]    {$\PPIIIprn{1}$};
% Text Node
\draw (187,231.4) node [anchor=north west][inner sep=0.75pt]    {\ref{eq:P4_1_m2}};
% Text Node
\draw (111,139.4) node [anchor=north west][inner sep=0.75pt]    {$\PPVn{9}$};
% Text Node
\draw (111,231.4) node [anchor=north west][inner sep=0.75pt]    {\ref{eq:P5_2}};
% Text Node
\draw (39,139.4) node [anchor=north west][inner sep=0.75pt]    {\ref{eq:P6_7}};
% Text Node
\draw (40,182.4) node [anchor=north west][inner sep=0.75pt]    {\ref{eq:P6_4}};
% Text Node
\draw (40,231.4) node [anchor=north west][inner sep=0.75pt]    {\ref{eq:P6_2}};
% Text Node
\draw (264,182.4) node [anchor=north west][inner sep=0.75pt]    {$\PPIIn{1}$};
% Text Node
\draw (328,182.4) node [anchor=north west][inner sep=0.75pt]    {$\PIn{H}$};

\end{tikzpicture}}
    \caption{Degenerations of \ref{eq:P6_2}, \ref{eq:P6_4}, and \ref{eq:P6_7}}
    \label{pic:deg_orbit2}
\end{figure}
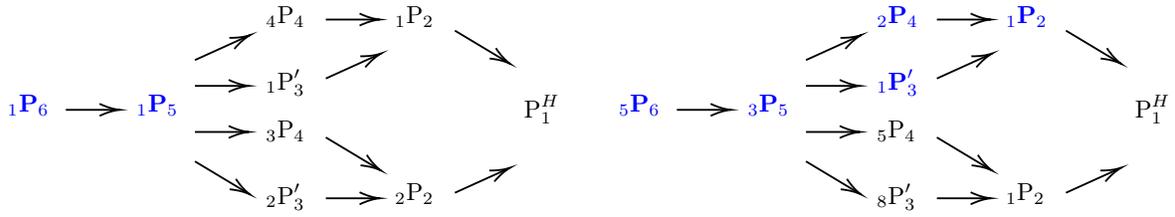

\begin{figure}[H]
    \centering
    \begin{minipage}[l]{0.49\linewidth}
    \begin{figure}[H]
        \hspace{-0.8cm}
        \scalebox{0.93}{\tikzset{every picture/.style={line width=0.75pt}} %set default line width to 0.75pt        

\begin{tikzpicture}[x=0.75pt,y=0.75pt,yscale=-1,xscale=1]
%uncomment if require: \path (0,492); %set diagram left start at 0, and has height of 492

%Straight Lines [id:da11265980113313456] 
\draw    (217.01,116) -- (226.4,116) -- (244.25,116) ;
\draw [shift={(246.25,116)}, rotate = 180] [color={rgb, 255:red, 0; green, 0; blue, 0 }  ][line width=0.75]    (10.93,-3.29) .. controls (6.95,-1.4) and (3.31,-0.3) .. (0,0) .. controls (3.31,0.3) and (6.95,1.4) .. (10.93,3.29)   ;
%Straight Lines [id:da41616075349273274] 
\draw    (288.01,116) -- (315.25,116) ;
\draw [shift={(317.25,116)}, rotate = 180] [color={rgb, 255:red, 0; green, 0; blue, 0 }  ][line width=0.75]    (10.93,-3.29) .. controls (6.95,-1.4) and (3.31,-0.3) .. (0,0) .. controls (3.31,0.3) and (6.95,1.4) .. (10.93,3.29)   ;
%Straight Lines [id:da20969013432836137] 
\draw    (288.01,141) -- (315.25,141) ;
\draw [shift={(317.25,141)}, rotate = 180] [color={rgb, 255:red, 0; green, 0; blue, 0 }  ][line width=0.75]    (10.93,-3.29) .. controls (6.95,-1.4) and (3.31,-0.3) .. (0,0) .. controls (3.31,0.3) and (6.95,1.4) .. (10.93,3.29)   ;
%Straight Lines [id:da4752185548885346] 
\draw    (288.01,102) -- (316.19,89.23) ;
\draw [shift={(318.01,88.4)}, rotate = 155.61] [color={rgb, 255:red, 0; green, 0; blue, 0 }  ][line width=0.75]    (10.93,-3.29) .. controls (6.95,-1.4) and (3.31,-0.3) .. (0,0) .. controls (3.31,0.3) and (6.95,1.4) .. (10.93,3.29)   ;
%Straight Lines [id:da5074999095429737] 
\draw    (288.01,156.96) -- (316.19,173.89) ;
\draw [shift={(317.9,174.93)}, rotate = 211.01] [color={rgb, 255:red, 0; green, 0; blue, 0 }  ][line width=0.75]    (10.93,-3.29) .. controls (6.95,-1.4) and (3.31,-0.3) .. (0,0) .. controls (3.31,0.3) and (6.95,1.4) .. (10.93,3.29)   ;
%Straight Lines [id:da20337534757851072] 
\draw    (359.01,177) -- (386.25,177) ;
\draw [shift={(388.25,177)}, rotate = 180] [color={rgb, 255:red, 0; green, 0; blue, 0 }  ][line width=0.75]    (10.93,-3.29) .. controls (6.95,-1.4) and (3.31,-0.3) .. (0,0) .. controls (3.31,0.3) and (6.95,1.4) .. (10.93,3.29)   ;
%Straight Lines [id:da7346198475328611] 
\draw    (359.01,80) -- (386.25,80) ;
\draw [shift={(388.25,80)}, rotate = 180] [color={rgb, 255:red, 0; green, 0; blue, 0 }  ][line width=0.75]    (10.93,-3.29) .. controls (6.95,-1.4) and (3.31,-0.3) .. (0,0) .. controls (3.31,0.3) and (6.95,1.4) .. (10.93,3.29)   ;
%Straight Lines [id:da24614291284486045] 
\draw    (359.01,112) -- (387.19,99.23) ;
\draw [shift={(389.01,98.4)}, rotate = 155.61] [color={rgb, 255:red, 0; green, 0; blue, 0 }  ][line width=0.75]    (10.93,-3.29) .. controls (6.95,-1.4) and (3.31,-0.3) .. (0,0) .. controls (3.31,0.3) and (6.95,1.4) .. (10.93,3.29)   ;
%Straight Lines [id:da38491996058804157] 
\draw    (359.01,143.96) -- (387.19,160.89) ;
\draw [shift={(388.9,161.93)}, rotate = 211.01] [color={rgb, 255:red, 0; green, 0; blue, 0 }  ][line width=0.75]    (10.93,-3.29) .. controls (6.95,-1.4) and (3.31,-0.3) .. (0,0) .. controls (3.31,0.3) and (6.95,1.4) .. (10.93,3.29)   ;
%Straight Lines [id:da7333502786050412] 
\draw    (429.01,86) -- (457.19,102.93) ;
\draw [shift={(458.9,103.97)}, rotate = 211.01] [color={rgb, 255:red, 0; green, 0; blue, 0 }  ][line width=0.75]    (10.93,-3.29) .. controls (6.95,-1.4) and (3.31,-0.3) .. (0,0) .. controls (3.31,0.3) and (6.95,1.4) .. (10.93,3.29)   ;
%Straight Lines [id:da5095813338354346] 
\draw    (429.01,174) -- (457.19,161.23) ;
\draw [shift={(459.01,160.4)}, rotate = 155.61] [color={rgb, 255:red, 0; green, 0; blue, 0 }  ][line width=0.75]    (10.93,-3.29) .. controls (6.95,-1.4) and (3.31,-0.3) .. (0,0) .. controls (3.31,0.3) and (6.95,1.4) .. (10.93,3.29)   ;
%Straight Lines [id:da146024635705032] 
\draw    (217,141) -- (226.39,141) -- (244.23,141) ;
\draw [shift={(246.23,141)}, rotate = 180] [color={rgb, 255:red, 0; green, 0; blue, 0 }  ][line width=0.75]    (10.93,-3.29) .. controls (6.95,-1.4) and (3.31,-0.3) .. (0,0) .. controls (3.31,0.3) and (6.95,1.4) .. (10.93,3.29)   ;

% Text Node
\draw (328.01,71.37) node [anchor=north west][inner sep=0.75pt]    {$\PIIIprn{H}$};
% Text Node
\draw (325.01,105.4) node [anchor=north west][inner sep=0.75pt]    {$\PPIVn{6}$};
% Text Node
\draw (325.01,133.4) node [anchor=north west][inner sep=0.75pt]    {$\PPIIIprn{1}$};
% Text Node
\draw (325.01,167.4) node [anchor=north west][inner sep=0.75pt]    {$\PPIVn{5}$};
% Text Node
\draw (254.01,119.4) node [anchor=north west][inner sep=0.75pt]    {$\PPVn{7}$};
% Text Node
\draw (181.01,105.4) node [anchor=north west][inner sep=0.75pt]    {\ref{eq:P6_14}};
% Text Node
\draw (401.01,71.39) node [anchor=north west][inner sep=0.75pt]    {$\PIIn{H}$};
% Text Node
\draw (395.01,167.4) node [anchor=north west][inner sep=0.75pt]    {$\PPIIn{1}$};
% Text Node
\draw (465.01,119.4) node [anchor=north west][inner sep=0.75pt]    {$\PIn{H}$};
% Text Node
\draw (181,133.4) node [anchor=north west][inner sep=0.75pt]    {\ref{eq:P6_17}};

\end{tikzpicture}}
    \end{figure}
    \end{minipage}
    \begin{minipage}[l]{0.49\linewidth}
    \begin{figure}[H]
        \hspace{0.3cm}
        \scalebox{0.93}{\tikzset{every picture/.style={line width=0.75pt}} %set default line width to 0.75pt        

\begin{tikzpicture}[x=0.75pt,y=0.75pt,yscale=-1,xscale=1]
%uncomment if require: \path (0,492); %set diagram left start at 0, and has height of 492

%Straight Lines [id:da11265980113313456] 
\draw    (218.01,129) -- (227.4,129) -- (245.25,129) ;
\draw [shift={(247.25,129)}, rotate = 180] [color={rgb, 255:red, 0; green, 0; blue, 0 }  ][line width=0.75]    (10.93,-3.29) .. controls (6.95,-1.4) and (3.31,-0.3) .. (0,0) .. controls (3.31,0.3) and (6.95,1.4) .. (10.93,3.29)   ;
%Straight Lines [id:da41616075349273274] 
\draw    (288.01,116) -- (315.25,116) ;
\draw [shift={(317.25,116)}, rotate = 180] [color={rgb, 255:red, 0; green, 0; blue, 0 }  ][line width=0.75]    (10.93,-3.29) .. controls (6.95,-1.4) and (3.31,-0.3) .. (0,0) .. controls (3.31,0.3) and (6.95,1.4) .. (10.93,3.29)   ;
%Straight Lines [id:da20969013432836137] 
\draw    (288.01,141) -- (315.25,141) ;
\draw [shift={(317.25,141)}, rotate = 180] [color={rgb, 255:red, 0; green, 0; blue, 0 }  ][line width=0.75]    (10.93,-3.29) .. controls (6.95,-1.4) and (3.31,-0.3) .. (0,0) .. controls (3.31,0.3) and (6.95,1.4) .. (10.93,3.29)   ;
%Straight Lines [id:da4752185548885346] 
\draw    (288.01,102) -- (316.19,89.23) ;
\draw [shift={(318.01,88.4)}, rotate = 155.61] [color={rgb, 255:red, 0; green, 0; blue, 0 }  ][line width=0.75]    (10.93,-3.29) .. controls (6.95,-1.4) and (3.31,-0.3) .. (0,0) .. controls (3.31,0.3) and (6.95,1.4) .. (10.93,3.29)   ;
%Straight Lines [id:da5074999095429737] 
\draw    (288.01,156.96) -- (316.19,173.89) ;
\draw [shift={(317.9,174.93)}, rotate = 211.01] [color={rgb, 255:red, 0; green, 0; blue, 0 }  ][line width=0.75]    (10.93,-3.29) .. controls (6.95,-1.4) and (3.31,-0.3) .. (0,0) .. controls (3.31,0.3) and (6.95,1.4) .. (10.93,3.29)   ;
%Straight Lines [id:da20337534757851072] 
\draw    (359.01,177) -- (386.25,177) ;
\draw [shift={(388.25,177)}, rotate = 180] [color={rgb, 255:red, 0; green, 0; blue, 0 }  ][line width=0.75]    (10.93,-3.29) .. controls (6.95,-1.4) and (3.31,-0.3) .. (0,0) .. controls (3.31,0.3) and (6.95,1.4) .. (10.93,3.29)   ;
%Straight Lines [id:da7346198475328611] 
\draw    (359.01,80) -- (386.25,80) ;
\draw [shift={(388.25,80)}, rotate = 180] [color={rgb, 255:red, 0; green, 0; blue, 0 }  ][line width=0.75]    (10.93,-3.29) .. controls (6.95,-1.4) and (3.31,-0.3) .. (0,0) .. controls (3.31,0.3) and (6.95,1.4) .. (10.93,3.29)   ;
%Straight Lines [id:da24614291284486045] 
\draw    (359.01,112) -- (387.19,99.23) ;
\draw [shift={(389.01,98.4)}, rotate = 155.61] [color={rgb, 255:red, 0; green, 0; blue, 0 }  ][line width=0.75]    (10.93,-3.29) .. controls (6.95,-1.4) and (3.31,-0.3) .. (0,0) .. controls (3.31,0.3) and (6.95,1.4) .. (10.93,3.29)   ;
%Straight Lines [id:da38491996058804157] 
\draw    (359.01,143.96) -- (387.19,160.89) ;
\draw [shift={(388.9,161.93)}, rotate = 211.01] [color={rgb, 255:red, 0; green, 0; blue, 0 }  ][line width=0.75]    (10.93,-3.29) .. controls (6.95,-1.4) and (3.31,-0.3) .. (0,0) .. controls (3.31,0.3) and (6.95,1.4) .. (10.93,3.29)   ;
%Straight Lines [id:da7333502786050412] 
\draw    (429.01,86) -- (457.19,102.93) ;
\draw [shift={(458.9,103.97)}, rotate = 211.01] [color={rgb, 255:red, 0; green, 0; blue, 0 }  ][line width=0.75]    (10.93,-3.29) .. controls (6.95,-1.4) and (3.31,-0.3) .. (0,0) .. controls (3.31,0.3) and (6.95,1.4) .. (10.93,3.29)   ;
%Straight Lines [id:da5095813338354346] 
\draw    (429.01,174) -- (457.19,161.23) ;
\draw [shift={(459.01,160.4)}, rotate = 155.61] [color={rgb, 255:red, 0; green, 0; blue, 0 }  ][line width=0.75]    (10.93,-3.29) .. controls (6.95,-1.4) and (3.31,-0.3) .. (0,0) .. controls (3.31,0.3) and (6.95,1.4) .. (10.93,3.29)   ;

% Text Node
\draw (325.01,71.37) node [anchor=north west][inner sep=0.75pt]    {\ref{eq:P3_6}};
% Text Node
\draw (325.01,105.4) node [anchor=north west][inner sep=0.75pt]    {$\PPIVn{1}$};
% Text Node
\draw (325.01,133.4) node [anchor=north west][inner sep=0.75pt]    {$\PPIIIprn{8}$};
% Text Node
\draw (325.01,167.4) node [anchor=north west][inner sep=0.75pt]    {\ref{eq:P4_m2_m2}};
% Text Node
\draw (255.01,119.4) node [anchor=north west][inner sep=0.75pt]    {\ref{eq:P5_9}};
% Text Node
\draw (185.01,119.4) node [anchor=north west][inner sep=0.75pt]    {\ref{eq:P6_16}};
% Text Node
\draw (401.01,71.39) node [anchor=north west][inner sep=0.75pt]    {$\PIIn{H}$};
% Text Node
\draw (395.01,167.4) node [anchor=north west][inner sep=0.75pt]    {$\PPIIn{1}$};
% Text Node
\draw (465.01,119.4) node [anchor=north west][inner sep=0.75pt]    {$\PIn{H}$};

\end{tikzpicture}}
    \end{figure}
    \end{minipage}
    \caption{Degenerations of \ref{eq:P6_17}, \ref{eq:P6_14}, and \ref{eq:P6_16}}
    \label{pic:deg_orbit4}
\end{figure}
The bold font in $\PPkn{i}$ denotes the \PPainleve \, systems from Appendices  
 \ref{sec:sysintlistP6} -- \ref{sec:sysintlistP2}, the thin font is used for systems with Okamoto integrals, and the Hamiltonian systems are denoted as $\Pkn{H} $. 

In order to get degeneracies for the systems, obtained by the transposition from the systems from Appendix \ref{sec:sysintlistP6}, we need to replace each system in Figures \ref{pic:deg_orbit13} -- \ref{pic:deg_orbit4} with its transposed version. 

We see that the trees corresponding to $\PVI$ systems from different orbits (see Subsection \ref{sec:trgroup_P6}) have no intersections. All 18 systems listed in the appendices are presented in Figures \ref{pic:deg_orbit13} -- \ref{pic:deg_orbit4}. No additional new systems were obtained by degenerations, so the class of systems found in this paper is closed with respect to the limiting transitions. 

These limiting transitions can be also extended to the Lax pairs. In most cases, formulas from Subsection 4.1 in \cite{bobrova2022non} are applicable, but in the cases $\PV \to \PIIIpr$ and $\PIV \to \PII$ it is necessary to supplement them with a gauge transformation 
\begin{align}
    \tilde{\mathbf{A}} (z, \lambda)
    &= g \, \mathbf{A} (z, \lambda) \, g^{-1}
    + \partial_{\lambda} g \,\, g^{-1},
    &
    \tilde{\mathbf{B}} (z, \lambda)
    &= g \, \mathbf{B} (z, \lambda) \, g^{-1}
    + \partial_{z} g \,\, g^{-1},
\end{align}
where $g = z^{\mu} \, \mathbf{I}$ and $g = z^{\mu \, \varepsilon^{-6}} \, \mathbf{I}$, respectively. The constant $\mu$ depends on the choice of a particular system of $\PV$ or  $\PIV$ type. 

\section{Conclusion}

We found all non-abelian \Painleve systems whose auxiliary autonomous systems have first integrals and symmetries. We believe that these systems enjoy the \Painleve property. It was shown in \cite{Balandin_Sokolov_1998, Adler_Sokolov_2020_1, Bobrova_Sokolov_2022} that in the $\PI$, $\PII$, and $\PIV$ cases they satisfy the \PPainleve-Kovalevskaya~test. 

We constructed isomonodromic Lax pairs for multiparametric families of $\PII$ -- $\PVI$ systems containing all the systems obtained in this paper. As long as we assume that the coefficients of non-abelian polynomials are scalar constants, all systems are  $GL(m)$ invariant and we may consider their restriction on the invariants of $GL(m)$-action. It turns out that all restricted systems of the above families coincide with each other. 

A comparison of the results from \cite{Bobrova_Sokolov_2022} and Subsection \ref{P4} shows that in order to choose from these families those systems that have the \Painleve property, one can assume the existence of symmetries for the auxiliary autonomous systems.

The situation changes drastically if the constant coefficients themselves are non-abelian. Examples of such \Painleve systems were found in \cite{Balandin_Sokolov_1998, Retakh_Rubtsov_2010, Adler_Kolesnikov_2022, Bobrova_Sokolov_2022, bobrova2022classification}. We hope that a systematic construction of such systems can be carried out by simultaneous non-abelianization of the coefficients of the autonomous systems found in this paper and their symmetries.

We are going to devote a separate paper to the search for systems of the \Painleve type with non-abelian coefficients and the study of the properties of the corresponding autonomous systems.  
 
\subsubsection*{Acknowledgements}

The authors are grateful to V.~E.~Adler for useful discussions. They are thankful to IHES for hospitality. The research of the second author was carried out under the State Assignment 0029-2021-0004 (Quantum field theory) of the Ministry of Science and Higher Education of the Russian Federation. The first author was partially supported by the International Laboratory of Cluster Geometry HSE, RF Government grant № 075-15-2021-608, and by Young Russian Mathematics award.
 
\appendix

\section*{Appendices}
\section{Lists of non-abelian autonomous systems of \Painleve type}
\label{sec:sysintlist}
  
\subsection{Systems of \texorpdfstring{$\PVI$}{P6} type}\label{sec:sysintlistP6}

%\subsubsection{Case 1}\label{sec:P6case1}

\begin{align}
    \label{eq:P6_1}
    \tag*{$\PPPVIn{1}$}
    &\begin{gathered}
    \left\{
    \begin{array}{lcr}
         u'
         &=& - u^3 v + 3 u^2 v u
         - 2 u v u
         - \kappa_1 u^2 
         + \kappa_2 u
         \hspace{4.6cm}
         \\[1mm]
         && 
         + \, z \brackets{
         - 2 u v u
         + u v + v u
         + \kappa_4 u 
         + (\kappa_1 - \kappa_2 - \kappa_4)
         },
         \\[2mm]
         v'
         &=& u^2 v^2 - 2 u v u v - 2 v u^2 v
         + 2 v u v
         + (\kappa_1 - \hat \kappa_3) u v 
         + (\kappa_1 + \hat \kappa_3) v u
         - \kappa_2 v
         \\[1mm]
         && 
         + \, \tfrac14 (
         \hat \kappa_3^2 
         - \kappa_1^2
         )
         + z \brackets{
         2 v u v
         - v^2
         - \kappa_4 v
         },
    \end{array}
    \right.
    \\[2mm]
    \kappa_3
    = \tfrac14 (
    \hat \kappa_3^2 
    - \kappa_1^2
    ).
    \end{gathered}
    \\[4mm]
    \label{eq:P6_2}
    \tag*{$\PPPVIn{2}$}
    &\left\{
    \begin{array}{lcr}
         u'
         &=& u^3 v + u^2 v u
         - 2 u^2 v
         - \kappa_1 u^2 
         + \kappa_2 u 
         \hspace{5.2cm}
         \\[1mm]
         && 
         + \, z \brackets{
         - 2 u^2 v
         + 3 u v
         - v u
         + \kappa_4 u 
         + (\kappa_1 - \kappa_2 - \kappa_4)
         },
         \\[2mm]
         v'
         &=& - u^2 v^2 - 2 u v u v
         + 2 u v^2
         + 2 \kappa_1 u v
         - \kappa_2 v
         + \kappa_3
         + z \brackets{
         2 u v^2
         - v^2
         - \kappa_4 v
         }.
    \end{array}
    \right.
    \\[4mm]
    \label{eq:P6_4}
    \tag*{$\PPPVIn{3}$}
    &\left\{
    \begin{array}{lcr}
         u'
         &=& u^3 v + u^2 v u
         - 2 u^2 v
         - \kappa_1 u^2 
         + \kappa_2 u 
         \hspace{5.2cm}
         \\[1mm]
         && 
         + \, z \brackets{
         - 2 u v u
         + u v
         + v u
         + \kappa_4 u 
         + (\kappa_1 - \kappa_2 - \kappa_4)
         },
         \\[2mm]
         v'
         &=& - u^2 v^2 - 2 u v u v
         + 2 u v^2
         + 2 \kappa_1 u v
         - \kappa_2 v
         + \kappa_3
         + z \brackets{
         2 v u v
         - v^2
         - \kappa_4 v
         }.
    \end{array}
    \right.
    \\[4mm]
    \label{eq:P6_7}
    \tag*{$\PPPVIn{4}$}
    &\left\{
    \begin{array}{lcr}
         u'
         &=& u^3 v + u^2 v u
         - 2 u v u
         - \kappa_1 u^2 
         + \kappa_2 u 
         \hspace{5.2cm}
         \\[1mm]
         && 
         + \, z \brackets{
         - 2 u^2 v
         + u v
         + v u
         + \kappa_4 u 
         + (\kappa_1 - \kappa_2 - \kappa_4)
         },
         \\[2mm]
         v'
         &=& - u^2 v^2 - 2 u v u v
         + 2 v u v
         + 2 \kappa_1 u v
         - \kappa_2 v
         + \kappa_3
         + z \brackets{
         2 u v^2
         - v^2
         - \kappa_4 v
         }.
    \end{array}
    \right.
    \\[4mm]
    \label{eq:P6_8}
    \tag*{$\PPPVIn{5}$}
    &\begin{gathered}
    \left\{
    \begin{array}{lcr}
         u'
         &=&
         2 u^3 v
         - 2 u^2 v
         - \kappa_1 u^2 
         + \kappa_2 u 
         + z \brackets{
         - 2 u^2 v
         + 2 u v
         + \kappa_4 u 
         + (\kappa_1 - \kappa_2 - \kappa_4)
         },
         \\[2mm]
         v'
         &=&
         - 2 u^2 v^2 - 2 u v u v + v u^2 v
         + 2 u v^2
         + \tfrac12 (5 \kappa_1 - \hat \kappa_3) u v
         + \tfrac12 (- \kappa_1 + \hat \kappa_3) v u
         \hspace{6mm}
         \\[1mm]
         && 
         - \, \kappa_2 v
         + \tfrac14 (\hat \kappa_3^2 - \kappa_1^2)
         + z \brackets{
         2 u v^2
         - v^2
         - \kappa_4 v
         },
    \end{array}
    \right.
    \\[2mm]
    \kappa_3
    = \tfrac14 (
    \hat \kappa_3^2 
    - \kappa_1^2
    ).
    \end{gathered}
    \\[4mm]
    \label{eq:P6_17}
    \tag*{$\PPPVIn{6}$}
    &\begin{gathered}
    \left\{
    \begin{array}{lcr}
         u'
         &=& 2 u^2 v u
         - 2 u^2 v
         - \kappa_1 u^2 
         + \kappa_2 u 
         + z \brackets{
         - 2 u v u
         + 2 u v
         + \kappa_4 u 
         + (\kappa_1 - \kappa_2 - \kappa_4)
         },
         \\[2mm]
         v'
         &=& - 2 u v u v - v u^2 v
         + 2 u v^2
         + \tfrac12 (3 \kappa_1 - \hat \kappa_3) u v
         + \tfrac12 (\kappa_1 + \hat \kappa_3) v u
         - \kappa_2 v
         \hspace{1.5cm}
         \\[1mm]
         && 
         + \, \tfrac14 (\hat \kappa_3^2 - \kappa_1^2)
         + z \brackets{
         2 v u v
         - v^2
         - \kappa_4 v
         },
    \end{array}
    \right.
    \\[2mm]
    \kappa_3
    = \tfrac14 (
    \hat \kappa_3^2 
    - \kappa_1^2
    ).
    \end{gathered}
    \\[4mm]
    \label{eq:P6_14}
    \tag*{$\PPPVIn{7}$}
    &\begin{gathered}
    \left\{
    \begin{array}{lcr}
         u'
         &=& 2 u^2 v u
         - 2 u v u
         - \kappa_1 u^2 
         + \kappa_2 u 
         + z \brackets{
         - 2 u^2 v
         + 2 u v
         + \kappa_4 u 
         + (\kappa_1 - \kappa_2 - \kappa_4)
         },
         \\[2mm]
         v'
         &=& - 2 u v u v - v u^2 v
         + 2 v u v
         + \tfrac12 (3 \kappa_1 - \hat \kappa_3) u v
         + \tfrac12 (\kappa_1 + \hat \kappa_3) v u
         - \kappa_2 v
         \hspace{1.5cm}
         \\[1mm]
         && 
         + \, \tfrac14 (\hat \kappa_3^2 - \kappa_1^2)
         + z \brackets{
         2 u v^2
         - v^2
         - \kappa_4 v
         },
    \end{array}
    \right.
    \\[2mm]
    \kappa_3
    = \tfrac14 (
    \hat \kappa_3^2 
    - \kappa_1^2
    ).
    \end{gathered}
    \\[4mm]
    \label{eq:P6_16}
    \tag*{$\PPPVIn{8}$}
    &\begin{gathered}
    \left\{
    \begin{array}{lcr}
         u'
         &=& 2 u^2 v u
         - 2 u v u
         - \kappa_1 u^2 
         + \kappa_2 u 
         + z \brackets{
         - 2 u v u
         + 2 v u
         + \kappa_4 u 
         + (\kappa_1 - \kappa_2 - \kappa_4)
         },
         \\[2mm]
         v'
         &=& - 2 u v u v - v u^2 v
         + 2 v u v
         + \tfrac12 (3 \kappa_1 - \hat \kappa_3) u v
         + \tfrac12 (\kappa_1 + \hat \kappa_3) v u
         - \kappa_2 v
         \hspace{1.5cm}
         \\[1mm]
         && 
         + \, \tfrac14 (\hat \kappa_3^2 - \kappa_1^2)
         + z \brackets{
         2 v u v
         - v^2
         - \kappa_4 v
         },
    \end{array}
    \right.
    \\[2mm]
    \kappa_3
    = \tfrac14 (
    \hat \kappa_3^2 
    - \kappa_1^2
    ).
    \end{gathered}
\end{align}

\subsection{Systems of \texorpdfstring{$\PV$}{P5} type}
\label{sec:sysintlistP5}
%\subsubsection{Case 1} \label{sec:P5case1}

\begin{align}
    \label{eq:P5_1}
    \tag*{$\PPPVn{1}$}
    &\begin{gathered}
    \left\{
    \begin{array}{lcr}
         u'
         &=& - u^3 v + 3 u^2 v u
         - 4 u v u
         - \kappa_1 u^2 
         + u v + v u 
         + (\kappa_1 + \kappa_2) u
         - \kappa_2 
         + \kappa_4 z u,
         \\[2mm]
         v'
         &=& u^2 v^2 - 2 u v u v - 2 v u^2 v
         + 4 v u v
         + (\kappa_1 - \hat \kappa_3) u v 
         + (\kappa_1 + \hat \kappa_3) v u
         - v^2
         \hspace{4mm}
         \\[1mm]
         &&
         - \, (\kappa_1 + \kappa_2) v
         + \tfrac14 (\hat \kappa_3^2 - \kappa_1^2)
         - \kappa_4 z v,
    \end{array}
    \right.
    \\[2mm]
    \kappa_3
    = \tfrac14 (
    \hat \kappa_3^2 
    - \kappa_1^2
    ).
    \end{gathered}
    \\[4mm]
    \label{eq:P5_2}
    \tag*{$\PPPVn{2}$}
    &\left\{
    \begin{array}{lcl}
         u'
         &=& u^3 v + u^2 v u
         - 4 u^2 v
         - \kappa_1 u^2 
          + 3 u v - v u
         + (\kappa_1 + \kappa_2) u
         - \kappa_2 
         + \kappa_4 z u,
         \\[2mm]
         v'
         &=& - u^2 v^2 - 2 u v u v
         + 4 u v^2
         + 2 \kappa_1 u v
         - v^2
         - (\kappa_1 + \kappa_2) v
         + \kappa_3
         - \kappa_4 z v.
    \end{array}
    \right.
    \\[4mm]
    \label{eq:P5_5}
    \tag*{$\PPPVn{3}$}
    &\begin{gathered}
    \left\{
    \begin{array}{lcr}
         u'
         &=& 2 u^3 v
         - 4 u^2 v
         - \kappa_1 u^2 
         + 2 u v
         + (\kappa_1 + \kappa_2) u
         - \kappa_2 
         + \kappa_4 z u,
         \hspace{20mm}
         \\[2mm]
         v'
         &=& - 2 u^2 v^2 - 2 u v u v + v u^2 v
         + 4 u v^2
         + \tfrac12 (5 \kappa_1 - \hat \kappa_3) u v 
         + \tfrac12 (- \kappa_1 + \hat \kappa_3) v u
         \\[1mm]
         && 
         - \, v^2
         - (\kappa_1 + \kappa_2) v
         + \tfrac14 (\hat \kappa_3^2 - \kappa_1^2)
         - \kappa_4 z v,
    \end{array}
    \right.
    \\[2mm]
    \kappa_3
    = \tfrac14 (
    \hat \kappa_3^2 
    - \kappa_1^2
    ).
    \end{gathered}
    \\[4mm]
    \label{eq:P5_9}
    \tag*{$\PPPVn{4}$}
    &\begin{gathered}
    \left\{
    \begin{array}{lcr}
         u'
         &=& 2 u^2 v u
         - 4 u v u
         - \kappa_1 u^2 
         + 2 v u
         + (\kappa_1 + \kappa_2) u
         - \kappa_2 
         + \kappa_4 z u,
         \hspace{2mm}
         \\[2mm]
         v'
         &=& - 2 u v u v - v u^2 v
         + 4 v u v
         + \tfrac12 (3 \kappa_1 - \hat \kappa_3) u v 
         + \tfrac12 (\kappa_1 + \hat \kappa_3) v u
         \\[1mm]
         && 
         - \, v^2
         - (\kappa_1 + \kappa_2) v
         + \tfrac14 (\hat \kappa_3^2 - \kappa_1^2)
         - \kappa_4 z v.
    \end{array}
    \right.
    \\[2mm]
    \kappa_3
    = \tfrac14 (
    \hat \kappa_3^2 
    - \kappa_1^2
    ).
    \end{gathered}
\end{align}

\subsection{Systems of \texorpdfstring{$\PIV$}{P4} type}
\label{sec:sysintlistP4}
\begin{align}
    \label{eq:P4_m2_m2}
    \tag*{$\PPPIVn{1}$}
    &\left\{
    \begin{array}{lcl}
         u' 
         &=& 
         - u^2 
         + 2 v u
         - 2 z u
         + \kappa_2,
         \\[2mm]
         v'
         &=& 
         - v^2
         + 2 u v
         + 2 z v
         + \kappa_3.
    \end{array}
    \right.
    \\[4mm]
    \label{eq:P4_0_m3}
    \tag*{$\PPPIVn{2}$}
    &\left\{
    \begin{array}{lcl}
         u' 
         &=& 
         - u^2 
         + 2 u v
         - 2 z u
         + \kappa_2,
         \\[2mm]
         v'
         &=& 
         - v^2
         + 3 u v
         - v u
         + 2 z v
         + \kappa_3.
    \end{array}
    \right.
    \\[4mm]
    \label{eq:P4_1_m2}
    \tag*{$\PPPIVn{3}$}
    &\left\{
    \begin{array}{lcl}
         u' 
         &=& 
         - u^2 
         + 3 u v - v u
         - 2 z u
         + \kappa_2,
         \\[2mm]
         v'
         &=& 
         - v^2
         + 2 u v 
         + 2 z v
         + \kappa_3.
    \end{array}
    \right.
\end{align}

\subsection{Systems of \texorpdfstring{$\PIIIpr$}{P3'} type}
\label{sec:sysintlistP3}

\begin{align}
    \label{eq:P3_4}
    \tag*{$\PPPIIIprn{1}$}
    &\left\{
    \begin{array}{lcl}
         u'
         &=& 2 u^2 v 
         + \kappa_1 u
         + \kappa_2 u^2 
         + \kappa_4 z,
         \\[2mm]
         v'
         &=& - 2 u v^2 
         - \kappa_1 v
         - 3 \kappa_2 u v 
         + \kappa_2 v u 
         - \kappa_3.
    \end{array}
    \right.
    \\[4mm]
    \label{eq:P3_6}
    \tag*{$\PPPIIIprn{2}$}
    &\left\{
    \begin{array}{lcl}
         u'
         &=& 2 u^2 v 
         + \kappa_1 u
         + \kappa_2 u^2 
         + \kappa_4 z,
         \\[2mm]
         v'
         &=& - 2 u v^2 
         - \kappa_1 v
         - \kappa_2 u v 
         - \kappa_2 v u 
         - \kappa_3.
    \end{array}
    \right.
\end{align}

\subsection{Systems of \texorpdfstring{$\PII$}{P2} type}
\label{sec:sysintlistP2}
\begin{gather} 
    \label{eq:P2_m3}
    \tag*{$\PPPIIn{1}$}
    \left\{
    \begin{array}{lcl}
         u'
         &=& - u^2 
         + v
         - \tfrac12 z,
         \\[2mm]
         v'
         &=& 3 u v - v u
         + \kappa_3.
    \end{array}
    \right.
\end{gather}

\section{Special case of \texorpdfstring{$\PIIIpr$}{P3'} type systems}\label{P3'D7}

\subsection{Non-abelian systems of \texorpdfstring{$\PIIIpr(D_7)$}{P3'(D7)} type}
\label{sec:sysintP3D72}

For the same reasons as in the papers \cite{bobrova2022non} and \cite{bobrova2022classification}, here we consider a special case $\PIIIpr(D_7)$ of the \PPainleve-$3^{\prime}$ system.   The corresponding scalar autonomous system is Hamiltonian with
\begin{equation}
    \label{eq:ham3d7}
    H
    = u^2 v^2 
    + \kappa_2 u^2 v
    + \kappa_1 u v 
    + \kappa_3 u 
    + \kappa_4 z u^{-1}.
\end{equation}
A non-abelian generalization of the autonomous system reads as
\begin{align}
    \label{eq:sysP3D7_nc}
    \left\{
    \begin{array}{lcr}
         \dfrac{du}{dt}
         &=& a_1 u^2 v 
         + (2 - a_1 - a_2) u v u
         + a_2 v u^2
         + \kappa_1 u + \kappa_2 u^2
         ,
         \\[3mm]
         \dfrac{dv}{dt}
         &=& b_1 u v^2 
         - (2 + b_1 + b_2) v u v
         + b_2 v^2 u
         - \kappa_1 v
         \hspace{1.5cm}
         \\
         &&\quad
         + \, c_1 u v
         + (- 2 \kappa_2 - c_1) v u
         - \kappa_3 + \kappa_4 z u^{-2},
    \end{array}
    \right.
\end{align}
where $a_i$, $b_i$, $c_1$, $\kappa_i \in \mathbb{C}$.

\begin{proposition}
\label{thm:intP3D7_nc}
\phantom{}

\begin{itemize}
    \item[\rm{a)}]
    A system of the form \eqref{eq:sysP3D7_nc} admits a non-abelianization of integrals $H$ and $H^2$ iff
    \begin{align}
        &&
        b_1
        &= - a_1,
        &&&
        b_2
        &= - a_2.
        &&
    \end{align}

    \item[\rm{b)}]
    If the system additionally possesses a non-abelianization of the symmetry \eqref{scsymgen} of degree 7, then it belongs to one of the following lists:
\begin{enumerate}
    \item[\rm{i)}] 
    Appendix A.7 from {\rm\cite{bobrova2022classification}},
    
    \item[\rm{ii)}] 
    Appendix C.1 from {\rm\cite{bobrova2022non}},
    
    \item[\rm{iii)}] 
    {\rm(}up to the transposition $\tau${\rm)}
    \begin{align}
        \label{eq:P3D7_4}
        \tag*{$\PPPIIIprn{1}(D_7)$}
        &\left\{
        \begin{array}{lcl}
             u'
             &=& 2 u^2 v 
             + \kappa_1 u
             + \kappa_2 u^2,
             \\[2mm]
             v'
             &=& - 2 u v^2 
             - \kappa_1 v
             - 3 \kappa_2 u v 
             + \kappa_2 v u 
             - \kappa_3
             + \kappa_4 z u^{-2}
             ;
        \end{array}
        \right.
        \\[4mm]
        \label{eq:P3D7_6}
        \tag*{$\PPPIIIprn{2}(D_7)$}
        &\left\{
        \begin{array}{lcl}
             u'
             &=& 2 u^2 v 
             + \kappa_1 u
             + \kappa_2 u^2,
             \\[2mm]
             v'
             &=& - 2 u v^2 
             - \kappa_1 v
             - \kappa_2 u v 
             - \kappa_2 v u 
             - \kappa_3
             + \kappa_4 z u^{-2}
             .
        \end{array}
        \right.
    \end{align}
\end{enumerate}
\end{itemize}
\end{proposition}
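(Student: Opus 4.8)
The plan is to follow the line of Propositions \ref{thm:trint2} and \ref{thm:intP3_nc}, adapting it to the presence of the Laurent term $\kappa_4 z u^{-1}$ in the Hamiltonian \eqref{eq:ham3d7}: a non-abelianization of the two simplest integrals cuts down the coefficients in \eqref{eq:sysP3D7_nc}, and then a non-abelianization of the simplest symmetry fixes the remaining ones.

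For part a) I would write, for $i=1,2$, the most general non-commutative Laurent polynomial $S_i$ in $u$, $v$ whose commutative reduction equals $H^i$. For $S_1$, since $\trace$ annihilates commutators, one may simply take $S_1 = u^2 v^2 + \kappa_2 u^2 v + \kappa_1 u v + \kappa_3 u + \kappa_4 z u^{-1}$; for $S_2$ one starts from $H^2$ and allows all independent orderings of its monomials modulo commutators, keeping in mind that the Laurent part involves $u^{-1}$, $u^{-2}$ and couples to the polynomial part only through cyclicity under the trace. Differentiating $S_i$ along \eqref{eq:sysP3D7_nc}, using $\tfrac{d}{dt} u^{-1} = - u^{-1} u' u^{-1}$, taking the trace and collecting cyclically reduced monomials, I expect $\tfrac{d}{dt}\trace S_1 = 0$ and $\tfrac{d}{dt}\trace S_2 = 0$ to be equivalent (for a suitable choice of the free coefficients of $S_2$) to the two relations $b_1 = -a_1$, $b_2 = -a_2$, exactly as in the $\PIIIpr(D_6)$ case, and conversely these relations are readily seen to be sufficient; this gives the ``iff'' in a).

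For part b), assuming the relations from a), I would take the general ansatz for the non-abelian symmetry of degree $7$ modelled on the scalar symmetry \eqref{scsymgen} with $k=1$, refine it exactly as in Subsection \ref{Sec202} (homogeneity for suitable weights of $u$, $v$, the structure inherited from the scalar hierarchy, and behaviour under the transposition $\tau$) so that only a few hundred undetermined coefficients remain, and then impose two groups of requirements: (i) $\trace S_1$ and $\trace S_2$ from a) are first integrals of the symmetry, following item 3 of Assumption \ref{assumpt1}, which yields linear relations among the symmetry coefficients; (ii) the compatibility of the system and its symmetry, $u_{t\,\tau} - u_{\tau\,t} = 0$ and $v_{t\,\tau} - v_{\tau\,t} = 0$, which produces a large overdetermined system of polynomial equations in the remaining symmetry coefficients and in $a_1$, $a_2$, $c_1$. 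Solving this system with a computer algebra system, as in the proof of Theorem \ref{thm:sysP6_nc}, gives the finite list; matching it against the systems found in \cite{bobrova2022classification} and \cite{bobrova2022non} and isolating the two genuinely new ones up to $\tau$ yields \ref{eq:P3D7_4} and \ref{eq:P3D7_6}. Alternatively, one could try to recover the same list by applying the coalescence $\PIIIpr(D_6) \to \PIIIpr(D_7)$ to Proposition \ref{thm:intP3_nc} together with \cite{bobrova2022classification,bobrova2022non}, though checking surjectivity and non-collapse of that limit makes the direct computation cleaner.

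The main obstacle here is computational and bookkeeping, not conceptual: because of the $u^{-1}$ in $H$, the ansätze for $S_2$ and for the symmetry are non-commutative Laurent polynomials, so one must correctly enumerate the independent mixed-sign monomials, verify that the non-abelianized $S_1$, $S_2$ and the symmetry remain well defined (that the grading and the cyclic reduction behave as in the polynomial case), and keep the size of the final algebraic system under control. Once the Subsection \ref{Sec202}-style reductions are in place, the remaining algebra is routine.
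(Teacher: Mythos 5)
Your proposal follows exactly the scheme the paper uses (and, for this particular proposition, merely invokes without spelling out): constrain the coefficients of \eqref{eq:sysP3D7_nc} by non-abelianizing $H$ and $H^2$ as in Proposition \ref{thm:trint2} and Proposition \ref{thm:intP3_nc}a), then fix the remaining parameters by non-abelianizing the degree-$7$ symmetry, imposing preservation of $\trace S_1$, $\trace S_2$ and the compatibility conditions, and solving the resulting overdetermined algebraic system by computer algebra. Your handling of the Laurent term via $\tfrac{d}{dt}u^{-1}=-u^{-1}u'u^{-1}$ is the only genuinely new ingredient relative to the $D_6$ case, and it is correct; this matches the paper's (implicit) argument.
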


The transposition $\tau$ acts on the parameters of system \eqref{eq:sysP3D7_nc} as follows
\begin{align}
    \tau \brackets{
    a_1, \,
    a_2, \,
    c_1
    }
    &= \brackets{
    a_2, \,
    a_1, \,
    - c_1 - 2 \kappa_2
    }.
\end{align}
Systems from the list iii) are non-equivalent with respect to $\tau$. There non-equivalent systems with Okamoto integral were obtained in \cite{bobrova2022non}. One Hamiltonian system can be found in \cite{bobrova2022classification}.

The $\PIIIpr(D_7)$ system related to Hamiltonian \eqref{eq:ham3d7} has an isomonodromic Lax pair \cite{jimbo1981monodromy} of the form \eqref{eq:matABform_P3D6'_}.
Matrices $A_0 (z)$, $A_{-1} (z)$, $A_{-2} (z)$, $B_1$, and $B_0 (z)$ are equivalent to
\begin{gather}
    \notag
    \begin{aligned}
    A_0
    &= 
    \begin{pmatrix}
    0 & 0 \\[0.9mm]
    \kappa_4 \, z u^{-1} & 0
    \end{pmatrix},
    &
    A_{-1}
    &= -
    \begin{pmatrix}
    u v + \kappa_1
    & 
    u
    \\[0.9mm]
    \kappa_2 u v 
    + \kappa_3
    & 
    - u v
    \end{pmatrix},
    &
    A_{-2}
    &= 
    \begin{pmatrix}
    - \kappa_2
    & 
    1
    \\[0.9mm]
    0
    &
    0
    \end{pmatrix},
    \end{aligned}
    \\[-1mm]
    \label{eq:Laxpair_P3D7}
    \\[-1mm]
    \notag
    \begin{aligned}
    B_1
    &= 
    \begin{pmatrix}
    0 & 0 
    \\[0.9mm]
    \kappa_4 \, z & 0
    \end{pmatrix}
    ,
    &&&
    B_0
    &= z^{-1}
    \begin{pmatrix}
    \kappa_2 u 
    &
    - u
    \\[0.9mm]
    0
    & 
    0
    \end{pmatrix}
    .
    \end{aligned}
\end{gather}

One can generalize this pair to the non-commutative case for non-abelian \PPainleve-$3'(D_7)$ systems satisfying the item 3 from Assumption \ref{assumpt1} as follows.

\begin{proposition}
\label{thm:ncLax_genP3D7}
Suppose that  a non-abelian system of the $\PIIIpr(D_7)$ type admits an auxiliary autonomous system defined by the item {\rm{a)}} from Proposition {\rm\ref{thm:intP3D7_nc}}. Then it is equivalent to \eqref{eq:zerocurvcond}, where the Lax pair has the form \eqref{eq:matABform_P3D6'_} with
    \begin{gather} 
    \begin{aligned}
    A_0
    &= 
    \begin{pmatrix}
    0 & 0 \\[0.9mm]
    \kappa_4 \, z u^{-1} & 0
    \end{pmatrix},
    &
    A_{-1}
    &= -
    \begin{pmatrix}
    u v + \kappa_1
    & 
    u
    \\[0.9mm]
    \kappa_2 v u 
    + \kappa_3
    & 
    - v u
    \end{pmatrix},
    &
    A_{-2}
    &= 
    \begin{pmatrix}
    - \kappa_2
    & 
    1
    \\[0.9mm]
    0
    &
    0
    \end{pmatrix},
    \end{aligned}
    \\[2mm]
    \notag
    \begin{aligned}
    B_1
    &= 
    \begin{pmatrix}
    0 & 0 
    \\[0.9mm]
    \kappa_4 \, u^{-1} & 0
    \end{pmatrix},
    &&&
    B_{0}
    &= z^{-1}
    \begin{pmatrix}
    \kappa_2 u 
    &
    - u
    \\[0.9mm]
    0
    & 
    0
    \end{pmatrix}
    + q(z, u, v) \, \mathbf{I}
    ,
    \end{aligned}
    \\[2mm]
    \begin{aligned}
    z \, q (z, u, v)
    = - a_1 u v 
    + a_2 v u
    + c_1 u
    .
    \end{aligned}
\end{gather}
\vspace{-3mm}
\qed
\end{proposition}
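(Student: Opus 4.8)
The plan is to verify directly that the zero-curvature condition \eqref{eq:zerocurvcond} with the proposed matrices $\mathbf{A}$ and $\mathbf{B}$ reproduces exactly the non-abelian $\PIIIpr(D_7)$ system in the form \eqref{eq:sysP3D7_nc} together with the constraints $b_1=-a_1$, $b_2=-a_2$ from item a) of Proposition~\ref{thm:intP3D7_nc}. First I would substitute $\mathbf{A} = A_0 + A_{-1}\lambda^{-1} + A_{-2}\lambda^{-2}$ and $\mathbf{B} = B_1\lambda + B_0$ into $\mathbf{A}_z - \mathbf{B}_\lambda = [\mathbf{B},\mathbf{A}]$ and collect the coefficients of each power of $\lambda$. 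Since $\mathbf{A}$ has poles at $\lambda=\infty$ of order $0$ and at $\lambda=0$ of order $2$, and $\mathbf{B}$ is linear in $\lambda$ with a $\lambda^{-1}$-free part, the relevant powers are $\lambda^{1},\lambda^{0},\lambda^{-1},\lambda^{-2},\lambda^{-3}$. This yields five matrix equations: the $\lambda^{1}$ and $\lambda^{-3}$ equations are purely algebraic identities fixing the compatible structure of the leading/lowest matrices (here they hold automatically because $[B_1,A_0]$ vanishes on the nilpotent block and $[B_0,A_{-2}]$ combines with $-\partial_z B_1$-type terms appropriately), while the $\lambda^{0},\lambda^{-1},\lambda^{-2}$ equations encode the actual evolution equations for $u$ and $v$ together with the relation $z\,q(z,u,v) = -a_1 uv + a_2 vu + c_1 u$.

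The key computational step is the $\lambda^{-2}$ and $\lambda^{-1}$ levels. At order $\lambda^{-2}$ the equation reads (schematically) $\partial_z A_{-2} = [B_0,A_{-2}] + [B_1,A_{-1}]$, and since $A_{-2}$ has constant entries its left side vanishes; this determines the off-diagonal relation between $B_0$ and $A_{-1}$ and, crucially, forces the appearance of the scalar term $q(z,u,v)\,\mathbf{I}$ in $B_0$ — the diagonal part of $B_0$ beyond the displayed matrix must be exactly $q\,\mathbf{I}$ for consistency, which is why $q$ is not removable in the non-abelian setting (unlike the scalar case, where shift \eqref{BBB} is allowed). At order $\lambda^{-1}$ the equation $\partial_z A_{-1} = [B_0,A_{-1}] + [B_1,A_0]$ produces, after reading off the $(1,1)$, $(1,2)$, $(2,1)$, $(2,2)$ entries, the two scalar non-abelian ODEs for $u$ and $v$; matching these with \eqref{eq:sysP3D7_nc} gives $b_1=-a_1$, $b_2=-a_2$ and pins down $z\,q = -a_1uv + a_2vu + c_1u$. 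The order $\lambda^{0}$ equation $\partial_z A_0 = [B_0,A_0] + [B_1,A_{-1}]$ then serves as a consistency check on the $\kappa_4 z u^{-1}$ entry and the $u^{-1}$-dependence of $B_1$; here one must be careful with $\partial_z(u^{-1})$ and $\partial_t$-substitutions, using $\partial_t u^{-1} = -u^{-1}(\partial_t u)u^{-1}$.

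I expect the main obstacle to be bookkeeping rather than conceptual: the presence of $u^{-1}$ in $A_0$ and $B_1$ means one cannot simply clear denominators, and the commutators $[B_0,A_0]$ involve products like $u^{-1}\cdot(\text{stuff})$ that must be simplified using $u\,u^{-1}=u^{-1}u=1$; moreover the matrices are $2\times2$ with non-commuting entries, so one has to track word order carefully in every entry of every product. The verification that the $(2,1)$-entry at order $\lambda^{0}$ is consistent — i.e.\ that $\kappa_4\,\partial_z(z u^{-1}) = \kappa_4 z u^{-1} - \kappa_4 z u^{-1}(\partial_t u)u^{-1}$ matches the corresponding commutator term after substituting $\partial_t u = a_1 u^2 v + (2-a_1-a_2)uvu + a_2 vu^2 + \kappa_1 u + \kappa_2 u^2$ — is the delicate point, since it is where the specific coefficient structure of $P_1$ in \eqref{eq:sysP3D7_nc} is actually used. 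Once these entrywise identities are checked (most efficiently with a computer algebra system, as the authors do throughout), the proposition follows; the scalar limit $m=1$ recovers \eqref{eq:Laxpair_P3D7} since then $q$ can be absorbed by \eqref{BBB}.
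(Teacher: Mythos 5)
Your overall strategy --- substitute the ansatz into \eqref{eq:zerocurvcond}, expand in powers of $\lambda$, and match the resulting matrix identities entrywise against \eqref{eq:sysP3D7_nc} with $b_1=-a_1$, $b_2=-a_2$ --- is exactly the (unwritten) direct verification behind the paper's \emph{qed}. However, your bookkeeping of which commutators sit at which order of $\lambda$ is systematically wrong, and the verification as you describe it would not close. Since $\mathbf{B}=B_1\lambda+B_0$ and $\mathbf{A}=A_0+A_{-1}\lambda^{-1}+A_{-2}\lambda^{-2}$, the cross-term $[B_1\lambda,\,A_j\lambda^{j}]$ contributes at order $\lambda^{j+1}$, not $\lambda^{j-1}$. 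The correct level equations are
\begin{align}
\lambda^{1}:\quad [B_1,A_0]=0,
\qquad
\lambda^{0}:\quad \partial_z A_0-B_1=[B_0,A_0]+[B_1,A_{-1}],
\qquad\quad
\\
\lambda^{-1}:\quad \partial_z A_{-1}=[B_0,A_{-1}]+[B_1,A_{-2}],
\qquad
\lambda^{-2}:\quad \partial_z A_{-2}=[B_0,A_{-2}],
\end{align}
and there is no $\lambda^{-3}$ equation at all. In your sketch $[B_1,A_{-1}]$ is placed at $\lambda^{-2}$, $[B_1,A_0]$ at $\lambda^{-1}$, and the term $-\mathbf{B}_\lambda=-B_1$ is attached to a nonexistent $\lambda^{-3}$ level instead of to $\lambda^{0}$; with those right-hand sides the entrywise identities fail.

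This misplacement also shifts where the actual content sits. The $\lambda^{-2}$ relation imposes nothing: $A_{-2}$ has scalar constant entries, so $\partial_z A_{-2}=0$, $[\,q\,\mathbf{I},A_{-2}]=0$, and a one-line check gives $\bigl[z^{-1}\bigl(\begin{smallmatrix}\kappa_2 u&-u\\0&0\end{smallmatrix}\bigr),A_{-2}\bigr]=0$; in particular this level does not ``force the appearance of $q\,\mathbf{I}$''. The two evolution equations (with $u_z=z^{-1}P_1$ and $v_z=z^{-1}P_2+\kappa_4 u^{-2}$, keeping in mind the factor $f(z)=z$ of the non-autonomous $\PIIIpr$ system) are read off from the four entries of the $\lambda^{-1}$ relation, where the cross-term $[B_1,A_{-2}]=\kappa_4\bigl(\begin{smallmatrix}-u^{-1}&0\\-\kappa_2u^{-1}&u^{-1}\end{smallmatrix}\bigr)$ supplies exactly the $\kappa_4 u^{-1}$ contributions produced by differentiating $uv$, $vu$ and $\kappa_2 vu$ in $A_{-1}$; the polynomial $q$ is pinned down there and at the $\lambda^{0}$ level (which also checks the $\kappa_4 z u^{-1}$ entry of $A_0$ against $B_1$ and $[B_1,A_{-1}]$). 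Once the levels are corrected, the rest of your plan --- entrywise matching, care with $u^{-1}$ and word order, and absorbing $q$ by \eqref{BBB} in the scalar limit --- goes through and reproduces the paper's verification.
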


\subsection{Limiting transitions}
\label{sec:deg_P3}
The formulas from Appendix C.2 in \cite{bobrova2022non} lead to the following degeneracy scheme of both systems and pairs:
\begin{figure}[H]
    \centering
    \scalebox{0.93}{\tikzset{every picture/.style={line width=0.75pt}} %set default line width to 0.75pt        

\begin{tikzpicture}[x=0.75pt,y=0.75pt,yscale=-1,xscale=1]
%uncomment if require: \path (0,327); %set diagram left start at 0, and has height of 327

%Straight Lines [id:da29145665996951853] 
\draw [color={rgb, 255:red, 0; green, 0; blue, 0 }  ,draw opacity=1 ]   (53,100) -- (80.23,100) ;
\draw [shift={(82.23,100)}, rotate = 180] [color={rgb, 255:red, 0; green, 0; blue, 0 }  ,draw opacity=1 ][line width=0.75]    (10.93,-3.29) .. controls (6.95,-1.4) and (3.31,-0.3) .. (0,0) .. controls (3.31,0.3) and (6.95,1.4) .. (10.93,3.29)   ;
%Straight Lines [id:da7085137393419103] 
\draw [color={rgb, 255:red, 0; green, 0; blue, 0 }  ,draw opacity=1 ]   (53,131) -- (80.23,131) ;
\draw [shift={(82.23,131)}, rotate = 180] [color={rgb, 255:red, 0; green, 0; blue, 0 }  ,draw opacity=1 ][line width=0.75]    (10.93,-3.29) .. controls (6.95,-1.4) and (3.31,-0.3) .. (0,0) .. controls (3.31,0.3) and (6.95,1.4) .. (10.93,3.29)   ;
%Straight Lines [id:da5772441509961814] 
\draw [color={rgb, 255:red, 0; green, 0; blue, 0 }  ,draw opacity=1 ]   (162,100) -- (189.23,100) ;
\draw [shift={(191.23,100)}, rotate = 180] [color={rgb, 255:red, 0; green, 0; blue, 0 }  ,draw opacity=1 ][line width=0.75]    (10.93,-3.29) .. controls (6.95,-1.4) and (3.31,-0.3) .. (0,0) .. controls (3.31,0.3) and (6.95,1.4) .. (10.93,3.29)   ;
%Straight Lines [id:da8266344132985393] 
\draw [color={rgb, 255:red, 0; green, 0; blue, 0 }  ,draw opacity=1 ]   (162,131) -- (189.23,131) ;
\draw [shift={(191.23,131)}, rotate = 180] [color={rgb, 255:red, 0; green, 0; blue, 0 }  ,draw opacity=1 ][line width=0.75]    (10.93,-3.29) .. controls (6.95,-1.4) and (3.31,-0.3) .. (0,0) .. controls (3.31,0.3) and (6.95,1.4) .. (10.93,3.29)   ;

% Text Node
\draw (95,90.4) node [anchor=north west][inner sep=0.75pt]    {\ref{eq:P3D7_4}};
% Text Node
\draw (16,90.4) node [anchor=north west][inner sep=0.75pt]    {\ref{eq:P3_4}};
% Text Node
\draw (95,121.4) node [anchor=north west][inner sep=0.75pt]    {\ref{eq:P3D7_6}};
% Text Node
\draw (16,121.4) node [anchor=north west][inner sep=0.75pt]    {\ref{eq:P3_6}};
% Text Node
\draw (202,106.4) node [anchor=north west][inner sep=0.75pt]    {$\PIn{H}$};

\end{tikzpicture}}
    \caption{Degenerations of \ref{eq:P3_4} and \ref{eq:P3_6}}
    \label{pic:deg_P3}
\end{figure}

%\nocite{*}
\bibliographystyle{plain}
\bibliography{bib}

\begin{thebibliography}{10}

\bibitem{adler2020}
V.~E. Adler.
\newblock Painlev{\'e} type reductions for the non-{A}belian {V}olterra
  lattices.
\newblock {\em Journal of Physics A: Mathematical and Theoretical},
  54(3):\href{https://doi.org/10.1088/1751--8121/abd21f}{035204}, 2021.
\newblock \href{https://arxiv.org/abs/2010.09021}{arXiv:2010.09021}.

\bibitem{Adler_Kolesnikov_2022}
V.~E. Adler and M.~P. Kolesnikov.
\newblock Non-{A}belian {T}oda lattice and analogs of {P}ainlev\'e {III}
  equation.
\newblock {\em J. Math. Phys.},
  63:\href{https://doi.org/10.1063/5.0091939}{103504}, 2022.
\newblock \href{https://arxiv.org/abs/2203.09977}{arXiv:2203.09977}.

\bibitem{Adler_Sokolov_2020_1}
V.~E. Adler and V.~V. Sokolov.
\newblock On matrix {P}ainlev{\'e} {II} equations.
\newblock {\em Theoret. and Math. Phys.},
  207(2):\href{https://doi.org/10.4213/tmf10027}{188--201}, 2021.
\newblock \href{https://arxiv.org/abs/2012.05639}{arXiv:2012.05639}.

\bibitem{Balandin_Sokolov_1998}
S.~P. Balandin and V.~V. Sokolov.
\newblock On the {P}ainlev{\'e} test for non-{A}belian equations.
\newblock {\em Physics letters A},
  246(3-4):\href{https://www.sciencedirect.com/science/article/abs/pii/S0375960198003363?via\%3Dihub}{267--272},
  1998.

\bibitem{bobrova2022classification}
I.~Bobrova and V.~Sokolov.
\newblock Classification of {H}amiltonian non-abelian {P}ainlev\'e type
  systems.
\newblock {\em Journal of Nonlinear Mathematical Physics}, pages
  \href{https://link.springer.com/article/10.1007/s44198--022--00099--w}{1--17},
  2022.
\newblock \href{https://arxiv.org/abs/2209.00258v1}{arXiv:2209.00258}.

\bibitem{bobrova2022non}
I.~Bobrova and V.~Sokolov.
\newblock Non-abelian {P}ainlev\'e systems with generalized {O}kamoto integral.
\newblock {\em arXiv preprint
  \href{https://arxiv.org/abs/2206.10580v1}{arXiv:2206.10580}}, 2022.

\bibitem{Bobrova_Sokolov_2022}
I.~A. Bobrova and V.~V. Sokolov.
\newblock On matrix {P}ainlevé-4 equations.
\newblock {\em Nonlinearity}, 35(12):6528, nov 2022.
\newblock \href{https://arxiv.org/abs/2107.11680}{arXiv:2107.11680},
  \href{https://arxiv.org/abs/2110.12159}{arXiv:2110.12159}.

\bibitem{Rub}
I.~Yu. Gaiur and V.~N. Rubtsov.
\newblock Dualities for rational multi-particle {P}ainlev{\'e} systems:
  {S}pectral versus {R}uijsenaars.
\newblock {\em arXiv preprint
  \href{https://arxiv.org/abs/1912.12588}{arXiv:1912.12588}}, 2019.

\bibitem{garnier1912equations}
R.~Garnier.
\newblock Sur des {\'e}quations diff{\'e}rentielles du troisi{\`e}me ordre dont
  l'int{\'e}grale g{\'e}n{\'e}rale est uniforme et sur une classe
  d'{\'e}quations nouvelles d'ordre sup{\'e}rieur dont l'int{\'e}grale
  g{\'e}n{\'e}rale a ses points critiques fixes.
\newblock In {\em Annales scientifiques de l'{\'E}cole normale sup{\'e}rieure},
  volume~29, pages
  \href{http://www.numdam.org/item/?id=ASENS_1912_3_29__1_0}{1--126}, 1912.

\bibitem{jimbo1981monodromy}
M.~Jimbo and T.~Miwa.
\newblock Monodromy perserving deformation of linear ordinary differential
  equations with rational coefficients. {II}.
\newblock {\em Physica D: Nonlinear Phenomena},
  2(3):\href{https://www.sciencedirect.com/science/article/abs/pii/016727898190021X}{407--448},
  1981.

\bibitem{joshi2009linearization}
N.~Joshi, A.~V. Kitaev, and P.~A. Treharne.
\newblock On the linearization of the first and second {P}ainlev{\'e}
  equations.
\newblock {\em Journal of Physics A: Mathematical and Theoretical},
  42(5):\href{https://doi.org/10.1088/1751--8113/42/5/055208}{055208}, 2009.
\newblock \href{https://arxiv.org/abs/0806.0271v1}{arXiv:0806.0271v1}.

\bibitem{Kawakami_2015}
H.~Kawakami.
\newblock Matrix {P}ainlev{\'e} systems.
\newblock {\em Journal of Mathematical Physics},
  56(3):\href{https://doi.org/10.1063/1.4914369}{033503}, 2015.

\bibitem{kontsevich1993formal}
M.~Kontsevich.
\newblock Formal (non)-commutative symplectic geometry, {T}he {G}elfand
  {M}athematical {S}eminars, 1990--1992.
\newblock {\em Fields Institute Communications, Birkh{\"a}user Boston}, pages
  \href{http://refhub.elsevier.com/S0393--0440(21)00176--5/bibB49C690771DD4516155FC02AE516406Ds1}{173--187},
  1993.

\bibitem{okamoto1980polynomial}
K.~Okamoto.
\newblock Polynomial {H}amiltonians associated with {P}ainlev{\'e} equations,
  {I}.
\newblock {\em Proceedings of the Japan Academy, Series A, Mathematical
  Sciences}, 56(6):\href{https://doi.org/10.3792/pjaa.56.264}{264--268}, 1980.

\bibitem{Retakh_Rubtsov_2010}
V.~S. Retakh and V.~N. Rubtsov.
\newblock Noncommutative {T}oda {C}hains, {H}ankel {Q}uasideterminants and
  {P}ainlev{\'e} {II} {E}quation.
\newblock {\em Journal of Physics. A, Mathematical and Theoretical},
  43(50):\href{https://iopscience.iop.org/article/10.1088/1751--8113/43/50/505204}{505204},
  2010.
\newblock \href{https://arxiv.org/abs/1007.4168}{arXiv:1007.4168}.

\bibitem{SW2}
V.~V. Sokolov and T.~Wolf.
\newblock Non-commutative generalization of integrable quadratic {ODE} systems.
\newblock {\em Letters in Mathematical Physics},
  110(3):\href{https://link.springer.com/article/10.1007\%2Fs11005--019--01229--0}{533--553},
  2020.
\newblock \href{https://arxiv.org/abs/1807.05583}{arXiv:1807.05583}.

\bibitem{suleimanov2008quantizations}
B.~I. Suleimanov.
\newblock ``{Q}uantizations'' of the second {P}ainlev{\'e} equation and the
  problem of the equivalence of its ${L}$-${A}$ pairs.
\newblock {\em Theoretical and Mathematical Physics},
  156:\href{https://doi.org/10.1007/s11232--008--0106--8}{1280--1291}, 2008.

\end{thebibliography}

\end{document}